\documentclass[a4paper,UKenglish,cleveref, autoref, thm-restate,authorcolumns]{lipics-v2021}

\bibliographystyle{plainurl}%

\title{The Orbit Problem for Parametric Linear Dynamical Systems} %

\author{Christel Baier}{
Technische Universität Dresden, Germany	
}{christel.baier@tu-dresden.de%
}{https://orcid.org/0000-0002-5321-9343%
}{%
}

\author{Florian Funke}{
Technische Universität Dresden, Germany	
}{florian.funke@tu-dresden.de%
}{https://orcid.org/0000-0001-7301-1550%
}{%
}
\author{Simon Jantsch}{
Technische Universität Dresden, Germany	
}{simon.jantsch@tu-dresden.de%
}{https://orcid.org/0000-0003-1692-2408%
}{%
}

\author{Toghrul Karimov}{
Max Planck Institute for Software Systems, Saarland Informatics Campus, Germany
}{toghs@mpi-sws.org%
}{https://orcid.org/0000-0002-9405-2332%
}{%
}

\author{Engel Lefaucheux}{
Max Planck Institute for Software Systems, Saarland Informatics Campus, Germany
}{elefauch@mpi-sws.org%
}{https://orcid.org/0000-0003-0875-300X%
}{%
}

\author{Florian Luca}{
School of Mathematics, Wits University, Johannesburg, South Africa\and
Research Group in Algebraic Structures \& Applications, King Abdulaziz University,  Saudi Arabia\and
Max Planck Institute for Software Systems, Saarland Informatics Campus, Germany
}{Florian.Luca@wits.ac.za%
}{https://orcid.org/0000-0003-1321-4422%
}{%
}

\author{Jo\"el Ouaknine}%
{Max Planck Institute for Software Systems, Saarland Informatics Campus, Germany}%
{joel@mpi-sws.org}{https://orcid.org/0000-0003-0031-9356}{%
ERC grant AVS-ISS
(648701).
Also affiliated with Keble College, Oxford as \href{http://emmy.network/}{\texttt{emmy.network}} Fellow.}

\author{David Purser}{
Max Planck Institute for Software Systems, Saarland Informatics Campus, Germany	
}{dpurser@mpi-sws.org%
}{https://orcid.org/0000-0003-0394-1634
}{%
}

\author{Markus A.\,Whiteland}{
Max Planck Institute for Software Systems, Saarland Informatics Campus, Germany	
}{mawhit@mpi-sws.org%
}{https://orcid.org/0000-0002-6006-9902%
}{%
}

\author{James Worrell}{
Department of Computer Science, University of Oxford, UK
}{jbw@cs.ox.ac.uk%
}{https://orcid.org/0000-0001-8151-2443%
}{%
Supported by EPSRC Fellowship EP/N008197/1.
}

\hyphenation{Saarland}

\authorrunning{C. Baier et al.} %

\Copyright{Christel Baier, Florian Funke, Simon Jantsch, Engel Lefaucheux, Florian Luca, Jo\"el Ouaknine, David Purser, Markus A.\,Whiteland and James Worrell} %

\ccsdesc[500]{Theory of computation~Logic and verification}

\keywords{Orbit problem, parametric, linear dynamical systems} %

\category{} %

\relatedversiondetails{Published Version}{https://doi.org/10.4230/LIPIcs.CONCUR.2021.28}

\funding{This work was funded by DFG grant 389792660 as part of TRR~248 -- CPEC	 (see \href{https://perspicuous-computing.science}{\tt perspicuous-computing.science}), the Cluster of Excellence EXC 2050/1 (CeTI, project ID 390696704, as part of Germany's Excellence Strategy), DFG-projects BA-1679/11-1 and BA-1679/12-1, and the Research Training Group QuantLA (GRK 1763).}

\nolinenumbers %

\hideLIPIcs  %

\EventEditors{Serge Haddad and Daniele Varacca}
\EventNoEds{2}
\EventLongTitle{32nd International Conference on Concurrency Theory (CONCUR 2021)}
\EventShortTitle{CONCUR 2021}
\EventAcronym{CONCUR}
\EventYear{2021}
\EventDate{August 23--27, 2021}
\EventLocation{Virtual Conference}
\EventLogo{}
\SeriesVolume{203}
\ArticleNo{28}

\usepackage{thmtools}
\usepackage[utf8]{inputenc}

\usepackage{graphicx}
\usepackage[full,bold]{complexity}

\usepackage{hyperref}

\newcommand{\naturals}{\mathbb{N}}

\newcommand{\reals}{\mathbb{R}}
\newcommand{\rationals}{\mathbb{Q}}
\newcommand{\algebraics}{\overline{\mathbb{Q}}}
\newcommand{\algebraicfunctions}{\mathbb{K}}

\renewcommand{\M}{\mathcal{M}}

\newcommand{\abr}[1]{\ensuremath{\langle #1 \rangle}}

\theoremstyle{remark}

\usepackage{etoolbox}
\AtBeginEnvironment{proof}{\setcounter{case}{0}}
\AtBeginEnvironment{claimproof}{\setcounter{case}{0}}

\usepackage{xcolor}

\def\({\left(}
\def\){\right)}
\def\<{\langle}
\def\>{\rangle}

\def\N{{\mathbb N}}
\def\Z{{\mathbb Z}}

\def\Q{{\mathbb Q}}
\def\R{{\mathbb R}}
\def\C{{\mathbb C}}

\newcommand{\ig}{\lambda}
\newcommand{\tg}{\gamma}
\newcommand{\eu}{\mathrm{e}}
\newcommand{\iu}{\mathrm{i}}

\newcommand{\source}{u}
\newcommand{\target}{v}
\newcommand{\param}{x}
\newcommand{\point}{s}
\newcommand{\exceptionalpoints}{\mathcal{E}}

\DeclareMathOperator{\rank}{rank}
\DeclareMathOperator{\ord}{ord}
\renewcommand{\Re}{\mathrm{Re}}
\renewcommand{\Im}{\mathrm{Im}}

\DeclareMathOperator\Log{Log}
\DeclareMathOperator{\lcm}{lcm}

\usepackage{todonotes}

\theoremstyle{definition}
\newtheorem{problem}[theorem]{Problem}

\begin{document}

\maketitle

\begin{abstract}
We study a parametric version of the Kannan-Lipton Orbit Problem for linear dynamical systems. We show decidability in the case of one parameter and Skolem-hardness with two or more parameters. 

More precisely, consider a $d$-dimensional square matrix $M$ whose entries are algebraic functions in one or more real variables. Given initial and target vectors $\source,\target\in \rationals^d$, the parametric point-to-point orbit problem asks whether there exist values of the parameters giving rise to a concrete matrix $N \in \reals{}^{d\times d}$, and a positive integer $n\in \naturals$, such that $N^n\source = \target$. 

We show decidability for the case in which $M$ depends only upon a single
parameter, and we exhibit a reduction from the well-known Skolem
Problem for linear recurrence sequences,
suggesting intractability in the case of two or more parameters.
\end{abstract}

\section{Introduction}

The \emph{Orbit Problem} for linear dynamical systems asks to decide, given a square matrix $M\in \rationals^{d\times d}$ and two vectors $\source,\target\in \rationals^d$, whether there exists a natural number $n$ such that $M^n\source=\target$. The problem was shown decidable (in polynomial time) by Kannan and Lipton~\cite{KannanL86} over ten years after Harrison first raised the question of decidability~\cite{Harrison69}.  The current paper is concerned with a generalisation of the Orbit Problem to \emph{parametric} linear dynamical systems.  In general, parametric models address a major drawback in quantitative verification, namely the unrealistic assumption that quantitative data in models are known \emph{a priori} and can be specified exactly.  In applications of linear dynamical systems to automated verification, parameters are used to model partially specified systems (e.g., a faulty component with an unknown failure rate, or when transition probabilities are only known up to some bounded precision) as well as to model the unknown environment of a system.  Interval Markov chains can also be considered as a type of parametric linear dynamical system.

\begin{problem}[Parametric Orbit Problem]\label{problem:main}  Given a
$(d\times d)$-matrix $M$, initial and target
vectors $\source,\target$, whose entries are
real algebraic functions in $\ell$ common real variables
$X = (\param_1,...,\param_\ell)$,  does there exist
$\point\in\mathbb{R}^\ell$, i.e., values of the parameters
giving rise to a concrete matrix, initial and target
$M(\point) \in \reals{}^{d\times d},\source(\point),\source(\point)\in \reals{}^{d}$, and a positive integer
$n\in \naturals$, such that $M(\point)^n \source(\point) = \target(\point)$?
\end{problem}
We prove two main results in this paper. In the case of a single parameter we show that the Parametric Orbit Problem is decidable. On the other hand, we show that the Parametric Orbit Problem is at least as hard as the
Skolem Problem---a well-known decision problem for linear recurrence
sequences, whose decidability has remained open for many decades. Our reduction establishes intractability in the case of two or more
parameters.

Thus our main decidability result is as follows:

\begin{restatable}{theorem}{restatethmmain}\label{thm:forplds}
  \cref{problem:main} is decidable when there is a single parameter (i.e., $\ell = 1$).
\end{restatable}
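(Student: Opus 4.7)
The plan is to reduce the parametric problem, via a symbolic spectral analysis, to finitely many instances of either the classical Kannan--Lipton Orbit Problem or a decidable first-order sentence over the reals. I would work over the field $\algebraicfunctions$ of real algebraic functions of $\param$, symbolically computing the characteristic polynomial of $M(\param)$ together with its roots $\lambda_1(\param),\dots,\lambda_d(\param)$ and a spectral decomposition that is valid off an effectively computable finite exceptional set $\exceptionalpoints\subset\reals$. This set collects the real $\param$-values at which the entries of $M,\source,\target$ are undefined, eigenvalues collide, or spectral projectors degenerate. For each $\point\in\exceptionalpoints$, the concrete instance $(M(\point),\source(\point),\target(\point))$ can be decided directly by Kannan--Lipton, leaving only the generic case $\param\in\reals\setminus\exceptionalpoints$.

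On $\reals\setminus\exceptionalpoints$, each coordinate of $M(\param)^n\source(\param)-\target(\param)$ becomes an exponential polynomial in $n$ with coefficients in $\algebraicfunctions$,
\begin{equation*}
    f_j(\param,n) \;=\; \sum_{i=1}^{d} c_{i,j}(\param)\,\lambda_i(\param)^{n} \;-\; \target_j(\param), \qquad j=1,\dots,d,
\end{equation*}
and the task is to find $(\param,n)$ zeroing all $f_j$ simultaneously. For each fixed $n$, the system is algebraic in $\param$, so its real zero set is either a whole connected component of $\reals\setminus\exceptionalpoints$ (\emph{identity case}) or a finite effectively describable set (\emph{sporadic case}). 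The search therefore splits into these two sub-problems.

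For the identity case I would determine the set of $n$ for which all $f_j(\cdot,n)$ vanish identically as elements of $\algebraicfunctions$; this is governed by the finitely generated lattice of multiplicative relations among $\lambda_1(\param),\dots,\lambda_d(\param)$ in $\algebraicfunctions^{\times}$, which can be computed effectively by a function-field analogue of classical results on multiplicative dependence. The outcome is a finite union of arithmetic progressions of $n$, each of which is easy to test. For the sporadic case, where the solution locus in $\param$ is finite for each $n$, the plan is to combine Baker-style lower bounds for linear forms in logarithms with the piecewise monotonicity of the archimedean quantities $|\lambda_i(\param)|$ on $\reals\setminus\exceptionalpoints$ to produce an effective threshold $N_0$ beyond which no sporadic solution exists. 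Each of the finitely many $n\leq N_0$ is then discharged by Tarski--Seidenberg applied to the obvious first-order sentence in $(\reals,+,\cdot,0,1)$.

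The principal obstacle is the sporadic case: here $\param$ and $n$ are genuinely intertwined, and classical Skolem-type tools presuppose a fixed eigenvalue spectrum. The single-parameter hypothesis is what keeps the archimedean geometry manageable---the loci $\{\param : |\lambda_i(\param)|=|\lambda_j(\param)|\}$ reduce to finitely many real points, so the dominance ordering of the $\lambda_i$ is piecewise constant on $\reals\setminus\exceptionalpoints$. With two or more parameters those loci become positive-dimensional and one loses exactly the slack needed to bound $n$, which is consistent with the Skolem-hardness that the authors establish in the multi-parameter setting.
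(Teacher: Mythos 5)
Your overall architecture -- symbolic Jordan/spectral decomposition over the function field, a finite exceptional set $\exceptionalpoints$ discharged by Kannan--Lipton, and a split between identically-vanishing $n$ (governed by the lattice of multiplicative relations among the $\lambda_i$) and a ``sporadic'' regime to be bounded -- matches the paper's skeleton. But the proposal has a genuine gap exactly where the paper's real work lies: the claim that Baker-style lower bounds for linear forms in logarithms, combined with piecewise monotonicity of $|\lambda_i(\param)|$, yield an effective threshold $N_0$ beyond which no sporadic solution exists. Baker's theorem bounds $|\Lambda|$ from below in terms of the heights and degrees of \emph{fixed} algebraic numbers; here the numbers are $\lambda_i(\point)$ for an algebraic point $\point$ that may change with $n$, and $h(\lambda_i(\point))=\Theta(h(\point))$ is unbounded as $\point$ varies, so the Baker bound degenerates and gives no uniform $N_0$. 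The archimedean dominance ordering being piecewise constant in $\param$ does not help, because the danger is precisely that for each large $n$ a fresh algebraic $\point_n$ of large height and degree satisfies $\lambda_i(\point_n)^n=\gamma_i(\point_n)$. Ruling this out when $\rank\langle\lambda_1,\dots,\lambda_t\rangle\ge 2$ is an unlikely-intersection statement -- a curve in $(\mathbb{C}^*)^{2t}$ meeting the subgroups $\{\alpha_i^n=\beta_i\}$ of codimension $\ge 2$ in only finitely many points -- and the paper has to invoke the effective theorems of Bombieri, Masser, and Zannier for it; no Baker-type estimate substitutes for this.

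Moreover, the conclusion you are aiming for is false in the rank-one case. Take $\lambda$ non-constant, complex, of modulus identically $1$ and not a root of unity, with $\gamma$ also of modulus $1$ (this arises from conjugate eigenvalue pairs after the rank-one reduction). For each fixed $n$ the solution set in $\point$ is finite, so this sits in your ``sporadic'' regime, yet the paper shows the equation $\lambda(\point)^n=\gamma(\point)$ admits a solution for \emph{every} $n\ge n_0$: the arc traced by $\lambda^n$ winds around the circle and must cross the arc of $\gamma$ by the intermediate value theorem. So no threshold $N_0$ of the kind you posit exists, and decidability there requires the paper's separate real/complex arc-crossing arguments rather than a bound on $n$. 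Two smaller points: your exponential-polynomial form omits the $\binom{n}{k}\lambda^{n-k}$ terms arising from Jordan blocks of size at least two (that case is actually the easy one, dispatched via Weil-height comparisons bounding $n$); and after dividing out multiplicative dependencies to reach an independent system you must still separate genuine solutions of the original system from spurious ones of the reduced system, which the paper handles explicitly.
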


\Cref{thm:forplds} concerns a reachability problem in which the parameters are existentially quantified.  It would be straightforward to adapt our methods to allow additional constraints on the parameter, e.g., requiring that $s$ lie in a certain specified interval.  In terms of verification, a negative answer to an instance of the above reachability problem could be seen as establishing a form of robust safety, i.e., an `error state' is not reachable regardless of the value of the unknown parameter.

The proof of \Cref{thm:forplds} follows a case distinction based on properties of the eigenvectors of the matrix $M$ (whose entries are functions) and the shape of the Jordan normal form $J$ of $M$. Our theorem assumes the entries of the matrix, initial and target vectors are real algebraic functions---in particular encompassing polynomial and rational functions.  Note that even if we were to restrict the entries of $M$ to be polynomials in the parameters, we would still require (complex) algebraic functions in the Jordan normal form.  We assume a suitable effective representation of algebraic functions that supports evaluation at algebraic points, computing the range and zeros of the functions, arithmetic operations, and extracting roots of polynomials whose coefficients are algebraic functions.

The most challenging cases arise when $J$ is diagonal.  In this situation
we can reformulate the problem as follows: given algebraic functions
$\ig_i(\param), \tg_i(\param)$ for $1\leq i\leq t$, does there exist
$(n,\point)\in {\mathbb N}\times \mathbb{R}$ such that
\begin{equation}\label{eq:algfunctions}
	\ig_i^n(\point)=\tg_i(\point)\qquad {\text{\rm for~all}}\qquad i=1,\ldots,t?
\end{equation}

A further key distinction in analysing the problem in
\cref{eq:algfunctions} involves the rank of the
multiplicative group generated by the functions $\ig_1,\ldots,\ig_t$.
To handle the case that the group has rank at least two, a central
role is played by the results of Bombieri, Masser, and Zannier (see
\cite[Theorem 2]{BMZ1} and \cite{BMZ2}) concerning the intersection of
a curve in $\mathbb{C}^m$, with algebraic subgroups of
$(\mathbb{C}^*)^m$ of dimension at most $m-2$.  To apply these results
we view the problem in \cref{eq:algfunctions} geometrically
in terms of whether a curve
\[ C=
  \{(\lambda_1(s),\ldots,\lambda_t(s),\gamma_1(s),\ldots,\gamma_t(s))
  : s \in \mathbb{R} \} \subseteq \mathbb{C}^{2t}\] intersects the multiplicative group
\begin{equation*}
 G_n = \{(\alpha_1,\ldots,\alpha_t,\beta_1,\ldots,\beta_t) \in
  (\mathbb{C}^*)^{2t} \colon 
\alpha_1^n=\beta_1 \wedge \cdots \wedge \alpha_t^n=\beta_t \}
\end{equation*}
for some $n\in \mathbb{N}$.  The above-mentioned results of
Bombieri, Masser,  and Zannier can be used to derive an upper bound on
$n$ such that $C\cap G_n$ is non-empty under certain conditions on the
set of multiplicative relations holding among
$\lambda_1,\ldots,\lambda_t$ and $\gamma_1,\ldots,\gamma_t$.

We provide specialised arguments for a number of cases for which the
results of Bombieri, Masser, and Zannier cannot be applied.  In
particular, for the case that the multiplicative group generated by
the functions $\ig_1,\ldots,\ig_t$ has rank one, we provide in
Section~\ref{sec:rank1case} a direct elementary method to find
solutions of \cref{eq:algfunctions}.

Another main case in the proof is when matrix
$J$ has a Jordan block of size at least~2, i.e., it is not diagonal
(see Section~\ref{jordancell:dim2}).
The key
instrument here is the notion of the Weil height of an algebraic number
together with bounds that relate the height of a number to the height
of its image under an algebraic function.  Using these bounds we
obtain an upper bound on the $n\in\mathbb{N}$ such that the equation
$M(\point)^n \source(\point) = \target(\point)$ admits
a solution $s\in\mathbb{R}$.

\subsubsection*{Related work}
\label{sec:related}
Reachability problems in (unparametrized) linear dynamical systems
have a rich history. Answering a question by
Harrison~\cite{Harrison69}, Kannan and Lipton~\cite{KannanL86} showed
that the point-to-point reachability problem in linear dynamical
systems is decidable in {\sf PTIME}. They also noticed that the
problem becomes significantly harder if the target is a linear
subspace---a problem that still remains open, but has been solved for
low-dimensional instances \cite{ChonevOW13}. This was extended to
polytope targets in~\cite{ChonevOW15}, and later further generalized to
polytope initial sets in~\cite{AlmagorOW17}. Orbit problems have recently
been studied in the setting of rounding functions
\cite{Baieretal20}.
In our analysis we will make use of a version of the point-to-point reachability problem that allows matrix entries to be algebraic numbers.
In this case the eigenvalues are again algebraic, and decidability
follows by exactly the same argument as the rational case (although the algorithm is no longer in {\sf PTIME}), and is also a special case of the main result of \cite{cai2000complexity}.

If the parametric matrix $M$ is the transition matrix of a parametric
Markov chain (pMC) \cite{JonssonL1991, GivanLD2000, KozineU2002}, then
our approach combines \emph{parameter synthesis} with the
\emph{distribution transformer semantics}. Parameter synthesis on pMCs
asks whether some (or every) parameter setting results in a Markov
chain satisfying a given specification, expressed, e.g., in PCTL
\cite{JungesAHJK19}. An important problem in this direction is to find
parameter settings with prescribed properties
\cite{LanotteMT2007,CeskaDKP14,CubuktepeJJKT18}, which has also been
studied in the context of model repair
\cite{BartocciGKRS11,PathakAJTK15}.  While all previous references use
the standard path-based semantics of Markov chains, the distribution
transformer semantics \cite{KwonA04,KorthikantiVAK10,ChadhaKVAK11}
studies the transition behaviour on probability distributions. It has, to the best of our knowledge,
never been considered for
parametric Markov chains. Our approach implicitly does this in that it
performs parameter synthesis for a reachability property in the
distribution transformer semantics.

The Skolem Problem asks whether a linear recurrence sequence $(u_n)_n$ has a zero term ($n$ such that $u_n = 0$). Phrased in terms of linear dynamical systems, the Skolem Problem asks whether a $d$-dimensional linear dynamical system hits a $(d-1)$-dimensional hyperplane, and decidability in this setting is known for matrices of dimension
at most four \cite{MignotteST84,Vereshchagin85}. A continuous version
of the Skolem Problem was examined in \cite{ChonevOW16b}. With the
longstanding intractability of the Skolem Problem in general, it has
recently been used as a reference point for other decision problems
\cite{AkshayATOW15,MajumdarMS20,PiribauerB20}.

Ostafe and Shparlinski~\cite{OS20} consider the Skolem
Problem for parametric families of simple linear recurrences.
More precisely, they consider linear recurrences of the form
$u_n = a_1(\param)\lambda_1(\param)^n + \cdots + a_k(\param)\lambda_k^n(\param)$ for
rational functions $a_1,\ldots,a_k,\lambda_1,\ldots,\lambda_k$
with coefficients in a number field.  They show that the existence of
a zero of the sequence $(u_n)$ can be decided for all values of the
parameter outside an exceptional set of numbers of bounded height
(note that any value of the parameter such that the sequence $u_n$ has a
zero is necessarily algebraic).

\section{Preliminaries}
\label{sec:prelim}

We denote by $\mathbb{R},\C,\Q,\algebraics$ the real, complex, rational, and algebraic numbers respectively.
For a field $K$ and a finite set $X$ of variables, $K[X]$ and $K(X)$ respectively denote the ring of polynomials and field of rational functions with coefficients in $K$. 
A meromorphic function%
\footnote{A ratio of two holomorphic functions, which are complex-valued functions complex differentiable in some neighbourhood of every point of the domain.} $f\colon U \to \C$ where $U$ is some open subset $U \subseteq \C^{\ell}$ is called algebraic, if
$P(x_1,\ldots,x_{\ell},f(x_1,\ldots,x_{\ell})) = 0$ for some
$P \in \Q[x_1,\ldots,x_{\ell},y]$. We say that
$f$ is \emph{real algebraic} if it is real-valued on real inputs.

\begin{definition}\label{def:PMC}
	A \emph{parametric Linear Dynamical System} (pLDS) of \emph{dimension} $d\in\N$ 
	is a tuple $\M = (X, M, \source)$, where 
		 $X$ is a finite set of \emph{parameters},
		 $M$ is the \emph{parametrized matrix} whose entries are real algebraic functions in parameters $X$ and
		 $\source$ is the parametric initial distribution whose entries are also real algebraic functions in parameters $X$.
\end{definition}
Given $\point\in \R^{|X|}$, we denote by $M(\point)$ the matrix $\R{}^{d\times d}$ obtained 
from $M$ by evaluating each function in $M$ at $\point$, provided that 
this value is well-defined. Likewise we obtain $\source(\point)$. 
We call $(M(\point),\source(\point))$ the induced linear dynamical system (LDS). The \emph{orbit} of the LDS
$(M(\point),\source(\point))$ is the
set of vectors obtained by repeatedly applying the matrix $M(\point)$ to $\source(\point)$: $\{\source(\point),M(\point)\source(\point),M(\point)^2\source(\point),\dots\}$. 
The LDS $(M(\point),\source(\point))$ \emph{reaches} a target $\target(\point)$ if $\target(\point)$ is in the orbit, \emph{i.e.} there exists $n\in\N$ such that $M(\point)^n\source(\point) = \target(\point)$.

We remark that $M(\point)$ is undefined whenever any of the entries of $M$ is undefined. 
For any fixed $n$, the elements of $M^n$ are polynomials in the entries of
$M$, and consequently,
$M^n$ is defined on the same domain as $M$.

Unless we state that $M$ is a constant function, all matrices should be seen as functions, with parameters $\param_1,\dots,\param_{|X|}$, or simply $\param$ if there is a single parameter. The notation $\point$ is used for a specific instantiation of $\param$. We often omit $\param$ when referring to a function, either   the function is declared constant or when we do not need to make reference to its parameters.

\subsection{Computation with algebraic numbers}

Throughout this note we employ notions from (computational) algebraic geometry and algebraic number theory.
Our approach relies on transforming the matrices we consider in Jordan normal form. Doing so, the coefficients of the computed matrix are not rational anymore but algebraic. Next we recall the necessary basics and refer to \cite{CohenBook,Waldschmidt2000Diophantine} for more background
on notions utilised throughout the text.

The algebraic numbers $\algebraics{}$ are the complex numbers which can be
defined as some root of a univariate polynomial in $\Q[\param]$. In
particular, the rational numbers are algebraic numbers. For every $\alpha \in \algebraics$ there exists a unique monic univariate polynomial $P_{\alpha} \in \Q[\param]$
of minimum degree for which $P_{\alpha}(\alpha) = 0$. We call $P_{\alpha}$
the \emph{minimal polynomial} of $\alpha$. An algebraic number $\alpha$ is represented
as a tuple $(P_{\alpha},\alpha^*,\varepsilon)$, where $\alpha^*=a_1+a_2i$, $a_1,a_2\in\mathbb{Q}$, is an
approximation of $\alpha$, and $\varepsilon \in \mathbb{Q}$ is
sufficiently small such that $\alpha$ is the unique root of $P_{\alpha}$ within
distance $\varepsilon$ of $\alpha^*$ (such $\varepsilon$ can be computed by the root-separation bound, due to Mignotte \cite{Mig82}).  This is referred to as the \emph{standard}
or \emph{canonical representation} of an algebraic number.
Given canonical representations of two algebraic numbers $\alpha$ and
$\beta$, one can compute canonical representations of $\alpha+\beta$,
$\alpha\beta$, and $\alpha/\beta$, all in polynomial time.

\begin{definition}[Weil's absolute logarithmic height]
Given an algebraic number $\alpha$ with minimal polynomial $p_{\alpha}$ of degree $d$, consider the polynomial $a_d p_{\alpha}$ with $a_d \in \N$ minimal such that for $a_d p_\alpha = a_d \param^d+\cdots+a_1 \param + a_0$ we have $a_i \in \Z$ and $\gcd(a_1,\ldots,a_d) = 1$. Write $a_d p_{\alpha} = a_d(\param-\alpha^{(1)})\cdots (\param-\alpha^{(d)})$,
 where $\alpha^{(1)}=\alpha$. Define the \emph{(Weil) height} $h(\alpha)$ of $\alpha \neq 0$ by $
 h(\alpha)=\frac{1}{d}\Big(\log a_d + \sum_{i=1}^d \log(\max\{|\alpha^{(i)}|,1\})\Big)$. By convention $h(0) = 0$.
\end{definition}
For all $\alpha,\beta \in \overline{\Q}$ and
$n \in \Z$ we have from \cite[Chapt.~3]{Waldschmidt2000Diophantine}:
 \begin{enumerate}
 \item  $h(\alpha + \beta)\le h(\alpha)+h(\beta)+\log 2;$
\item  $h(\alpha \beta)\le h(\alpha)+h(\beta)$;
\item  $h(\alpha^n)=|n|\cdot h(\alpha)$.
\end{enumerate}
In addition, for $\alpha \neq 0$ we have
$h(\alpha)=0$ if and only if $\alpha$ is a root of unity ($\alpha$ is a root of unity if there exists $k\in\naturals{}$, $k\geq 1$, such that $\alpha^k = 1$). Notice that the set of algebraic numbers with both height and degree bounded is always finite.

\subsection{Univariate algebraic functions}
\label{sec:AF}

Let $K$ be an algebraic extension of a field $L$ such that the characteristic polynomial of $M \in L^{d\times d}$ splits into linear factors over $K$. It is well-known that we can factor $M$ over
$K$ as
$M=C^{-1}JC$ for some
invertible matrix $C \in K^{d\times d}$ and block diagonal Jordan matrix
$J=\langle{J_{1},\ldots,J_{N}}\rangle \in K^{d\times d}$. Each block $J_{i}$ associated with some
eigenvalue $\lambda_i$, and $J_{i}^{n}$, have the following \emph{Jordan block} form for some $k\geq 1$:
\begin{equation*}
J_{i} = \left(\begin{smallmatrix}
\lambda	&&	1		&&	0		&&	\cdots	&&	0		\\
0		&&	\lambda	&&	1		&&	\cdots	&&	0		\\
\vdots	&&	\vdots	&&	\vdots	&&	\ddots	&&	\vdots	\\
0		&&	0		&&	0		&&	\cdots	&&	1		\\
0		&&	0		&&	0		&&	\cdots	&&	\lambda	\\
\end{smallmatrix}\right) \qquad \text{ and } \qquad
J_{i}^{n}=\left(\begin{smallmatrix}
\lambda^{n}	&&	n\lambda^{n-1}	&& \binom{n}{2}\lambda^{n-2}	&&
\cdots		&&	\binom{n}{k-1}\lambda^{n-k+1}				\\
0			&&	\lambda^{n}		&&	n\lambda^{n-1}				&&
\cdots		&&	\binom{n}{k-2}\lambda^{n-k+2}				\\
\vdots	&&	\vdots	&&	\vdots	&&	\ddots	&&	\vdots			\\
0		&&	0		&&	0		&&	\cdots	&&	n\lambda^{n-1}	\\
0		&&	0		&&	0		&&	\cdots	&&	\lambda^{n}		\\
\end{smallmatrix}\right).
\end{equation*}
Furthermore, each eigenvalue $\ig$ of $M$
appears in at least one of the Jordan blocks.

In case $L = \Q$, we may take $K$ to be an algebraic number field. In particular, the eigenvalues of a rational matrix are algebraic. %
However, in this paper, the entries of our matrix are \emph{algebraic functions}, and so too are the entries in Jordan normal form.
We recall some basics of algebraic geometry and univariate algebraic functions required for the analysis in the single-parameter setting, and refer the reader to \cite{BasuBook,CoxLOS1991ideals} for further information.

Let $U \subseteq \mathbb{C}$ be a connected open set and
$f:U\rightarrow \mathbb{C}$ a meromorphic function. %
We say that $f$
is \emph{algebraic over $\mathbb{Q}(x)$} if there is a polynomial
$P(x,y) \in \mathbb{Q}[x,y]$ such that $P(x,f(x))=0$ for all $x \in U$
where $f$ is defined.  Notice that a univariate algebraic function has finitely
many zeros and poles, and furthermore, these zeros and poles (or zeros at $\infty$) are
algebraic. Indeed, let $P(x,y) = a_d(x)y^d + \dots + a_1(x)y+ a_0(x)$,
with $a_i \in \Q[x]$, be irreducible.  Assuming that $f$ vanishes at $s$, we have that
$a_0(s) = 0$. There are only finitely many $s$ for which this can
occur. Furthermore, the function $1/f$ is meromorphic (on a possibly different domain $U$) and satisfies
$y^{d}P(x,1/y) = a_d(x) + \ldots + a_1(x)y^{d-1} + a_0(x) y^{d}$. We conclude that a pole of $f$ (a zero of $1/f$) is a zero of $a_d(x)$.

Let $P(x,y) = \sum_{i=0}^d a_i(x)y^i \in \mathbb{Q}(x)[y]$. 
We say that $c \in
\mathbb{C}$ is a \emph{critical point} of $P$ if either $a_d(c)=0$ or 
the resultant $\mathrm{Res}_y(P,\frac{\partial P}{\partial y})$ 
vanishes at $c$.  If $P$ is irreducible, then it has only finitely many
critical points since the resultant is a univariate non-zero polynomial.

Let $M$ be a $(d\times d)$-matrix with univariate real algebraic functions as entries. Let its characteristic polynomial be $P(x,y):= \det(Iy-M)$ and write $c_1,\ldots,c_m \in \mathbb{C}$ for the
critical points of the irreducible factors of $P$. Then
there exist a connected open subset $U\subseteq \mathbb{C}$ such that
$\mathbb{R}\setminus\{c_1,\ldots,c_m\} \subseteq U$, and $d$
holomorphic functions
$\lambda_1,\ldots,\lambda_d : U \rightarrow \mathbb{C}$ (not
necessarily distinct) such that the characteristic polynomial
$P$ of $M$ factors as
\[ P(x,y) = (y-\lambda_1(x))(y-\lambda_2(x)) \cdots
  (y-\lambda_d(x)) \] for all points $x \in U$
  (see, e.g., \cite[Chapt.~1, Thm.~8.9]{Foster81compact}).

Let us fix a $(d\times d)$-matrix $M$ and vectors $u$, $v$ with univariate real algebraic entries. We thus have $M \in L^{d\times d}$,
$u,v \in L^d$, for some finite field extension $L$ of $\Q(x)$. Let $\mathbb{K}$ be fixed to an algebraic extension of $L$ such that the characteristic polynomial of $M$ splits into linear factors over the field
$\mathbb{K}$. 
Then, over the
field $\mathbb{K}$ we have the factorisation $M=C^{-1}JC$ with $J$
in Jordan form.  The eigenvalues of $M$, denoted $\ig_1, \ldots, \ig_k$, appear in the diagonal of $J$. Let the set of exceptional points, denoted $\exceptionalpoints$, consist of the finite set $\{c_1,\dots,c_m\}$, the poles of the entries of $M,C,C^{-1},J,u$ and $v$, and points where $\det C(s) = 0$ (i.e., $C(s)$ is singular).

Consider now a non-constant univariate algebraic function $\ig$ not necessarily real.
In our analysis, we shall need to bound the height $h(\ig(s))$ in terms of $h(s)$,
as long as $s$ is not a zero or a pole of $\ig$. The following lemma shows $h(\ig(s)) = \Theta(h(s))$:
\begin{restatable}{lemma}{boundhlsbyhs}\label{lem:1}
Let $\ig$ be a non constant algebraic function in $\mathbb{K}$. Then there exist effective constants $c_1,c_2,c_3,c_4 > 0$ such that for algebraic $s$ not a zero or pole of $\ig$ we have\\
$c_1h(s) - c_2 \le  h(\ig(s)) \le  c_3h(s) + c_4.$
\end{restatable}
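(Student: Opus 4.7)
The plan is to exploit a defining polynomial of $\ig$ together with the height rules recalled in Section~\ref{sec:prelim}. Since $\ig\in\K$ is algebraic over $\Q(x)$, there exists a nonzero irreducible polynomial $P(x,y)=\sum_{i=0}^{d}a_i(x)\,y^i\in\Q[x,y]$ with $P(x,\ig(x))=0$ on the domain of $\ig$. Non-constancy of $\ig$ forces $\deg_y P>0$ (since $\ig$ is a genuine root) and $\deg_x P>0$ (otherwise $P\in\Q[y]$ has only finitely many roots, and $\ig$ would be constant). All constants appearing below depend only on $P$ and are effectively computable from its coefficients.

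For the upper bound, fix an algebraic $s$ that is neither a zero nor a pole of $\ig$. Then $\ig(s)$ is a root of $P(s,y)=\sum_i a_i(s)\,y^i\in\overline{\Q}[y]$, so the standard bound on the height of a root in terms of the projective height of the coefficient vector (see \cite{Waldschmidt2000Diophantine}) gives $h(\ig(s))\le \log 2+\sum_i h(a_i(s))$. Iterating the three height identities recalled after the definition of $h$ on each $a_i(s)$ yields $h(a_i(s))\le(\deg a_i)\,h(s)+C_i$, with $C_i$ depending only on the rational coefficients of $a_i$. Summing these gives $h(\ig(s))\le c_3\,h(s)+c_4$.

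For the lower bound, reorganise $P$ as a polynomial in $x$ over $\Q[y]$: $P(x,y)=\sum_{j=0}^{e}b_j(y)\,x^j$ with $e=\deg_x P\ge 1$. Then $s$ is a root of the polynomial $Q(x):=\sum_j b_j(\ig(s))\,x^j\in\overline{\Q}[x]$, and the symmetric argument gives $h(s)\le c_3'\,h(\ig(s))+c_4'$, hence $h(\ig(s))\ge (1/c_3')\,h(s)-c_4'/c_3'$, which is the claimed inequality with $c_1=1/c_3'$ and $c_2=c_4'/c_3'$. The polynomial $Q$ can vanish identically only when $\ig(s)$ is a common root of $b_0,\ldots,b_e$, a finite set of algebraic numbers; for each such exceptional value $\beta$, the constraint $P(s,\beta)=0$ (a nonzero polynomial equation of degree $\ge 1$ in $s$, by $\deg_x P\ge 1$) admits only finitely many solutions, so the set of truly exceptional $s$ is finite. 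Enlarging $c_2$ and $c_4$ to absorb the (bounded) heights of these finitely many $s$ gives both inequalities uniformly.

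The main obstacle is a careful treatment of degeneracies: the leading coefficient $a_d(s)$ of $P(s,y)$ or $b_e(\ig(s))$ of $Q(x)$ may vanish, causing the relevant polynomial to drop in degree or become zero. Working throughout with the projective height of the coefficient vector absorbs the leading-coefficient issue into a single uniform bound, while the finitely many values of $s$ for which $Q$ degenerates entirely are handled by enlarging the additive constants as above. Effectivity is immediate from effective access to $P$.
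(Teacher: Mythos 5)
Your proof is correct in substance, but it takes a genuinely different route from the paper. The paper gets both inequalities in one stroke from Habegger's quasi-equivalence of heights \cite{habegger2017quasi}: for an irreducible $P$ with $d_x,d_y\geq 1$ and $P(\alpha,\beta)=0$ one has $\bigl|h(\alpha)/d_y-h(\beta)/d_x\bigr|\leq C_P\sqrt{\max\{h(\alpha)/d_y,h(\beta)/d_x\}}$, from which explicit $c_1,\ldots,c_4$ drop out. You instead apply the elementary ``height of a root versus heights of the coefficients'' bound twice, viewing $P$ first as a polynomial in $y$ over $\Q[x]$ and then as a polynomial in $x$ over $\Q[y]$. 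Your argument is more self-contained and needs only the basic height identities, at the cost of cruder constants (Habegger's bound shows $h(\ig(s))\sim (d_x/d_y)h(s)$, so the paper's $c_1$ and $c_3$ differ only by a factor of $4$, whereas yours differ by roughly $\deg_x P\cdot\deg_y P$); for the purposes of this lemma either is fine.

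One local slip: in your treatment of the case where $Q(x)=\sum_j b_j(\ig(s))x^j$ vanishes identically, you assert that for an exceptional value $\beta$ the equation $P(s,\beta)=0$ is ``a nonzero polynomial equation of degree $\geq 1$ in $s$.'' But $\beta$ is by hypothesis a common root of $b_0,\ldots,b_e$, so $P(x,\beta)=\sum_j b_j(\beta)x^j$ is the \emph{zero} polynomial, and this finiteness argument collapses exactly in the case it is meant to handle. The case is in fact vacuous: if all $b_j$ shared a root $\beta$, the minimal polynomial of $\beta$ over $\Q$ would divide every $b_j$ and hence divide the irreducible $P$, forcing $\deg_x P=0$, a contradiction. (Alternatively, $\{s:\ig(s)=\beta\}$ is finite simply because $\ig-\beta$ is a non-constant algebraic function.) The same irreducibility observation is what guarantees $P(s,\cdot)\not\equiv 0$ in your upper-bound step. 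With that correction your proof is complete and effective.
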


\subsubsection{Multiplicative relations}
Let $Y = \{\ig_1,\ldots,\ig_t\} \subset \mathbb{K}$ be a set of univariate algebraic functions.
\begin{definition}
 A tuple $(a_1,\ldots,a_t)\in \Z^t$ for which $\ig_1^{a_1} \cdots \ig_t^{a_t} = 1$ identically, is called
  a \emph{multiplicative relation}. A set of multiplicative relations is called \emph{independent} if it is $\Z$-linearly independent as a subset of $\Z^t$. The set $Y$ is said to be \emph{multiplicatively dependent}
  if it satisfies a non-zero multiplicative relation. Otherwise $Y$
  is \emph{multiplicatively independent}. 
\label{def:rank}The \emph{rank} of $Y$, denoted $\rank Y$, is the size of the largest
multiplicatively independent subset of $Y$.

A tuple $(a_1,\ldots,a_t) \in \Z^t$, for which there exists $c\in \overline{\Q}$ such that $\ig_1^{a_1}\cdots \ig_t^{a_t} = c$ identically, is called a \emph{multiplicative relation modulo constants}. We say that $Y$ is \emph{multiplicatively dependent modulo constants}
if it satisfies a non-zero multiplicative relation modulo constants. Otherwise $Y$ is \emph{multiplicatively independent
modulo constants}.
\end{definition}

In particular, if $\rank\langle\ig_1,\ldots,\ig_t\rangle = 1$, then
for each pair $\ig_i$, $\ig_j$, we have $\ig_i^{b} = \ig_j^{a}$ for
some integers $a$, $b$ not both zero. In the analysis that follows, we
only need to distinguish between this case and
$\rank\langle\ig_1,\ldots,\ig_t\rangle \ge 2$. We will also need to
find multiplicative relations modulo constants between algebraic
functions.
These can be algorithmically determined and constructed as a consequence of the following proposition.
To this end, let $L$ and $L' \subseteq \Z^t$ be the set of multiplicative relations and multiplicative relations modulo constants on $Y$, respectively.
Both $L$ and $L'$ are finitely generated as
subgroups of $\Z^t$ under vector addition.

\begin{restatable}{proposition}{multiplicative}
Given a set $Y = \{\ig_1,\ldots,\ig_t\}$ of univariate algebraic functions, one can compute
a generating set for both $L$ and $L'$.%
\label{prop:multiplicative}
\end{restatable}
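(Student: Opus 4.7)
The plan is to reduce both tasks to standard effective computations on divisors and on multiplicative relations between algebraic numbers. First I would embed $\ig_1, \ldots, \ig_t$ in a common finite extension $K$ of $\Q(x)$, namely the function field of a smooth projective algebraic curve $X$ (over the field of constants of $K$, a number field), computable from the defining polynomials of the $\ig_i$. Each non-zero element $f \in K^*$ has a divisor $\mathrm{div}(f) \in \mathrm{Div}(X)$, a finite $\Z$-linear combination of closed points, and the kernel of the divisor map $\mathrm{div} : K^* \to \mathrm{Div}(X)$ is exactly the constants inside $K$, which are algebraic numbers. The divisor of each $\ig_i$ is effectively computable via Puiseux expansions at the finitely many candidate zeros, poles and singular points of its minimal polynomial (including the point at infinity).

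To compute $L'$, observe that $(a_1,\ldots,a_t) \in L'$ precisely when $\sum_{i} a_i \, \mathrm{div}(\ig_i) = 0$. Since all divisors involved are supported on a common finite set $S$ of closed points of $X$, this equation amounts to the vanishing of an integer linear combination of the columns of the integer matrix whose $(p,i)$ entry is $\ord_p(\ig_i)$ for $p\in S$. Standard integer linear algebra (e.g. Smith normal form applied to that matrix) then yields generators of $L'$.

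For $L$ itself, having computed generators $r_1,\ldots,r_m$ of $L'$ with $r_j=(a_1^{(j)},\ldots,a_t^{(j)})$, each product $c_j := \prod_i \ig_i^{a_i^{(j)}}$ is a non-zero element of $\algebraics$, obtainable by evaluation at any point outside $\exceptionalpoints$. Then $L$ is the preimage in $L'$ of the identity under the homomorphism $\psi : L' \to \algebraics^*$, $r_j \mapsto c_j$; equivalently, writing an arbitrary element of $L'$ as $\sum_j b_j r_j$, one seeks the lattice of $(b_1,\ldots,b_m)\in\Z^m$ with $\prod_j c_j^{b_j}=1$. Finding all multiplicative relations among finitely many algebraic numbers is a classical effective problem: Masser's quantitative bound on the height of a generating set of such relations reduces it to a finite search, and lattice reduction applied to the logarithmic embedding of the $c_j$ recovers a basis directly. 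Pulling back via the $r_j$ then yields generators of $L$.

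The principal technical hurdle I anticipate is the effective computation of the divisors themselves: identifying every closed point of $X$ where an algebraic function has a zero or a pole, and determining the exact order at each, requires Puiseux series expansions and—in the presence of singularities of the plane model $P_i(x,y)=0$—a desingularisation step to distinguish the several closed points that can lie above a single $x$-coordinate. By contrast the ensuing integer linear algebra on $\Z^S$, and the multiplicative-dependence computation on the algebraic numbers $c_j$, are textbook procedures, albeit technically non-trivial.
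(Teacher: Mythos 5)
Your proposal is correct and follows essentially the same route as the paper's own argument (given as a sketch in Appendix~\ref{app:prelim}): compute the orders of the $\ig_i$ at their zeros and poles via Puiseux expansions, obtain $L'$ as the integer kernel of the resulting order matrix, and then obtain $L$ from $L'$ by computing the multiplicative relations among the finitely many algebraic constants $c_j$ using Masser's effective result. You are somewhat more careful than the paper's sketch about the point at infinity and about distinguishing the several places above a given $x$-coordinate, but the underlying method is the same; the paper's primary proof simply cites the construction of Derksen, Jeandel, and Koiran applied to the field $\Q(x,\ig_1,\ldots,\ig_t)$.
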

\begin{proof}%
This is essentially a special case of a result from \cite{derksen2005quantum}. Indeed, in Sect.~3.2,
they show how to find the generators of the group $L$ in case the $\ig_i$ are elements of a finitely generated field over $\Q$. We apply the result to the field
$\Q(x,\ig_1,\ldots,\ig_t)$ to obtain the claim
for the set $L$. For $L'$, Case 3 of
\cite[Sect.~3.2]{derksen2005quantum} computes a generating
set as an intermediate step in the computation of a basis of
$L$.
Specifically, $L$ and $L'$ are the respective kernels of
  the maps $\varphi$ and
  $\tilde{\varphi}$ in~\cite[Sect.~3.2]{derksen2005quantum}. We give an alternative proof sketch specialised to univariate functions in~\cref{app:prelim}.
\end{proof}

\section{The Multi-Parameter Orbit Problem is Skolem-hard }\label{sec:hardness}

The \emph{Skolem Problem} asks, given a order-$k$ linear recurrence sequence $(u_n)_n$, uniquely defined by a recurrence relation $u_{n} = a_1 u_{n-1} + \dots + a_k u_{n-k}$ for fixed $a_1, \ldots, a_k$ and initial points $u_1,\dots,u_k$, whether there exists an $n$ such that $u_n = 0$.  The problem is famously not known to be decidable for orders at least 5, and problems which the Skolem problem reduce to are said to be \emph{Skolem-hard}. We will now reduce the Skolem at order 5 to the two-parameter parametric orbit problem. 

It suffices to only consider the instances of Skolem Problem at order 5 of  the form $u_n = a\lambda_1^n + \overline{a \lambda_1^n} + b \lambda_2^n + \overline{b \lambda_2^n} + c \rho^n = 0$ with $|\lambda_1| = |\lambda_2| \geq |\rho|$ and $a, b, \lambda_1, \lambda_2 \in \algebraics$, $c, \rho \in \algebraics \cap \R$, as the instances of the Skolem Problem at order 5 that are not of this form are known to be decidable \cite{OuaknineW2012decision}. We may assume that $c = \rho = 1$ by considering the sequence $(u_n/c\rho^n)$ if necessary. We can also rewrite $u_n = A \Re{\lambda_1^n} + B \Im{\lambda_1^n} + C\Re{\lambda_2^n} + D \Im{\lambda_2^n} + 1$ for $A, B, C, D \in \algebraics \cap \R$. %

Let $u_n = a\lambda_1^n + \overline{a \lambda_1^n} + b \lambda_2^n + \overline{b \lambda_2^n} + 1 = A \Re{\lambda_1^n} + B \Im{\lambda_1^n} + C\Re{\lambda_2^n} + D \Im{\lambda_2^n} + 1$ be a hard instance of the Skolem Problem. Let $M = \operatorname{diag}\left(
\begin{bmatrix} 
\Re{\lambda_1} & -\Im{\lambda_1}\\
\Im{\lambda_1} & \Re{\lambda_1}
\end{bmatrix},
\begin{bmatrix} 
\Re{\lambda_2} & -\Im{\lambda_2}\\
\Im{\lambda_2} & \Re{\lambda_2}
\end{bmatrix} 
\right)$, that is, the Real Jordan Normal Form of $\operatorname{diag}(\lambda_1, \overline{\lambda_1}, \lambda_2, \overline{\lambda_2})$. We set the starting point to be $u = [1 \text{ } 1 \text{ } 1 \text{ } 1]^\top$ and show how to define parametrized target vectors $v_1(s,t), \ldots, v_k(s,t)$ such that for all $n$, $u_n = 0$ if and only if there exist $s,t \in \R$ such that $M^n u = v_i(s,t)$ for some $i$. The Skolem Problem at order 5 then reduces to $k$ instances of the two-parameter orbit problem. 

The idea of our reduction is to first construct a semiagebraic set $Z \subseteq \R^4$, $Z = \bigcup_{i=1}^{k} Z_i$ such that $u_n = 0$ if and only if $(\Re{\lambda_1^n}, \Im{\lambda_1^n}, \Re{\lambda_2^n},\Im{\lambda_2^n}) \in Z$, and each $Z_i$ is a semialgebraic subset of $\R^4$ that can be described using two parameters and algebraic functions in two variables. Observing that $M^ns = (Re{\lambda_1^n} -\Im{\lambda_1^n}, \Im{\lambda_1^n} + \Re{\lambda_1^n}, Re{\lambda_2^n} -\Im{\lambda_2^n}, \Im{\lambda_2^n} + \Re{\lambda_2^n})$,
we then compute $v_i(s,t)$ from $Z_i$ as follows. Suppose $Z_i = \{(x(s,t), y(s,t), z(s,t), u(s,t) : s, t. \in \R\}$. Then $v_i(s,t) = (x(s,t) - y(s,t), y(s,t) + x(s,t), u(s,t)-v(s,t), v(s,t) + u(s,t))$.

To compute $Z$, first observe that $\Im{\lambda_2^n} = \pm\sqrt{(\Re{\lambda_1^n})^2 + (\Im{\lambda_1^n})^2 - (\Re{\lambda_2^n})^2}$ for all $n$ as $|\lambda_1| = |\lambda_2|$. Motivated by this observation, let $S_+, S_- \subseteq \R^3$, $S_+= \{(x, y, z) : Ax + By +Cz + D\sqrt{x^2 + y^2 - z^2} + 1 = 0\}$ and $S_-= \{(x, y, z) : Ax + By +Cz - D\sqrt{x^2 + y^2 - z^2} + 1 = 0\}$. We will choose $Z = \{ (x,y,z,\sqrt{x^2 + y^2 - z^2}) : (x,y,z) \in S_+ \} \cup \{ (x,y,z,-\sqrt{x^2 + y^2 - z^2}) : (x,y,z) \in S_- \}$. It is easy to check that the above definition of $Z$ satisfies the requirement that $u_n = 0$ if and only if $(\Re{\lambda_1^n}, \Im{\lambda_1^n}, \Re{\lambda_2^n},\Im{\lambda_2^n}) \in Z$, and it remains to show that both $S_+$ and $S_-$ can be parametrized using algebraic functions in two variables and two parameters. To this end, observe that $S_+$ and $S_-$ are both semialgebraic subsets of $\R^3$, but are also contained in the algebraic set $S = \{(x,y,z) : (Ax + By +Cz + 1)^2 = D^2(x^2 + y^2 - z^2)\} \subseteq \R^3$. Since $S \neq \R^3$ (for example, $(0,0,0) \notin S$), and it is algebraic, $S$ can have \emph{dimension} (see \cite{CoxLOS1991ideals} for a definition) at most $2$. Hence $S_+, S_-$ also have semialgebraic dimension
at most $2$. In~\cref{appen:hardness}, we show that a semialgebraic subsets of $\R^3$ of dimension at most two can be written as a finite union of sets of the form $\{v(s,t) : s, t \in \R\}$, where $v$ is an algebraic function. This completes the construction of $Z$ and the description of the reduction.

\section{Single Parameter Reachability: Overview of proof}

In this section we show how to prove \autoref{thm:forplds}, that is, it is decidable, given a
$(d\times d)$-matrix $M$, initial and target
vectors $\source,\target$, whose entries are
real algebraic functions all depending on a single
  parameter,  whether there exist
$\point\in\mathbb{R}$
giving rise to a concrete matrix, initial and target
$M(\point) \in \reals{}^{d\times d},\source(\point),\target(\point)\in \reals{}^{d}$, and a positive integer
$n\in \naturals$, such that $M(\point)^n \source(\point) = \target(\point)$.

In our case analysis, we often  show that either there is a finite set of parameter values for which the constraints could hold, or place an upper bound on the $n$ for which the constraints hold. The following proposition shows that the decidability of the problem in these cases is apparent:
\begin{proposition}\label{lemma:fixedn}\label{sec:fixeds}\hfill
\begin{itemize}
\item Given a finite set $S\subset\mathbb{R}$ it is decidable if there exists $(n,s) \in \mathbb{N}\times S$ s.t. $M(\point)^n \source(\point) =\target(\point)$.
\item Given $B\in\mathbb{N}$ it is decidable if there exists $n\le B$ and $s \in \mathbb{R}$ s.t. $M(\point)^n \source(\point) =\target(\point) $.
\end{itemize}
\end{proposition}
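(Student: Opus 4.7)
The plan is to reduce each of the two bullets to an already available decidable subproblem.

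For the first bullet, in every application of this proposition within the paper the finite set $S$ arises as the zero set of an algebraic function, and so consists of algebraic numbers given in canonical representation; I would make this assumption explicit, since without some effective representation the statement has no computational content. Under this assumption, for each $s \in S$ I would evaluate the algebraic-function entries of $M$, $\source$, $\target$ at the algebraic point $s$ to obtain a concrete triple $(M(s), \source(s), \target(s))$ whose entries are algebraic numbers, again in canonical representation. The orbit problem for such a triple is decidable, as noted in the related-work discussion in Section~1: the Kannan--Lipton argument carries over verbatim when the entries are algebraic, and the result is also subsumed by \cite{cai2000complexity}. Iterating through the finitely many elements of $S$ and returning yes iff at least one of these orbit queries succeeds settles the claim.

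For the second bullet, I would iterate over $n \in \{0, 1, \ldots, B\}$ and decide, for each fixed $n$ separately, whether there exists $s \in \mathbb{R}$ with $M(s)^n \source(s) = \target(s)$. For any such fixed $n$ the equality is a conjunction of $d$ scalar equations $f_i(s) = 0$, where each $f_i$ is a univariate algebraic function of $s$ built from the algebraic-function entries of $M$, $\source$, $\target$ by the polynomial operations that express $(M^n u - v)_i$. Using closure of univariate algebraic functions under arithmetic (iterated resultants applied to the defining polynomials of the constituent entries), I would compute a defining polynomial $P_i(s,y) \in \mathbb{Q}[s,y]$ for each $f_i$. Since $f_i$ is meromorphic on a connected open set, it is either identically zero---in which case every $s$ in its domain satisfies $f_i(s) = 0$---or it has only finitely many real zeros, each of which must be a root of the univariate polynomial $P_i(s,0) \in \mathbb{Q}[s]$ and can therefore be isolated effectively. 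Intersecting the resulting solution sets over $i = 1,\ldots,d$, removing the finitely many poles of $M, \source, \target$, and verifying each remaining candidate $s$ against the original equations via the canonical representation of the corresponding algebraic numbers decides the existence of a suitable real $s$.

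Neither reduction presents a genuine obstacle: the first bullet is just the orbit problem for matrices with algebraic entries, already noted as decidable, and the second rests on standard effective univariate root isolation over a number field. The only mild technical subtlety is that $P_i(s,0)$ may have spurious roots not corresponding to zeros of $f_i$ (due to the multivaluedness of algebraic functions, the defining polynomial may not be irreducible at the branch we care about); these are filtered out by evaluating $f_i$ at each candidate through its canonical representation, which is an exact algebraic test.
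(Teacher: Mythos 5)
Your proposal is correct and follows essentially the same route as the paper: the first bullet reduces to finitely many instances of the (non-parametric) Orbit Problem over algebraic matrices, and the second fixes each $n\le B$ and observes that the resulting scalar equations are univariate algebraic functions that are either identically zero or have finitely many effectively computable zeros. The extra care you take about representing the points of $S$ and filtering spurious roots of the defining polynomials is a reasonable elaboration of details the paper leaves implicit, not a different argument.
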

\begin{proof}
 The decidability of the first case is a consequence of the fact that a choice of parameter leads to a concrete matrix, thus giving an instance of the non-parametric Orbit Problem.

In the second case, for fixed $n$, one can observe that the matrix $M^n$ is itself a matrix of real algebraic functions. Hence the equation $M^n\source =\target$ can be rewritten as equations $P_i(\param) =0$ for real algebraic $P_i$ for $i = 1,\dots,d$. For each equation the function is either identically zero, or vanishes at only finitely many $\point$ which can be determined, and one can check if there is an $\point$ in the intersection of the zero sets as $i$ varies. Repeat for each $n \le B$.
\end{proof}

As a consequence, for each $n$ either $M^n \source =\target$ holds identically (for every $s$), or there are at most finitely many $s$ such that $M(\point)^n \source(\point)= \target(\point)$, and all such points are algebraic, as they must be the roots of the algebraic functions $P_i$.

Our approach will be to place the problem into Jordan normal form (\cref{sec:parametricjordan}), where we will observe that the problem can be handled  if the resulting form is not diagonal (\cref{jordancell:dim2}). Here the relation between the Weil height of an algebraic number and its image under an algebraic function are exploited to bound $n$ (reducing to the second case of the proceeding proposition). 

In the diagonal case the problem can be reformulated for algebraic functions $\ig_i, \tg_i$ for $i = 1\dots,t$, whether there exist $(n,\point)\in {\mathbb N}\times\mathbb{R}\setminus \exceptionalpoints$ such that $ \ig_i^n(\point)=\tg_i(\point)$ for~all $ i=1,\ldots,t$,
where $\exceptionalpoints{}$ is a finite set of exceptional points.  These exceptional points can be handled separately using the first case of the proceeding proposition.

To show decidability we will distinguish between the case where $\rank\langle \ig_1,\ldots,\ig_t\rangle$ is 1 and when it is greater than 2 (recall \cref{def:rank}). As discussed in the introduction, the most intriguing part of our development will be in the case of $\rank\langle \ig_1,\ldots,\ig_t\rangle \ge 2$, captured in the following lemma:

\begin{restatable}{lemma}{mainlemma}\label{lemma:main}
Let $\ig_1,\ldots,\ig_t$ be algebraic functions in $\mathbb{K}$ and $\rank\langle \ig_1,\ldots,\ig_t\rangle\ge 2$. Given algebraic functions $\tg_1,\ldots,\tg_t$ in $\mathbb{K}$, then it is decidable
whether there exist $(n,\point)\in {\mathbb N}\times \R\setminus \exceptionalpoints$ such that
\begin{equation}\label{eq:mainlemmaequations}
 \ig_i(\point)^n=\tg_i(\point)\qquad {\text{\rm for~all}}\qquad i=1,\ldots,t.
 \end{equation}
\end{restatable}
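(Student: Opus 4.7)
The plan is to turn \cref{eq:mainlemmaequations} into the geometric statement already sketched in the introduction. Define the curve
\[
C = \{(\lambda_1(s),\ldots,\lambda_t(s),\gamma_1(s),\ldots,\gamma_t(s)) : s \in \R\setminus\exceptionalpoints\} \subseteq \C^{2t}
\]
and the algebraic subgroups
\[
G_n = \{(\alpha_1,\ldots,\alpha_t,\beta_1,\ldots,\beta_t) \in (\C^*)^{2t} : \alpha_i^n=\beta_i,\; i=1,\ldots,t\}.
\]
A pair $(n,s)$ solves \cref{eq:mainlemmaequations} exactly when the point of $C$ at parameter $s$ lies in $G_n$. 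Because $\rank\langle\lambda_1,\ldots,\lambda_t\rangle \ge 2$ we have $t \ge 2$, so each $G_n$ has codimension $t \ge 2$ in $(\C^*)^{2t}$, which is precisely the regime in which the theorems of Bombieri, Masser and Zannier apply.

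First I would use \cref{prop:multiplicative} to compute a generating set of the lattice $L'$ of multiplicative relations modulo constants on the combined tuple $\lambda_1,\ldots,\lambda_t,\gamma_1,\ldots,\gamma_t$; geometrically $L'$ cuts out the smallest coset of an algebraic subgroup of $(\C^*)^{2t}$ containing the Zariski closure of $C$. If $L'=\{0\}$, then $C$ is not contained in any proper algebraic coset and BMZ \cite{BMZ1,BMZ2} gives an effective uniform bound $H_0$ on the Weil height of any point of $C$ meeting an algebraic subgroup of codimension $\ge 2$, in particular any point of $C \cap G_n$. Since $\rank\langle\lambda_1,\ldots,\lambda_t\rangle \ge 2$, some $\lambda_i$ is not identically a root of unity; removing the finite exceptional set of $s$ where $\lambda_i(s)$ is a root of unity (handled by the first item of \cref{lemma:fixedn}), \cref{lem:1} yields a strictly positive lower bound on $h(\lambda_i(s))$ in terms of $h(s)$. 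Combined with the identity $n\,h(\lambda_i(s)) = h(\lambda_i(s)^n) = h(\gamma_i(s)) \le c_3 h(s) + c_4 \le c_3 H_0 + c_4$, this converts the height bound into an effective upper bound on $n$, after which the second item of \cref{lemma:fixedn} decides the instance.

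If on the other hand $L' \neq \{0\}$, each generator $(a_1,\ldots,a_t,b_1,\ldots,b_t)$ yields an identity $\prod_i \lambda_i^{a_i}\prod_i \gamma_i^{b_i} \equiv c$ for some $c \in \algebraics^{*}$. On any point of $C\cap G_n$, substitution of $\gamma_i(s)=\lambda_i(s)^n$ transforms this into $\prod_i \lambda_i(s)^{a_i+n b_i} = c$. When $(b_1,\ldots,b_t)=0$ the relation is independent of $n$ and constrains $s$ to a proper Zariski-closed subset of $C$, which I would handle by projecting $C$ down to the remaining coordinates and recursing. Otherwise the relation couples $n$ and $s$, and the idea is to use it to eliminate one of the $\lambda_i$ and pass to a lower-dimensional torus in which, after repeating this step until $L'$ becomes trivial on the projected curve, Case~A applies. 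The main obstacle I expect is exactly this reduction bookkeeping: one has to verify that the projected curve remains nondegenerate in a torus whose coordinates still have rank $\ge 2$, so that BMZ can ultimately be invoked; degenerate situations in which eigenvalues become identically roots of unity along $C$ or in which the projection collapses the rank have to be caught separately and dispatched via \cref{lemma:fixedn} or by iterating the reduction.
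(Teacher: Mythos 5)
Your geometric framing ($C\cap G_n$, BMZ for subgroups of codimension $\ge 2$) is the one the paper announces in its introduction, and your Case~A is essentially the paper's \cref{lem:multiplicativelyindependetTargetsAndEigenvalues}: when the coordinate functions are multiplicatively independent modulo constants, \cref{thm:BMZfinManyPoints} yields an effectively computable finite set of candidate points, after which one can simply enumerate the candidate values of $s$ (your detour through heights to bound $n$ is unnecessary and slightly delicate, since a bound on $h(s)$ does not bound $h(\lambda_i(s))$ away from $0$; but finiteness of the candidate set already suffices). The problem is that Case~A is the easy regime, and your Case~B does not actually dispose of the remaining configurations, which is where almost all of the work in the paper lies.

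Two concrete gaps. First, when all the $\lambda_i$ are constant (which is compatible with $\rank\ge 2$), the first $t$ coordinates of $C$ are constant, so $L'\neq\{0\}$ automatically and no amount of projecting/eliminating coordinates ever produces a curve among the $\lambda$-coordinates to which BMZ could apply; your recursion cannot terminate in Case~A here. The paper needs a genuinely different tool for this: it eliminates $s$ to obtain a polynomial relation $P(\gamma_1,\gamma_2)=0$, studies the linear recurrence $u_n=P(\lambda_1^n,\lambda_2^n)$, and bounds its zeros using Baker's theorem on linear forms in logarithms together with its $p$-adic analogue (\cref{lemma:twoeigenconstantgammanot}). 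Nothing in your proposal substitutes for this. Second, your treatment of a generator of $L'$ with $(b_1,\ldots,b_t)=0$ is wrong: such a generator means $\prod_i\lambda_i^{a_i}\equiv c$ \emph{identically as functions}, so it holds for every $s$ and does not ``constrain $s$ to a proper Zariski-closed subset of $C$''. The correct use of such a relation (with $c$ not a root of unity) is the equation $c^n=\prod_i\gamma_i^{a_i}(s)$, which via \cref{lem:1} bounds $n$ directly (\cref{lem:dependentModCNonconstant}); and when the dependence involves the $\gamma_i$ while $\lambda_1,\lambda_2$ remain independent modulo constants, the paper must check, for each $n$, that the two induced relations stay $\Z$-linearly independent and then invoke the coset version \cref{thm:BMZ2finManyPoints} (\cref{lem:multdepindepmodconst}) — a theorem your proposal never uses. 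As written, your elimination-and-recurse step is both based on a false premise and not shown to terminate in a BMZ-applicable situation, so the proof is incomplete precisely in the cases the lemma is hard.
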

The proof of this lemma is shown in \cref{sec:rank2}. Here we apply two specialised arguments,  in the case of non-constant $\lambda$'s we exploit the results of Bombieri, Masser, and Zannier~\cite{BMZ1,BMZ2} to show there is a finite effective set of parameter values. In the case of constant $\lambda$'s we reduce to an instance of Skolem's problem that we show is decidable, effectively bounding $n$.

It will then remain to prove a similar lemma for the case where the rank is 1. Here we will exploit the initial use of real algebraic functions, to ensure the presence of complex conjugates.

\begin{restatable}{lemma}{rankonelemma}\label{lemma:rank1case}
Let $\ig_1,\ldots,\ig_t$ be algebraic functions in $\mathbb{K}$ and $\rank\langle \ig_1,\ldots,\ig_t\rangle = 1$. We assume that, if $\ig_i$ is complex then $\overline{\ig_i}$ (the complex conjugate) also appears. Given algebraic functions $\tg_1,\ldots,\tg_t$ in $\mathbb{K}$, then it is decidable
whether there exist $(n,\point)\in {\mathbb N}\times\mathbb{R}\setminus \exceptionalpoints$ such that $\ig_i^n(\point)=\tg_i(\point)$ for all $ i=1,\ldots,t$.
\end{restatable}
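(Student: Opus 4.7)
My plan for \cref{lemma:rank1case} is to first exploit the rank-$1$ structure to reduce the system to a single scalar equation $\lambda(s)^n=\gamma(s)$, then split on whether the reduced $\lambda$ is a torsion function, and finally use Weil-height estimates together with the complex-conjugation hypothesis to bound $n$, reducing to \cref{lemma:fixedn}.

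First I would apply \cref{prop:multiplicative} to compute a basis of multiplicative relations among $\lambda_1,\ldots,\lambda_t$. Since the rank is $1$, after picking a non-trivial representative (say $\lambda_1$; if all $\lambda_i$ are trivial the claim is immediate), each $\lambda_i$ satisfies an identity $\lambda_i^{b_i}=\lambda_1^{a_i}$ with computable integers $a_i,b_i$, $b_i\neq 0$. Raising $\lambda_i^n=\gamma_i$ to the $b_i$-th power and using $\lambda_1^n=\gamma_1$ yields the necessary identity $\gamma_i^{b_i}=\gamma_1^{a_i}$ at any valid $s$; if this fails as an identity of algebraic functions, $s$ is confined to a finite effectively computable set and we conclude by the first part of \cref{lemma:fixedn}. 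If it holds identically, a finite check fixing the $b_i$-th root of unity ambiguity in recovering $\gamma_i$ from $\gamma_1^{a_i/b_i}$ reduces the full system to the single equation $\lambda_1(s)^n=\gamma_1(s)$.

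Next I would dichotomise on $\lambda_1$. If $\lambda_1^k=1$ identically for some computable $k$, then $\lambda_1(s)^n$ depends only on $n\bmod k$ and the problem reduces to the $k$ algebraic equations $\lambda_1(s)^r=\gamma_1(s)$, $r=0,\ldots,k-1$, each resolved by \cref{lemma:fixedn}. Otherwise, following the strategy of \cref{jordancell:dim2}, I would appeal to \cref{lem:1}: any solution $s$ is algebraic (by the remark after \cref{lemma:fixedn}), so $n\,h(\lambda_1(s))=h(\gamma_1(s))$ by multiplicativity of the Weil height, and the two-sided bounds give $n\le (c_3\,h(s)+c_4)/(c_1\,h(s)-c_2)$, which tends to the effective constant $c_3/c_1$ as $h(s)\to\infty$. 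Hence for $h(s)$ above an effective threshold, $n$ is bounded by a computable constant.

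The main obstacle is the residual regime of small-height $s$: algebraic numbers of bounded height need not form a finite set, so one must also bound the degree of $s$ to land in a finite candidate set by Northcott's theorem. Here I would exploit the hypothesis that $\{\lambda_1,\ldots,\lambda_t\}$ is closed under complex conjugation: if $\overline{\lambda_1}=\lambda_k$ in our list, then for real $s$ the equations force $\gamma_k=\overline{\gamma_1}$ (either identically, or on a finite set handled by \cref{lemma:fixedn}), so $|\lambda_1|^2=\lambda_1\overline{\lambda_1}$ and $|\gamma_1|^2=\gamma_1\overline{\gamma_1}$ are genuinely algebraic functions of the real variable $s$. The real algebraic relation $|\lambda_1(s)|^{2n}=|\gamma_1(s)|^2$ together with the degree control it provides, combined with the upper height bound on $n$ above, yields a bound $N$ on $n$ valid uniformly in $s$. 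With $n\le N$ established, the second case of \cref{lemma:fixedn} decides the remaining problem.
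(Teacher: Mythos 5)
Your reduction to a single equation $\lambda(s)^n=\gamma(s)$ is in the spirit of the paper's first step (the paper is more careful about sign changes, interval partitions, and the root-of-unity ambiguities, but the idea matches). The second half of your argument, however, has a genuine gap: in the rank-one case the strategy of bounding $n$ cannot work, and the hole is not a technicality. Your height estimate $n\bigl(c_1h(s)-c_2\bigr)\le c_3h(s)+c_4$ only constrains $n$ when $h(s)$ exceeds the threshold $c_2/c_1$; below that threshold it says nothing, and that regime is where essentially all solutions live. Take $\lambda(s)=s$ and $\gamma(s)=1/2$: for every $n$ the point $s=2^{-1/n}$ is a solution, with $h(s)=\tfrac{\log 2}{n}\to 0$ but $\deg(s)=n\to\infty$. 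So no bound $N$ on $n$ exists, Northcott is inapplicable because the degrees of the candidate solutions grow with $n$, and the conjugation hypothesis does not supply the ``degree control'' you invoke --- the relation $|\lambda(s)|^{2n}=|\gamma(s)|^2$ still has $n$ in the exponent and its solutions have unbounded degree. What conjugate-closure actually buys (and what the paper uses it for) is that any non-real $\lambda_i$ of rank one must satisfy $|\lambda_i|\equiv 1$.

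The paper therefore abandons number-theoretic bounding entirely in the rank-one case and argues analytically. After the reduction to one equation it splits on whether $\lambda$ is real-valued or of constant modulus $1$. In the real case it partitions the domain into intervals where $0<\lambda,\gamma<1$ and $\lambda,\gamma$ are monotone, bounds $\lambda$ away from $1$ and $\gamma$ away from $0$ where possible (there a bound on $n$ \emph{does} exist), and on the residual intervals uses the intermediate value theorem: either some computable power $n$ produces a crossing $\lambda^n(s)=\gamma(s)$, or one certifies $\lambda^n<\gamma$ throughout for all larger $n$. In the complex case it shows via an arc-winding argument that for non-constant $\lambda$ the equation has a solution for \emph{every} sufficiently large $n$ --- which is exactly the behaviour your proposed bound on $n$ would rule out. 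You would need to replace your final paragraph with an argument of this analytic kind.
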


The proof of this lemma (in \cref{sec:rank1case}), reduces the problem to a single equation $(t=1)$, for which we provide a specialised analysis on the behaviour of such functions that enable us to decide the existence of a solution.

In the remainder of this section we will show how to place the problem in the form of these two lemmas: first placing the  matrix into Jordan normal form, eliminating the cases where the Jordan form is not diagonal and provide some simplifying assumptions for the proofs of \cref{lemma:main,lemma:rank1case}.

\subsection{The parametric Jordan normal form}\label{sec:parametricjordan}

 For every $\point\in \mathbb{R}\setminus \exceptionalpoints$
we have
$M(\point)=C^{-1}(\point)J(\point)C(\point)$ and hence, for every $n\in\mathbb{N}$,
$M^n(\point) \source(\point)=\target(\point)$ if and only if
$J^n(\point)C(\point)\source(\point) = C(\point)\target(\point)$.
On the other hand, deciding whether there exists $\point\in \exceptionalpoints$ with
$M^n(\point) \source(\point)=\target(\point)$ reduces to finitely many instances of the
Kannan-Lipton Orbit Problem, which can be decided separately.
We have thus reduced the parametric point-to-point reachability problem
to the following one in case of a single parameter:

\begin{problem}\label{problem:afterjordan}
Given a matrix $J \in \mathbb{K}^{d\times d}$ in Jordan normal form, and
vectors $\tilde{\source}$, $\tilde{\target} \in \mathbb{K}^d$, decide whether there
exists $(n,\point) \in \N \times \mathbb{R}\setminus \exceptionalpoints$ such that $J^n(\point) \tilde{\source}(\point) = \tilde{\target}(\point)$.
\end{problem}

\begin{example}\label{ex:reachabilityJNF}
\label{ex:JNF}
Define
$M =
\left( \begin{smallmatrix}
 \param+\frac{1}{2} & 0 & 0 \\
 \frac{1}{2}-\param & 1-\param & 0 \\
 0 & \param & 1
\end{smallmatrix}
\right) \in \mathbb{Q}(x)^{3\times 3}$. Then the characteristic polynomial of $M$ is
$\det(yI -M) =   (y -1/2 - x) ( y - 1) (y + x -1)$. The irreducible factors have no critical points.
Now over $\mathbb{K}$ we may write $M = C^{-1} J C$, where
$J = \left(
\begin{smallmatrix}
 1 & 0 & 0 \\
 0 & 1-\param & 0 \\
 0 & 0 & \param+\frac{1}{2}
\end{smallmatrix}
\right)$,
$C = \left(
\begin{smallmatrix}
 1 & 1 & 1 \\
 \frac{1-2 \param}{4 \param-1} & -1 & 0\\
 \frac{2 \param}{1-4 \param} & 0  & 0 
\end{smallmatrix}
\right)$,
and
$C^{-1} = 
\left(
\begin{smallmatrix}
 0 & 0  & \frac{1}{2\param}-2 \\
 0 & -1 & 1 - \frac{1}{2\param} \\
 1 &  1 & 1 
\end{smallmatrix}
\right)$. Notice that $J$ is defined for all $\param$, while $C$ is not defined at $1/4$, and $C^{-1}$ is not defined at $0$ (notice also that $C(0)$ is not invertible). Therefore $\mathcal{E} = \{0,1/4\}$.
For $\point \in \R \setminus\cal E$, all three are defined and
we have $M(\point) = C^{-1}(\point) J(\point) C(\point)$, with $J(\point)$ in Jordan normal form
and $C(\point)$ invertible.

Notice, for $1/4 \in \cal E$, we have $M(1/4) = R^{-1} K R$, where
$K = \left(
\begin{smallmatrix}
 1 & 0 & 0 \\
 0 & \frac{3}{4} & 1 \\
 0 & 0 & \frac{3}{4}
\end{smallmatrix}
\right)$
and $R = \left(
\begin{smallmatrix}
 1 & 1 & 1 \\
 -1 & -1 & 0 \\
 -\frac{1}{4} & 0 & 0 \\
\end{smallmatrix}
\right)$. Notice here that $M(1/4)$ is non-diagonalisable (over $\algebraics{}$),
though $M$ is (over $\mathbb{K}$).

Let $\source = (\source_1,\source_2,\source_3) \in \mathbb{Q}(x)^3$ and $\target= (\target_1,\target_2,\target_3)\in \mathbb{Q}(x)^3$.
The problem of whether there exists $(n,\point)\in \N \times \R$ for which
$M(\point)^n \source(s) = \target(s)$ is reduced to checking the problem at $s \in \mathcal{E}$,
and to the associated problem $J^n(s) \tilde{\source}(s) = \tilde{\target}(s)$,
where $\tilde{\source} = \left(\begin{smallmatrix}
\source_1 + \source_2 + \source_3\\
\frac{ 1-2\param}{4 \param-1}\source_1 - \source_2\\
\frac{2\param}{1-4\param}\source_1
\end{smallmatrix}\right)$,
$\tilde{\target} = 
\left(\begin{smallmatrix}
\target_1 + \target_2 + \target_3\\
\frac{ 1-2\param}{4 \param-1}\target_1 - \target_2\\
\frac{2\param}{1-4\param}\target_1
 \end{smallmatrix}\right)
$,
and
$J^n = \left(\begin{smallmatrix}
 1 & 0 & 0 \\
 0 & (1-\param)^n & 0 \\
 0 & 0 & (\param+\frac{1}{2})^n
\end{smallmatrix}
\right)$.
\end{example}

Let us establish some notation: assume $J= \abr{J_1,\dots,J_N}$, corresponding to eigenvalues $\ig_1,\dots,\ig_N$. Assume the dimension of Jordan block $J_i$ is $d_i$, and let $\tilde{u}_{i,1},\dots,\tilde{u}_{i,d_i}$ be the coordinates of $\tilde{u}$ associated with the Jordan block $J_i$, where index $1$ corresponds to the bottom of the block. Similarly, let $\tilde{v}_{i,1},\dots,\tilde{v}_{i,d_i}$ be the corresponding entries of the target.

Let us define the functions $\gamma_1,\dots, \gamma_N$ used in our reduction to \autoref{lemma:main} and \autoref{lemma:rank1case}. We let $\gamma_i(s) = \tilde{v}_{i,1}(s)/\tilde{u}_{i,1}(s)$, for $\tilde{u}_{i,1}(s) \ne 0$. If $\tilde{\source}_{i,1}$ is not constant zero, then there are finitely many $s$ where $\tilde{\source}_{i,1}(\point) =0 $, each of which can be handled explicitly. If some $\tilde{\source}_{i,1}$ is the constant zero function, then there are two cases. Firstly, if $\tilde{\target}_{i,1}$ is also the constant zero then we are in the degenerate case $\lambda_i^n\cdot 0 =  0$, and the row can be ignored. Secondly if $\tilde{\target}_{i,1}$ is not constant zero, then there are only a finite number of $s$ s.t. $ 0 = \tilde{\target}_{i,1}(\point)$. Each of these can be checked explicitly.

We say that an eigenvalue $\lambda\in\algebraicfunctions$ (possibly constant) is a \emph{generalised root of unity} if there exists an $a\in\mathbb{N}_{\ge 1}$, such that $\lambda^{a}(\param)$ is a real-valued and non-negative function. Let $\operatorname{order}(\lambda)$ of a generalised root of unity $\lambda$ be the minimal such $a$. Notice that any real function is a generalised root of unity with order at most $2$.
When we say an eigenvalue \textit{is} a root of unity, then the eigenvalue is necessarily a constant function.

\begin{lemma}\label{lemma:nonzeroonly1}
To decide \cref{problem:afterjordan} it suffices to assume that no $\lambda_{i}$ is identically zero and that any $\lambda_{i}$ which is a generalised root of unity is real and non-negative (in particular, the only roots of unity are exactly $1$).
\end{lemma}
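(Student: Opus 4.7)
The strategy is to perform two successive reductions, each producing a finite family of new instances of \cref{problem:afterjordan} and absorbing the ``small'' residual cases into bounded-$n$ or finite-$s$ applications of \cref{lemma:fixedn}.

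\textbf{Eliminating zero eigenvalues.} The plan is to separate the Jordan blocks of $J$ with $\lambda_i \equiv 0$ from the rest. Let $D$ be the maximum size of any nilpotent block (so $D=0$ if there are none). For $n < D$, \cref{lemma:fixedn} already settles the problem, so I would assume $n \geq D$. Then every nilpotent block contributes $J_i^n = 0$, and the corresponding rows of $J^n \tilde u = \tilde v$ collapse to pure constraints $\tilde v_{i,j}(s) = 0$. If $\tilde v_{i,j}$ is identically zero the constraint is vacuous and I would excise the entire block from the system; if $\tilde v_{i,j}$ is not identically zero, its zero set is a finite, effectively computable set of algebraic points and I would finish via the finite-$s$ branch of \cref{lemma:fixedn}. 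After this step every remaining $\lambda_i$ is not identically zero.

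\textbf{Normalising generalised roots of unity.} Let $E$ be the set of indices $i$ for which $\lambda_i$ is a generalised root of unity and put $A = \operatorname{lcm}\{\operatorname{order}(\lambda_i) : i \in E\}$ (with $A=1$ if $E = \emptyset$). Writing $n = Aq+r$ with $0 \le r < A$ isolates the bounded range $n < A$, handled once more by \cref{lemma:fixedn}, and leaves, for each residue $r$, the equation
\[
(J^A)^q \bigl(J^r \tilde u\bigr) = \tilde v, \qquad q \geq 1.
\]
I would view this as an orbit problem for the parametric matrix $J^A$, re-Jordanise $J^A$ over $\mathbb{K}$ to put it back into the form of \cref{problem:afterjordan}, and dispatch the finitely many exceptional points introduced by the re-Jordanisation through the non-parametric Kannan--Lipton algorithm exactly as in \cref{sec:parametricjordan}. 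The new eigenvalues are the powers $\lambda_i^A$.

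Verifying the hypotheses on these $\lambda_i^A$ is then immediate: for $i \in E$, $\lambda_i^A$ is real and non-negative by choice of $A$; $\lambda_i^A$ is not identically zero since $\lambda_i$ is not; and for $i \notin E$, $\lambda_i^A$ remains outside the class of generalised roots of unity, because if $(\lambda_i^A)^b$ were real and non-negative then so would be $\lambda_i^{Ab}$, contradicting $i \notin E$. Hence each of the $A$ new instances satisfies both requirements of the lemma, and the only ``obstacle'' is the routine bookkeeping that excising nilpotent blocks and re-Jordanising $J^A$ both keep us inside the class of \cref{problem:afterjordan} and only introduce finitely many further exceptional points.
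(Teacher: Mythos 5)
Your proposal is correct and follows essentially the same route as the paper: nilpotent blocks are disposed of by bounding $n$ by the block size and reducing the residual constraints $\tilde v_{i,j}(s)=0$ to finitely many $s$, and generalised roots of unity are normalised by splitting $n$ modulo $L=\operatorname{lcm}$ of the orders and passing to the $L$ subproblems $(J^L)^q(J^r\tilde u)=\tilde v$. Your extra remark about re-Jordanising $J^L$ (which is no longer in Jordan form when some block has size $\ge 2$) is a legitimate bookkeeping point the paper leaves implicit, but it does not change the argument.
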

\begin{proof}
If $\lambda_i =0$, then $J_i^{d_i+n} = 0$ for all $n\in \mathbb{N}$, hence we only need to check $n\le d_i$ and the $s$ such that $\tilde{\target}_{i,1}(s) = \dots = \tilde{\target}_{i,d_i}(s) = 0$ (unless this holds identically, in which case the constraints from this Jordan block can be removed).
 
Take $L = \operatorname{lcm}\{\operatorname{order}(\lambda_i) \mid  \lambda_i \text{ is generalised root of unity} \}$. Then the reachability problem reduces to $L$ problems: $(J^{L})^n(J^{k}\tilde{\source}(\param)) = \tilde{\target}(\param)$ for every $k \in \{0,\dots,L-1\}$. The eigenvalue $\lambda_i^L$ corresponding to $(J_i)^{L}$ is now real and non-negative if it is a generalised root of unity.
\end{proof} %

\subsection{Jordan cells of dimension larger than \texorpdfstring{$1$}{1}}\label{jordancell:dim2}

First, we show decidability of the problem when some Jordan block has dimension at least 2:

\begin{proposition}
If there exists $J_i$  such that $d_i > 1$, then \cref{problem:afterjordan} is decidable.
\end{proposition}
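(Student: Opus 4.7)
The plan is to use the non-diagonal Jordan block $J_i$ (with $d_i \ge 2$ and eigenvalue $\lambda$) to force an algebraic relation $n = g(\point)$, and then combine this with the bottom-row eigenvalue equation $\lambda(\point)^n = \gamma(\point)$ via the Weil-height bounds of \cref{lem:1} to obtain an effective upper bound on $n$. Once $n$ is bounded, the second part of \cref{lemma:fixedn} decides the problem.

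Concretely, the two bottom rows of $J_i^n \tilde{\source}_i = \tilde{\target}_i$ read
\[ \lambda(\point)^n \tilde{\source}_{i,1}(\point) = \tilde{\target}_{i,1}(\point), \quad n\lambda(\point)^{n-1} \tilde{\source}_{i,1}(\point) + \lambda(\point)^n \tilde{\source}_{i,2}(\point) = \tilde{\target}_{i,2}(\point). \]
By \cref{lemma:nonzeroonly1} we may assume $\lambda \not\equiv 0$, and after dispatching the degenerate case $\tilde{\source}_{i,1} \equiv 0$ separately (as in the preparatory reduction preceding \cref{lemma:main}), eliminating $\lambda^n$ yields
\[ n = g(\point) := \frac{\lambda(\point) \bigl( \tilde{\target}_{i,2}(\point) \tilde{\source}_{i,1}(\point) - \tilde{\target}_{i,1}(\point) \tilde{\source}_{i,2}(\point) \bigr)}{\tilde{\source}_{i,1}(\point) \tilde{\target}_{i,1}(\point)}, \]
a real algebraic function on $\mathbb{R}\setminus\exceptionalpoints$ (up to finitely many further excluded points). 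If $g$ is identically constant then $n$ is pinned down and \cref{lemma:fixedn} finishes. Otherwise $g$ is non-constant, and \cref{lem:1} applied to $g$ shows $h(\point) = \Theta(\log n)$ for any algebraic $\point$ with $g(\point) = n \in \N_{>0}$. Applying \cref{lem:1} to $\lambda$ and $\gamma := \tilde{\target}_{i,1}/\tilde{\source}_{i,1}$, the height identity $n \cdot h(\lambda(\point)) = h(\gamma(\point))$ forces a quantity of order $n \cdot h(\point)$ to be dominated by one of order $h(\point)$; combined with $h(\point) = \Theta(\log n)$, this produces a contradiction for $n$ exceeding an effectively computable threshold, thereby bounding $n$.

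The only case in which this argument fails is $\lambda \equiv 1$ (the sole generalised root of unity surviving \cref{lemma:nonzeroonly1}). Then the bottom equation collapses to the identity $\tilde{\source}_{i,1} \equiv \tilde{\target}_{i,1}$ (else finitely many $\point$, reducible to \cref{lemma:fixedn}), and the relation $\lambda^n = \gamma$ carries no information about $n$. If some other Jordan block carries an eigenvalue that is not a generalised root of unity, the height argument still applies after substituting $n = g(\point)$ into that block's constraint $\lambda_j(\point)^n = \gamma_j(\point)$. In the remaining fully unipotent case every entry of $J^n$ is a polynomial in $n$ with algebraic-function coefficients, and the system $J^n\tilde{\source} = \tilde{\target}$ either forces a non-trivial algebraic constraint on $\point$ alone (a finite set of candidates, dispatched by \cref{lemma:fixedn}) or collapses to $n = g(\point)$; in the latter case one decides whether the range of $g$ on $\mathbb{R}\setminus\exceptionalpoints$ meets $\N_{>0}$, a routine semi-algebraic question. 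The main obstacle is thus the $\lambda \equiv 1$ branch, in which the Weil-height mechanism provides no exponential against which to bound $n$ and one must fall back on real-algebraic range analysis.
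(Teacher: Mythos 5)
Your proof follows essentially the same route as the paper: the two bottom rows of the non-trivial Jordan block yield $n = \theta(\point)$ together with $\lambda^n(\point)=\gamma(\point)$, the Weil-height bounds of \cref{lem:1} then bound $n$ effectively, and the degenerate $\lambda\equiv 1$ branch is handled exactly as in the paper by borrowing a constraint from another block or analysing the range of $\theta$. The only minor imprecision is that \cref{lem:1} applies only to non-constant functions, so for constant $\lambda$ (not a root of unity) one should instead use the fixed positive height $h(\lambda^n(\point)) = n\,h(\lambda)$ directly, which is how the paper's \cref{lemma:thetalambdagamma} splits the argument.
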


There are three cases not covered by the previous section:
$\ig_i$ is not constant, $\ig_i$ is constant but not a root of unity, and $\ig_i = 1$.

Let us start with the case where $\ig_i \ne 1$, that is $\ig_i$ is a constant but not $1$, or $\ig_i$ is not a constant. Here we can use the bottom two rows from the block to obtain: \begin{equation*}
 \ig_i^n(\param)\tilde{\source}_{i,1}(\param) = \tilde{\target}_{i,1}(\param)\quad\text{ and }\quad
 \ig_i^n(\param)\tilde{\source}_{i,2}(\param) + n \ig_i^{n-1}(\param) \tilde{\source}_{i,1}(\param) = \tilde{\target}_{i,2}(\param),
 \end{equation*} 
We reformulate these equations, defining algebraic function $\theta$:
\begin{equation*}
 \ig_i^n(\param) = \gamma_i(\param) = \tilde{\target}_{i,1}(\param)/\tilde{\source}_{i,1}(\param)\quad\text{ and }\quad
  n  = \theta(\param) = \ig_i(\param)(\tilde{\target}_{i,2}(\param)/\tilde{\target}_{i,1}(\param) - \tilde{\source}_{i,2}(\param)/\tilde{\source}_{i,1}(\param))
 \end{equation*} 
Any roots or poles of $\tilde{\source}_{i,1},\tilde{\source}_{i,2},\tilde{\target}_{i,1},\tilde{\target}_{i,2},\ig_i$ can be handled manually (and we already ensured $\tilde{\source}_{i,1}$ is not identically zero). We can then apply the following lemma.

\begin{restatable}{lemma}{lemmaheightsboundnjordand} \label{lemma:thetalambdagamma}
Given algebraic functions $\ig,\gamma,\theta$ in parameter $\param$, with $\ig$ not a root of unity, then there is a bound on $n\in\naturals$ such that there exists an $\point \in \algebraics$ with 
$n = \theta(\point)$ and $\ig^n(\point) = \gamma(\point)$.
\end{restatable}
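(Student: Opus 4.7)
The plan is to control the Weil height of $s$ via the two constraints and then bound $n$. From $\lambda(s)^n = \gamma(s)$ and property~3 of the height we obtain the key identity $n\cdot h(\lambda(s)) = h(\gamma(s))$, while the identity $n = \theta(s)$ combined with Lemma~\ref{lem:1} (in both directions) relates $h(s)$ and $\log n = h(n)$. Putting these together will squeeze $n$ into a bounded range.

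First, reduce to the case that $\theta$ is non-constant and $\gamma$ is not identically zero: if $\theta$ is constant then $n$ is immediately determined by $n=\theta$, and if $\gamma\equiv 0$ then $\lambda(s)^n = 0$ forces $s$ to lie among the finitely many zeros of $\lambda$, each yielding at most one $n = \theta(s)$. The finitely many poles and zeros of $\lambda,\gamma,\theta$ can likewise be enumerated separately. Applying Lemma~\ref{lem:1} to $\theta$ yields constants $c_1,c_2,c_3,c_4 > 0$ with
\[
c_1 h(s) - c_2 \;\le\; \log n \;=\; h(\theta(s)) \;\le\; c_3 h(s) + c_4 .
\]
The left inequality gives $h(s) \le (\log n + c_2)/c_1$. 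Applying Lemma~\ref{lem:1} to $\gamma$ (or treating the constant case trivially) gives $h(\gamma(s)) \le c_5 h(s) + c_6$, and combining with the bound on $h(s)$ and the key identity yields
\[
n\cdot h(\lambda(s)) \;\le\; A\log n + B
\]
for absolute constants $A,B \ge 0$.

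Now case-split on $\lambda$. If $\lambda$ is a constant, then $\lambda$ not being a root of unity means $h(\lambda)$ is a fixed strictly positive real, and the displayed inequality $n\cdot h(\lambda) \le A\log n + B$ bounds $n$. If $\lambda$ is non-constant, apply the lower-bound direction of Lemma~\ref{lem:1} to $\lambda$ to obtain $h(\lambda(s)) \ge e_1 h(s) - e_2$ for constants $e_1,e_2 > 0$. Split on the magnitude of $h(s)$: if $h(s) \ge (e_2+1)/e_1$, then $h(\lambda(s)) \ge 1$ and again $n \le A\log n + B$ bounds $n$; otherwise $h(s)$ is bounded by the fixed constant $(e_2+1)/e_1$, whence the already-established upper inequality $\log n \le c_3 h(s) + c_4$ bounds $n$ directly.

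The main obstacle is that the lower bound $h(\lambda(s)) \ge e_1 h(s) - e_2$ from Lemma~\ref{lem:1} is uninformative precisely when $\lambda(s)$ might be a root of unity (so $h(\lambda(s)) = 0$), which happens only when $h(s)$ is small. The resolution is that the \emph{two} directions of Lemma~\ref{lem:1} applied to $\theta$ are complementary: the lower-bound direction gives $h(s) = O(\log n)$, which is useful when $n$ is large; the upper-bound direction gives $\log n = O(h(s))$, which forces $n$ to be small exactly when $h(s)$ is bounded. These two regimes cover all possibilities and close the argument without any separate analysis of when $\lambda(s)$ happens to be a root of unity.
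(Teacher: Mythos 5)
Your proposal is correct and follows essentially the same route as the paper's proof: the identity $n\,h(\lambda(s))=h(\gamma(s))$, the two-sided estimate $\log n = h(\theta(s)) = \Theta(h(s))$ from Lemma~\ref{lem:1}, and the same case split on $\lambda$ constant versus non-constant. The only difference is cosmetic bookkeeping in the non-constant case (you split on whether $h(s)$ is large enough to force $h(\lambda(s))\ge 1$, whereas the paper first bounds $h(s)$ for all sufficiently large $n$ and then bounds $\log n$), which combines the same inequalities to the same effect.
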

\begin{proof}[Proof sketch]
We sketch the case where $\ig$ is not a constant function, a similar (but distinct) approach is used for $\ig $ constant. Taking heights on $\ig^n(\point) = \gamma(\point)$ we obtain $nh(\lambda(\point)) = h(\gamma(\point))$, applying \autoref{lem:1} twice (on both $\lambda$ and $\gamma$) we obtain $nh(\point) = \Theta(h(\point))$. In particular if $n$ is large (say $n > A$) then $h(\point)$ is bounded (say $h(\point)< B$).
Taking heights on $n = \theta(\point)$ we obtain $\log(n) = h(n) = h(\theta(\point)) = \Theta(h(\point))$. If $n > A$ then $\log(n) \le BC$. Hence $n \le \max\{A, exp(BC)\}$.
\end{proof}

The remaining case where $\ig_i = 1$ results only in an equation of the form $n = \theta(\point)$, so $\lambda_j^n(s) = \gamma_j(s)$ can be taken from any other Jordan block where $\lambda_j \ne 1$ and again we apply \autoref{lemma:thetalambdagamma} to place a bound on $n$.

\subsection{Further simplifying assumptions for diagonal matrices}\label{sec:formulatereadyforlemmas}

Henceforth, we may assume that $J$ is a diagonal matrix resulting in the formulation of \cref{lemma:main,lemma:rank1case}: given eigenvalues $\ig_1,\ldots,\ig_t$  and so we want to know if there exists $(n,\point) \in \N \times \R\setminus \exceptionalpoints{}$ such that
\begin{equation}\label{eq:diagonalsystem}
\ig_i^n(\point)= \tg_i(\point)\qquad {\text{\rm for~all}} \qquad i=1,\ldots,t
\end{equation}
Finally we make some simplifications in \autoref{lemma:cleanuplemma}:

\begin{lemma}\label{lemma:cleanuplemma}
To decide \autoref{problem:afterjordan},  it suffices to decide the problem with instances  where the eigenvalues $\ig_i$ are distinct, that none of the $\ig_i$'s are identically zero, that none of the constant $\ig_i$'s are roots of unity, and every constant $\ig_i$ is associated with non-constant $\tg_i$.
\end{lemma}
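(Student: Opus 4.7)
The plan is to eliminate each forbidden configuration in turn, reducing every violation to either a finite, effectively computable set of candidate parameter values $s$ or to an effective upper bound on the exponent $n$; in either case the residual instance is then decided via the corresponding clause of Proposition \ref{lemma:fixedn}.

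I would first enforce distinct eigenvalues. If $\lambda_i \equiv \lambda_j$ as algebraic functions for some $i \neq j$, then any common solution must satisfy $\gamma_i(s) = \gamma_j(s)$. If this identity holds throughout, the $j$-th equation is redundant and is discarded; otherwise the set of $s$ with $\gamma_i(s) = \gamma_j(s)$ is finite and algebraic, and decidability follows by the first clause of Proposition \ref{lemma:fixedn}. Next, if some $\lambda_i$ is identically zero, then for $n \geq 1$ its equation reduces to $\gamma_i(s) = 0$, which either holds identically (drop the equation) or on a finite set of $s$ (apply Proposition \ref{lemma:fixedn} again). The third condition---no constant $\lambda_i$ is a root of unity---is handled similarly: by Lemma \ref{lemma:nonzeroonly1} the only possible such root of unity is $1$, and then the equation collapses to $\gamma_i(s) = 1$, disposed of in the same way.

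The main obstacle is the fourth condition, namely a constant $\lambda_i$ paired with a constant $\gamma_i$. After the previous reductions we may assume $\lambda_i$ is a nonzero, non-root-of-unity algebraic constant, so $h(\lambda_i) > 0$ and the powers $\lambda_i^n$ for $n \geq 1$ are pairwise distinct. Hence the scalar equation $\lambda_i^n = \gamma_i$ has at most one solution in $n$, which can be located effectively: if $\gamma_i = 0$ there is none (since $\lambda_i \neq 0$), while otherwise multiplicativity of the Weil height forces $n = h(\gamma_i)/h(\lambda_i)$, yielding a single candidate to verify. If this candidate is not a positive integer satisfying $\lambda_i^n = \gamma_i$, the whole system is infeasible; otherwise we obtain an explicit upper bound $B$ on $n$ and conclude by the second clause of Proposition \ref{lemma:fixedn}. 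After all four reductions, every surviving instance satisfies the stated simplifying conditions simultaneously, completing the reduction.
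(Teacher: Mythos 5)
Your proposal is correct and follows essentially the same case-by-case reduction as the paper: identical eigenvalues are collapsed or restricted to the finite set where the targets agree, zero and root-of-unity eigenvalues are dispatched via Lemma~\ref{lemma:nonzeroonly1} and finiteness of zero sets, and the constant-constant case is reduced to locating the unique candidate exponent. The only (harmless) difference is that you pin down that unique $n$ via multiplicativity of the Weil height, $n = h(\gamma_i)/h(\lambda_i)$, whereas the paper simply invokes the one-dimensional Kannan--Lipton Orbit Problem; both yield an effective bound to feed into Proposition~\ref{lemma:fixedn}.
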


\begin{proof}
  Consider first the case that $\lambda_1 = \lambda_2$.
If also $\gamma_1=\gamma_2$ then the equations $\lambda_1^n=\gamma_1$
and $\lambda_2^n=\gamma_2$ are equivalent and one of them can be
removed.
Otherwise, if $\gamma_1 \neq \gamma_2$,
the equations $\lambda_1^n=\gamma_1$
and $\lambda_2^n=\gamma_2$ can only have a common solution for $s\in
\mathbb{R}$ with $\gamma_1(s)=\gamma_2(s)$, i.e., we can restrict to a
finite set of parameters, in which case the problem becomes decidable.

We have already established, in \cref{lemma:nonzeroonly1}, that none of the $\ig_i$'s are identically zero, and that the only constant root of unity is $1$. Indeed if $\lambda_j = 1$ then we have $1^n =  \gamma_j(s)$, which holds either at finitely many $s$ or $\gamma_j$ is the constant $1$ and the constraint can be dropped.

If there exists $i$ with constant $\ig_i$ (not a root of unity) and constant $\tg_i$ then there is at most a single $n$ such that $\ig_i^n = \tg_i$. This $n$ can be found using the  Kannan-Lipton problem on the single constraint. The remaining constraints can be verified for this $n$ using \autoref{lemma:fixedn} to determine if they are simultaneously satisfiable.
\end{proof}

\subsection{Multiplicative dependencies}\label{sec:multiplicativedependencies}
To handle cases when the eigenvalues $\ig_i$'s are multiplicatively dependent,
we often argue as in the following manner.
Say $\ig_1^{a_1} = \ig_2^{a_2}\cdots \ig_t^{a_t}$ with $a_1 \neq 0$. Consider the system
\begin{equation}\label{eq:modsystem}
\ig_i^{a_i}(\point)^{ n} = \tg_i^{a_i}(\point)\qquad {\text{\rm for~all}} \qquad i = 1,\ldots,t.
\end{equation}
It is clear that the set $E$ of solutions $(n,\point)$ to \eqref{eq:diagonalsystem}
is a subset of the set $E'$ of solutions to \eqref{eq:modsystem}. Furthermore, for $(n,\point) \in E'$ we have
$\tg_1^{a_1}(\point) = \ig_1^{a_1n}(\point) = (\ig_2^{a_2}\cdots \ig_t^{a_t})^n(\point)
= \tg_2^{a_2}\cdots \tg_t^{a_t}(\point)$.

We conclude that if $\tg_1^{a_1} \neq \tg_2^{a_2}\cdots \tg_t^{a_t}$,
then there can only be finitely many $s$ solving \eqref{eq:modsystem}, and thus the original problem, and so the problem becomes decidable. In case
$\gamma_1^{a_1} = \tg_2^{a_2}\cdots \tg_t^{a_t}$,
the first equation in \eqref{eq:modsystem} is redundant, and we may remove it.
By repeating the process we obtain a system of the form
\eqref{eq:modsystem} where the $\ig_i$ are multiplicatively independent,
and the solutions to it contain all the solutions to the original system.

Now we face the problem of separating solutions to \eqref{eq:diagonalsystem} from the solutions to \eqref{eq:modsystem}.
If either of the sets $\{n \colon (n,\point) \in E'\}$ or
$\{s \colon (n,\point) \in E'\}$ is finite and effectively enumerable, we
can clearly decide whether $E$ is empty or not, utilising either Kannan--Lipton or
\autoref{lemma:fixedn} finitely many times. This happens in the majority of cases.
In the case that both the above sets are unbounded, we bound the suitable $n$ in case $\rank\{\ig_1,\ldots,\ig_t\} \geq 2$ in \Cref{sec:rank2}. For the case of
$\rank\{\ig_1,\ldots,\ig_t\} \leq 1$ we give a
separate argument in \Cref{sec:rank1case}.

\section{The case of \texorpdfstring{$\rank\langle \ig_1,\ldots,\ig_t\rangle \ge 2$}{the rank of eigenvalues is at least 2}}

\label{sec:rank2}

In this section we recall and prove the following \cref{lemma:main}:\mainlemma*  

By
\autoref{lemma:cleanuplemma} we may assume that none of $\ig_i$'s are
identically zero or a root of unity.

\subsection{All \texorpdfstring{$\ig_i$'s}{lambdas} constant}

In this section we sketch the proof for the case where $\ig_i$'s are all constant. We reduce to a special case of the Skolem problem, but show  that this particular instance is decidable. Since $\rank \ge 2$, we have at least two constraints and so there are constants $\ig_1$ and $\ig_2$, not roots of unity, and multiplicatively independent, with $\tg_1,\tg_2$ not constant.

\begin{restatable}{lemma}{twoeigenconstantgammanot}\label{lemma:twoeigenconstantgammanot}
Suppose $\ig_1$, $\ig_2$ are constant, not roots of unity,  multiplicatively independent, and that $\gamma_1,\gamma_2$ are non-constant functions. Then
the system $\ig_1^n = \tg_1(s)$, $\ig_2^n = \tg_2(s)$ has only finitely many solutions.
\end{restatable}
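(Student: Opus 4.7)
My plan is to reduce the system to a non-degenerate linear recurrence sequence and then invoke the Skolem--Mahler--Lech theorem. The starting observation is that $\gamma_1$ and $\gamma_2$ are algebraic functions of the single variable $s$, so they both lie in an algebraic extension of $\mathbb{Q}(s)$, which has transcendence degree one over $\overline{\mathbb{Q}}$. Hence $\gamma_1$ and $\gamma_2$ must be algebraically dependent over $\overline{\mathbb{Q}}$: by standard elimination---for instance, taking the resultant in $s$ of polynomials witnessing the algebraicity of $\gamma_1$ and $\gamma_2$---one obtains a non-zero $P(X,Y) \in \overline{\mathbb{Q}}[X,Y]$ such that $P(\gamma_1(s),\gamma_2(s)) \equiv 0$. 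Any candidate solution $(n,s)$ then forces $P(\lambda_1^n,\lambda_2^n) = 0$.

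Writing $P(X,Y) = \sum_{(i,j) \in S} a_{ij} X^i Y^j$ with all $a_{ij} \neq 0$, I would set $u_n := P(\lambda_1^n,\lambda_2^n) = \sum_{(i,j) \in S} a_{ij} \mu_{ij}^n$ with $\mu_{ij} := \lambda_1^i \lambda_2^j$. The multiplicative independence of $\lambda_1,\lambda_2$ makes $(i,j) \mapsto \mu_{ij}$ injective on $\mathbb{Z}^2$, so the characteristic roots $\mu_{ij}$ are pairwise distinct; moreover a ratio $\mu_{ij}/\mu_{i'j'} = \lambda_1^{i-i'}\lambda_2^{j-j'}$ being a root of unity would, upon raising to a suitable positive power, yield a non-trivial multiplicative relation on $\lambda_1,\lambda_2$, a contradiction. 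Hence $(u_n)$ is a non-degenerate LRS, and a Vandermonde argument on the distinct $\mu_{ij}$ together with the non-zero $a_{ij}$ ensures it is not identically zero.

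By the Skolem--Mahler--Lech theorem applied in its non-degenerate form, the zero set $\{n : u_n = 0\}$ is finite. For every such $n$, the equation $\lambda_1^n = \gamma_1(s)$ admits only finitely many $s$, since $\gamma_1$ is a non-constant univariate algebraic function and so takes each value only finitely often. Hence the system $\lambda_1^n = \gamma_1(s),\ \lambda_2^n = \gamma_2(s)$ admits only finitely many pairs $(n,s)$.

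The main conceptual obstacle, I expect, lies in the second paragraph: observing that it is multiplicative independence of $\lambda_1,\lambda_2$---strictly stronger than the hypothesis that neither is a root of unity---that rules out root-of-unity ratios among the $\mu_{ij}$ and thereby yields non-degeneracy of $(u_n)$. A small sanity-check that should also be made is that $P$ cannot reduce to a single monomial $aX^iY^j$: the identity $\gamma_1^i\gamma_2^j \equiv 0$ would force $\gamma_1$ or $\gamma_2$ to vanish identically, contradicting the non-constant hypothesis (an algebraic function has only finitely many zeros, so a product of non-zero algebraic functions is not identically zero).
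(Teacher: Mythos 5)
Your reduction is exactly the paper's: eliminate $s$ from the minimal polynomials of $\gamma_1,\gamma_2$ to obtain a non-zero $P\in\overline{\Q}[y_1,y_2]$ with $P(\gamma_1(s),\gamma_2(s))\equiv 0$, and study the linear recurrence sequence $u_n=P(\lambda_1^n,\lambda_2^n)=\sum a_{ij}(\lambda_1^i\lambda_2^j)^n$. Your observations that multiplicative independence of $\lambda_1,\lambda_2$ makes the characteristic roots pairwise distinct with no root-of-unity ratios, and that a Vandermonde argument shows $(u_n)$ is not identically zero, are correct. As a proof of the literal wording ``has only finitely many solutions'', the argument is sound.

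The gap is effectiveness. The Skolem--Mahler--Lech theorem is non-effective: it says the exceptional zero set is finite but gives no way to compute it or even bound its elements---deciding whether that finite set is empty for a general LRS is precisely the open Skolem Problem. The paper flags this explicitly (``the elements of the finite set are not known to be effectively enumerable''), and it cannot stop where you stop, because this lemma feeds into \cref{lemma:main}, which asserts \emph{decidability}; for that one must effectively enumerate the finitely many $n$ and then the finitely many $s$ for each. The entire technical content of the paper's proof is the replacement of Skolem--Mahler--Lech by an explicitly computable $n_0$ with $u_n\neq 0$ for all $n\geq n_0$, obtained by a case analysis: $|\lambda_1|,|\lambda_2|$ multiplicatively independent (unique dominant root), multiplicatively dependent with common modulus $\alpha\neq 1$ (Baker's theorem on linear forms in logarithms applied to the top homogeneous part of $P$), modulus $1$ with $\lambda_1$ an algebraic integer (pass to a Galois conjugate of modulus $>1$), and modulus $1$ otherwise (prime-ideal valuations together with Yu's $p$-adic analogue of Baker's theorem). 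Without such an effective substitute for Skolem--Mahler--Lech, your proof establishes finiteness of the solution set but not the computability that the lemma is used to deliver.
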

\begin{proof}[Proof Sketch]

Let the minimal polynomials over $\algebraics{}[x,y]$ of $\tg_1$ and $\tg_2$ be $P_1$ and $P_2$ with
$P_i\in \overline{\Q}[x,y_i]$. The polynomials $P_1$ and $P_2$ have no common factors as elements of $\overline{\Q}[x,y_1,y_2]$. Eliminating $x$ from these polynomials we get a non-zero polynomial $P \in \overline{\Q}[y_1,y_2]$ for which $P(\alpha_1,\alpha_2)=0$ for all $\alpha_1 = \tg_1(s)$ and $\alpha_2 = \tg_2(s)$, $s \in U$. 
The sequence
$(u_n )_{n=0}^{\infty}$, with
\[ u_n = P(\ig_1^n,\ig_2^n) = \sum_{k,\ell} a_{k,\ell} (\ig_1^{k}\ig_2^{\ell})^n,\]
$a_{k,\ell} \in \overline{\mathbb{Q}}$, is a linear recurrence sequence
over $\algebraics$, and we wish to characterise those $n$ for which
$u_n = 0$.
By the famous Skolem--Mahler--Lech theorem (see, e.g., \cite{cassels1986local}), the set of such $n$
is the union of a finite set and finitely many arithmetic progressions. Furthermore, it is decidable whether such a sequence admits infinitely many
elements, and all the arithmetic progressions can be effectively constructed \cite{BerstelM1976deux}. But, in general, the elements of the finite set are not known
to be effectively enumerable---solving the Skolem problem for arbitrary LRS
essentially reduces to checking whether this finite set is empty. However, the case at hand can be handled using now standard techniques involving powerful
results from transcendental number theory, such as Baker's theorem for
linear forms in logarithms, and similar results on
linear forms in $p$-adic logarithms (see, e.g., \cite{MignotteST84,Vereshchagin85}). We show there exists an effectively computable $n_0 \in \N$ such that $u_n \neq 0$ for all $n\geq n_0$. We give a brief sketch (a detailed proof appears in~\cref{subsec:constantsRank2}):

Assuming first that $|\ig_1|$ and $|\ig_2|$
are multiplicatively independent, it is evident
that the modulus of $u_n$ grows as $c\alpha^n + o(\alpha^n)$ for some $c\in \R_+$, where $\alpha$
is the maximal modulus of the terms
$\ig_1^k\ig_2^{\ell}$ (there is only one term with this modulus). One can straightforwardly compute an upper bound on any $n$ for which $u_n = 0$.

If the values $|\ig_1|$ and $|\ig_2|$ are
multiplicatively dependent but neither is of modulus $1$, we may again use an asymptotic argument. For this, we need Baker's theorem on linear forms in logarithms to show that a (related) sequence grows in modulus as
$c\alpha^n/n^D + o(\beta^n)$, with $\beta < \alpha$ and effectively computable constants $c$, $D$. On the other hand, if $|\ig_i|=1$ but
$\ig_1$ is an algebraic integer (a root of a monic polynomial with coefficients in $\Z$), then it will have a Galois conjugate (roots of the minimal polynomial of $\ig_1$) $\tilde{\ig_1}$ with $|\tilde{\ig_1}| > 1$. Hence a suitable Galois conjugate of the sequence $(u_n)$ will be of the form considered in the previous case, and the zeros of $(u_n)$ and $(\tilde{u}_n)$ coincide. The asymptotic argument
can be applied to $(\tilde{u}_n)$.

The final case is when $\ig_1$ and $\ig_2$
are not algebraic integers. We turn to the theory of prime ideal decompositions of the numbers $\ig$ and argue, employing a version of Baker's theorem for $p$-adic valuations (see, e.g., \cite{Vereshchagin85}) to conclude similarly that the $n$
for which $u_n = 0$ are effectively bounded above.
\end{proof}

\subsection{At least one non-constant}\label{subsec:non-constantRank2}

Henceforth, we can assume that at least one
$\ig_i$ is non-constant. We may take the $\ig_i$'s to
be multiplicatively independent with $t\geq 2$, otherwise consider a
multiplicatively independent subset of the functions: it always has at
least two elements by the assumption on $\rank$, and, furthermore, at least
one of them is not constant. The removal of equations will be done as
described in \cref{sec:multiplicativedependencies}; here we show that there
are only finitely many $n$ giving solutions $(n,s)$ to the reduced system, so we need not worry about creating too many new solutions.

The following theorems are the main technical results from the literature
utilised in the arguments that follow, formulated in a way to suit our needs. Here $\mathcal{C}(\algebraics)$ denotes the set of algebraic points in $\algebraics{}^d$ on an algebraic set $\mathcal{C} \subseteq  \C^d$.
\begin{theorem}[{\cite[Theorem~2]{BMZ1}}]\label{thm:BMZfinManyPoints}
Let $\mathcal{C}$ be an \emph{absolutely
irreducible} (irreducible in $\algebraics{(x)}$) curve defined over $\overline{\Q}$ in $\C^d$.
Assume that the coordinates of the curve are multiplicatively independent modulo constants (i.e., the points $(x_1,\ldots,x_d)\in \mathcal{C}(\overline{\Q})$ do not satisfy
$x_1^{a_1}\cdots x_d^{a_d} = c$ identically for any $(a_1,\ldots,a_d)\in \Z^d \setminus \vec{0}$, $c\in \overline{\Q}$). Then the points
$(x_1,\ldots,x_d) \in \mathcal{C}(\overline{\Q})$ for which
$x_1$, \ldots, $x_d$ satisfy at least
two independent multiplicative relations form a finite set.
\end{theorem}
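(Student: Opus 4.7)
The plan is to recast the hypothesis geometrically: a point $p = (x_1,\ldots,x_d) \in \mathcal{C}(\overline{\Q})$ whose coordinates satisfy two independent multiplicative relations lies in an algebraic subgroup $H \subseteq (\C^*)^d$ of codimension at least two. The goal then becomes to show that
\[
 \mathcal{C}(\overline{\Q}) \cap \bigcup_{\operatorname{codim} H \,\geq\, 2} H
\]
is finite, where the union ranges over all algebraic subgroups of $(\C^*)^d$ of codimension at least two. By Northcott's theorem it suffices to produce (i) a uniform height bound and (ii) a uniform bound on the algebraic degree of every such intersection point.

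First I would establish the \emph{uniform height bound}: there exists an effective constant $C = C(\mathcal{C})$ such that every $p$ in the intersection above satisfies $h(x_1) + \cdots + h(x_d) \leq C$. The intuitive content is that if the coordinates of points on $\mathcal{C}$ are multiplicatively independent modulo constants, then the curve is not itself contained in any proper algebraic subgroup, and imposing two extra independent multiplicative relations forces the point to be ``close to torsion''. Concretely I would attempt a Bogomolov-style small-points argument: use a Vojta-type height inequality combined with a zero-estimate/interpolation lemma (in the spirit of the proof of the Dobrowolski lower bound for the Mahler measure) to convert the extra relations into an upper bound on $h(p)$ independent of the particular subgroup $H$.

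Next I would bound the \emph{degree} $[\Q(p):\Q]$. Fixing $H$ of codimension at least two, the intersection $\mathcal{C} \cap H$ is generically zero-dimensional, and Bezout-type bounds yield a degree bound for each point in terms of the degrees of equations cutting out $H$. A careful accounting, stratifying the subgroups by codimension and degree and using the independence-modulo-constants hypothesis to discard those containing $\mathcal{C}$, would show that only finitely many subgroups can contribute and that the degrees of contributed points are uniformly bounded. Combined with the height bound, Northcott's theorem then gives finiteness.

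The main obstacle I foresee is step (i), the uniform height bound. A naive Lehmer-type estimate is insufficient, because as $H$ varies the relevant points could a priori form an unbounded family of heights, and the multiplicative-independence-modulo-constants hypothesis is essential—without it one could take $\mathcal{C} \subseteq H_0$ for some subgroup $H_0$ and the theorem would fail. Any viable proof must therefore extract genuine quantitative information from this hypothesis, and I expect the argument to require delicate zero-estimates of the kind developed by Bombieri--Masser--Zannier; this is the technical heart of their paper and the step least amenable to a short routine calculation.
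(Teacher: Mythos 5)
This statement is \cite[Theorem~2]{BMZ1}: the paper quotes it as an external result and gives no proof of its own (only a remark that the finite set is effectively computable), so there is no in-paper argument to compare against. Your high-level architecture --- a uniform height bound for the relevant points, a uniform degree bound, and then Northcott --- is in fact the architecture of Bombieri--Masser--Zannier's actual proof, so the plan is pointed in the right direction. But as written the proposal has two genuine gaps, one in each of your steps (i) and (ii).

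For step (i), you only assert the height bound and gesture at Vojta-type inequalities and Bogomolov-type small-points arguments. That is not how the bound is obtained, and it is the harder half of the theorem (it is Theorem~1 of \cite{BMZ1}): the proof selects, for a point lying in a proper algebraic subgroup, a multiplicative relation with \emph{small} exponent vector via geometry of numbers, and then plays this off against the quasi-equivalence of the height functions $h(x_1^{a_1}\cdots x_d^{a_d})$ restricted to the curve; the hypothesis of multiplicative independence modulo constants is exactly what guarantees these monomials are non-constant on $\mathcal{C}$ so that the comparison of heights is available. Zero estimates of Dobrowolski type enter in step (ii), not here. For step (ii), your proposed route --- Bezout plus ``only finitely many subgroups can contribute'' --- does not work: the union is over \emph{infinitely many} subgroups $H$ of codimension two whose degrees are unbounded (in this very paper the application intersects $\mathcal{C}$ with the infinite family $G_n$, $n\in\mathbb{N}$), so stratifying by degree and discarding subgroups containing $\mathcal{C}$ leaves infinitely many candidates and no uniform Bezout bound on the degrees of the intersection points. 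The correct mechanism is a generalized Lehmer/Dobrowolski-type \emph{lower} bound for the height of a non-degenerate algebraic point in terms of its degree; combined with the upper bound from step (i) and the two independent relations, this caps $[\mathbb{Q}(p):\mathbb{Q}]$, and only then does Northcott apply. So the skeleton is right, but both load-bearing steps are missing or misattributed, and the proposal cannot be accepted as a proof; citing \cite{BMZ1} directly, as the paper does, is the appropriate resolution.
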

We note that given the curve $\mathcal{C}$,
the finite set of points $(x_1,\ldots,x_d)$ on
$\mathcal{C}$ for which $x_1,\ldots,x_d$, satisfy at least two independent multiplicative relations
can be effectively constructed.
Indeed, this is explicitly mentioned in the last
paragraph of the introduction of \cite{BMZ1}:
the proof goes by showing effective bounds on the degree and height of such points.

\autoref{thm:BMZfinManyPoints} holds for curves in $\C^d$ for arbitrary $d$. If one allows the coordinates on the curve
to satisfy a non-trivial multiplicative relation, then there can be infinitely many such points \cite{BMZ1}. On the other hand, in
\cite{BMZ2} Bombieri, Masser, and Zannier consider relaxing the assumption of multiplicative independence modulo constants to
multiplicative independence and conjecture that
the conclusion of the above theorem still holds \cite[Conj.~A]{BMZ2}. Supporting the conjecture, \cite{BMZ2}
proves a theorem which will suffice for us.

\begin{theorem}\label{thm:BMZ2finManyPoints}
Let $\mathcal{C}$ be an absolutely irreducible curve in $\C^d$ defined over $\overline{\Q}$. Assume that the the coordinates of the curve are multiplicatively independent, but $\mathcal{C}$ is contained in a set
of the form $\vec{b}H$, where $H$ is the set of points in $\algebraics{}^d$ satisfying at least $d-3$ independent multiplicative relations%
\footnote{With $b = (b_1,\ldots,b_k)$, here $\vec{b}H = 
\{(b_1 x_1,\ldots,b_d x_d) \colon (x_1,\ldots,x_k)\in H
\}$ is a coset of a \emph{subgroup of dimension at most $3$} in the terminology of \cite{BMZ2}.}. Then the points
$(x_1,\ldots,x_d) \in \mathcal{C}(\overline{\Q})$ for which
$x_1$, \ldots, $x_d$ satisfy at least
two independent multiplicative relations form a finite set.
\end{theorem}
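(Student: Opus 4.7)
The strategy is to reduce to Theorem~\ref{thm:BMZfinManyPoints} by pulling $\mathcal{C}$ back through a parametrization of $H$. Since $H$ is the solution set of at least $d-3$ independent multiplicative relations, it has dimension $r \leq 3$ and can be written as the image of a monomial map $\psi \colon (\C^*)^r \to (\C^*)^d$ of the form $\psi(y)_i = \prod_{j=1}^r y_j^{a_{ij}}$ for an integer matrix $A = (a_{ij})$ of full column rank. After translating by $\vec{b}^{-1}$, the curve $\tilde{\mathcal{C}} := \vec{b}^{-1}\mathcal{C} \subseteq H$ is the image under $\psi$ of a unique absolutely irreducible curve $\mathcal{C}' \subseteq (\C^*)^r$.

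A preliminary reduction lets us assume the coordinates $y_1,\ldots,y_r$ of $\mathcal{C}'$ are multiplicatively independent modulo constants. Indeed, any identical relation $\prod_j y_j^{c_j} = c$ on $\mathcal{C}'$ would pull forward to a multiplicative relation modulo constants on the coordinates of $\mathcal{C}$ that is not implied by the relations defining $H$, showing that $\mathcal{C}$ sits inside a proper subcoset of $\vec{b}H$; iterating shrinks $H$ while preserving $r \leq 3$ and absolute irreducibility of the pulled-back curve.

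Given an algebraic point $x = \vec{b}\cdot\psi(y) \in \mathcal{C}(\overline{\Q})$ satisfying $\prod_i x_i^{e_{ki}} = 1$ for $k=1,2$ with $\Z$-linearly independent $e_k \in \Z^d$, a direct computation gives $\prod_j y_j^{c_{kj}} = \prod_i b_i^{-e_{ki}}$, where $c_k := A^\top e_k \in \Z^r$. Multiplicative independence of the coordinates of $\mathcal{C}$ then forces $c_1, c_2$ to be $\Z$-linearly independent: a dependence $\alpha c_1 + \beta c_2 = 0$ would place $f := \alpha e_1 + \beta e_2 \neq 0$ in the left kernel of $A$ while simultaneously forcing $\prod_i b_i^{f_i} = 1$, yielding an identical multiplicative relation $\prod_i x_i^{f_i} = 1$ on $\mathcal{C}$, a contradiction.

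The principal obstacle is that the lifted relations on $y$ are multiplicative only \emph{modulo constants}, whereas Theorem~\ref{thm:BMZfinManyPoints} delivers its finiteness conclusion only for points satisfying genuine multiplicative relations. I would bridge this gap by exploiting that each constant $d_k := \prod_i b_i^{-e_{ki}}$ lies in the finitely generated multiplicative group $\Gamma \subset \overline{\Q}^*$ generated by $b_1,\ldots,b_d$: splitting $\Gamma$ into cosets of its torsion subgroup and extracting a common $N$-th root from the torsion-free part on finitely many branches converts the relations modulo constants into genuine multiplicative relations on a finite covering of $\mathcal{C}'$. Theorem~\ref{thm:BMZfinManyPoints} applied on each branch to the covering curve in $(\C^*)^r$ with $r \leq 3$ then yields finitely many candidate $y$'s, and hence finitely many $x = \vec{b}\cdot\psi(y) \in \mathcal{C}(\overline{\Q})$, completing the argument.
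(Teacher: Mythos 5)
First, a point of context: the paper does not prove this statement at all. It is quoted as an external result of Bombieri, Masser and Zannier \cite{BMZ2}, with effectivity supplied by the remarks and footnote following it (and by \cite{Gy,liardet} for the low-dimensional case). So there is no in-paper proof to compare against; what you have attempted is a proof of the imported theorem itself.

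Your reduction is correct and genuinely useful up to a point: parametrizing the connected component of $H$ by a monomial map $\psi$ with exponent matrix $A$, pulling $\vec b^{-1}\mathcal{C}$ back to an absolutely irreducible curve $\mathcal{C}'\subseteq(\C^*)^r$ with $r\le 3$, reducing to the case where the coordinates of $\mathcal{C}'$ are multiplicatively independent modulo constants, and showing that $c_k=A^\top e_k$ are $\Z$-independent (your kernel argument, using that $f\in\ker A^\top$ with $\prod_i b_i^{f_i}=1$ would give an identical relation on $\mathcal{C}$) are all sound. The gap is in the last step. The lifted relations read $\prod_j y_j^{c_{kj}}=d_k$ with $d_k=\prod_i b_i^{-e_{ki}}$, and the pair $(e_1,e_2)$ \emph{varies with the point}: as it ranges over $\Z^d$, the constants $d_k$ range over the full group $\Gamma=\langle b_1,\dots,b_d\rangle$, and even for a fixed $c_k$ the possible $d_k$ form a coset of the image of $\ker A^\top$ in $\Gamma$, which your own kernel argument shows is an \emph{infinite} subgroup of rank $d-r\ge d-3\ge 1$ (for $d\ge 4$ the $b_i$ cannot all be roots of unity, else $\mathcal{C}$ would lie in a proper algebraic subgroup). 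Hence there is no ``finite union of branches'' on which a common $N$-th root turns these into genuine relations; and if instead you adjoin generators of $\Gamma$ as extra constant coordinates to make the relations genuine, the augmented curve has constant coordinates and is therefore multiplicatively dependent modulo constants, so \cref{thm:BMZfinManyPoints} does not apply to it. What is actually needed at this juncture is finiteness of the intersection of $\mathcal{C}'$ with the division group of $\Gamma^r$ (Laurent/Liardet, the Mordell--Lang statement for tori) when $r=2$, and, for $r=3$, the new bounded-height and Lehmer-type arguments that constitute the main content of \cite{BMZ2}. That is precisely why the theorem carries the hypothesis $\dim H\le 3$ and why it is not a formal consequence of \cite[Theorem~2]{BMZ1}.
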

Again the finite set of points can be effectively computed.%
\footnote{In \cite{BMZ1,BMZ2} the proof is given for $d \geq 4$, and is constructive, while the case of $d=3$ is attributed to a (non-constructive) result of Liardet \cite{liardet1974conjecture}. A completely effective proof of the case can be found in \cite{Gy}.}

Let us proceed case by case.

\begin{restatable}{lemma}{dependentModCNonconstant}
\label{lem:dependentModCNonconstant}
Assume that $\{\ig_1,\ldots,\ig_t\}$ is multiplicatively dependent modulo constants, but is multiplicatively independent. Then there exists a computable constant
$n_0$ such that system \eqref{eq:mainlemmaequations} admits no solutions for $n > n_0$.
\end{restatable}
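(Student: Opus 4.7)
The plan is to leverage the multiplicative relation modulo constants to derive an auxiliary constraint on the $\gamma_i$ at any solution, and then bound $n$ using Weil-height estimates coming from a non-constant eigenvalue. This bypasses the Bombieri--Masser--Zannier machinery entirely: the relation is put to direct use.

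By hypothesis there exist $(a_1,\ldots,a_t)\in\Z^t\setminus\{0\}$ and $c\in\overline{\Q}^*$ with $\prod_{i=1}^t \ig_i^{a_i}=c$ identically. I would first observe that $c$ cannot be a root of unity: if $c^k=1$ for some $k\geq 1$, then $\prod_i \ig_i^{ka_i}=1$ would be a non-trivial multiplicative relation among $\ig_1,\ldots,\ig_t$, contradicting the multiplicative-independence assumption. Hence $h(c)>0$. Raising the identity to the $n$-th power and substituting $\ig_i(s)^n=\gamma_i(s)$ shows that any solution $(n,s)$ of \eqref{eq:mainlemmaequations} must also satisfy the auxiliary equation
\[
 \prod_{i=1}^t \gamma_i(s)^{a_i} = c^n.
\]

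Next I would bound $h(s)$ uniformly for large $n$. Pick a non-constant $\ig_j$, which exists by the standing assumption of \cref{subsec:non-constantRank2}. Taking Weil heights of $\ig_j(s)^n=\gamma_j(s)$ gives $n\cdot h(\ig_j(s))=h(\gamma_j(s))$. Applying \autoref{lem:1} to the non-constant $\ig_j$ yields $h(\ig_j(s))\geq c_1 h(s)-c_2$, while an upper bound $h(\gamma_j(s))\leq c_3 h(s)+c_4$ follows either from \autoref{lem:1} applied to $\gamma_j$ (if non-constant) or trivially (if $\gamma_j$ is constant). Combining the inequalities and rearranging,
\[
 h(s)(n c_1 - c_3) \;\leq\; n c_2 + c_4 .
\]
For $n\geq n_1:=\lceil c_3/c_1\rceil+1$ the left-hand coefficient is positive, so $h(s)\leq (nc_2+c_4)/(nc_1-c_3)$, a function of $n$ that decreases to $c_2/c_1$ and is therefore uniformly dominated by an effective constant $B$ attained at $n=n_1$.

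Finally I would feed this height bound back into the auxiliary equation to bound $n$ itself. Using the standard height inequalities,
\[
 n\cdot h(c) \;=\; h(c^n) \;\leq\; \sum_{i=1}^t |a_i|\cdot h(\gamma_i(s)),
\]
and each $h(\gamma_i(s))$ is either a fixed constant (if $\gamma_i$ is constant) or bounded by $c_3^{(i)} h(s)+c_4^{(i)}$ via \autoref{lem:1}; in both cases, together with $h(s)\leq B$, we obtain $h(\gamma_i(s))\leq M_i$ for an effective constant $M_i$. Hence $n\cdot h(c)\leq M:=\sum_i |a_i| M_i$, and since $h(c)>0$ this gives $n\leq M/h(c)$. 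Setting $n_0:=\max\bigl(n_1,\lceil M/h(c)\rceil\bigr)$ produces the required effective bound: no solution can exist with $n>n_0$. The one subtle step—around which the whole argument pivots—is the exclusion of $c$ being a root of unity, which is guaranteed precisely by the multiplicative-independence hypothesis on the $\ig_i$.
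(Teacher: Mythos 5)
Your proposal is correct and follows essentially the same route as the paper's proof: observe that $c$ cannot be a root of unity (so $h(c)>0$), derive the auxiliary identity $c^n=\prod_i \tg_i^{a_i}(s)$, bound $h(s)$ for large $n$ via \autoref{lem:1} applied to a non-constant eigenvalue, and then bound $n$ by taking heights in the auxiliary identity. The only differences are cosmetic: the paper first disposes of the case where $\prod_i\tg_i^{a_i}$ is constant via Kannan--Lipton and invokes the helper \autoref{lem:nBoundOrHsBound} (combining equations so that both sides are non-constant), whereas you inline that height bound and handle a constant $\tg_j$ by the trivial upper estimate.
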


We may now focus on sets $\{\ig_1,\ldots,\ig_t\}$ that are multiplicatively independent modulo constants. We still might have multiplicative dependencies between the $\ig_i$ and $\tg_i$. We take care of these cases in the remainder of this section.

\begin{lemma}\label{lem:multiplicativelyindependetTargetsAndEigenvalues}
Assume that $\{\ig_1, \ig_2, \tg_1, \tg_2\}$ is multiplicatively independent. Then system \eqref{eq:mainlemmaequations} admits only finitely many solutions, all of which can be effectively enumerated.
\end{lemma}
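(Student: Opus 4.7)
The plan is to exploit the geometric viewpoint from the introduction: view the quadruple of eigenvalues and target ratios as defining a parametric algebraic curve
\[
  \mathcal{C} = \{(\lambda_1(s),\lambda_2(s),\gamma_1(s),\gamma_2(s)) : s \in U\} \subseteq \mathbb{C}^4,
\]
and observe that any solution $(n,s)$ of the system with $n \geq 1$ produces a point of $\mathcal{C}(\overline{\mathbb{Q}})$ whose coordinates satisfy two $\mathbb{Z}$-linearly independent multiplicative relations, namely those given by the exponent vectors $(n,0,-1,0)$ and $(0,n,0,-1)$. The Zariski closure of $\mathcal{C}$ is an algebraic curve defined over $\overline{\mathbb{Q}}$, and it is irreducible over $\overline{\mathbb{Q}}$ since it arises as the image of an irreducible one-parameter algebraic parametrization; we may therefore treat $\mathcal{C}$ as absolutely irreducible.

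I would then invoke one of the Bombieri--Masser--Zannier theorems, depending on whether the set $\{\lambda_1,\lambda_2,\gamma_1,\gamma_2\}$ is multiplicatively independent \emph{modulo constants}, which is decidable by \Cref{prop:multiplicative}. If it is, \Cref{thm:BMZfinManyPoints} applies directly and yields an effectively computable finite set of points of $\mathcal{C}(\overline{\mathbb{Q}})$ satisfying two independent multiplicative relations. Otherwise, the assumed multiplicative independence combined with a nontrivial relation modulo constants $\lambda_1^{a_1}\lambda_2^{a_2}\gamma_1^{a_3}\gamma_2^{a_4} = c$ forces $c \neq 1$ and places $\mathcal{C}$ inside the coset of the codimension-one subgroup of $(\mathbb{C}^*)^4$ cut out by this exponent vector; this subgroup has dimension $3 = d-1$, so it meets the hypothesis ``at least $d-3 = 1$ independent multiplicative relations'' of \Cref{thm:BMZ2finManyPoints}, which again yields an effectively computable finite exceptional set.

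From that finite set of points I pull back to a finite set of candidate algebraic parameter values $s$: since at least one $\lambda_i$ is non-constant by the standing assumption of \Cref{subsec:non-constantRank2}, each prescribed coordinate value $x_i = \lambda_i(s)$ is attained by only finitely many algebraic $s$, all of which can be computed. For each candidate $s_0$ the remaining task $\lambda_1(s_0)^n = \gamma_1(s_0) \wedge \lambda_2(s_0)^n = \gamma_2(s_0)$ is a non-parametric two-dimensional instance over $\overline{\mathbb{Q}}$, solvable by the algebraic variant of the Kannan--Lipton algorithm discussed in \Cref{sec:related}; in the degenerate case where both $\lambda_i(s_0)$ are roots of unity the set of valid $n$ is a (possibly empty) arithmetic progression whose generators are effectively computable, while otherwise this set is empty or a singleton.

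The main obstacle is the correct invocation of \Cref{thm:BMZ2finManyPoints}: one must verify that a single nontrivial multiplicative relation modulo constants really does embed $\mathcal{C}$ in a coset of a subgroup meeting the dimension hypothesis, and that the effectiveness content of the theorem allows explicit enumeration of the finite exceptional set in our parametric setting. The secondary bookkeeping---absolute irreducibility of $\mathcal{C}$, finite-to-one pullback from curve points to parameter values, and the treatment of root-of-unity degeneracies at each candidate $s_0$---is routine given the tools already established in \Cref{sec:prelim,sec:parametricjordan}.
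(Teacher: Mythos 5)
Your proposal is correct and follows essentially the same route as the paper: the same curve in $\mathbb{C}^4$ (the paper builds it by eliminating $x$ from the minimal polynomials and working with its absolutely irreducible components, whereas you take the Zariski closure of the parametric image, which is equivalent here), the same pair of independent relations $(n,0,-1,0)$ and $(0,n,0,-1)$, and the same case split between \Cref{thm:BMZfinManyPoints} and \Cref{thm:BMZ2finManyPoints} according to multiplicative independence modulo constants. Your explicit pullback from curve points to parameter values and the per-$s_0$ Kannan--Lipton check are left implicit in the paper's proof (they are absorbed into the general machinery of \Cref{sec:multiplicativedependencies} and \Cref{lemma:fixedn}), but they match the intended argument.
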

\begin{proof}
We show that the set of $s$ for which the equality can hold is finite and
such $s$ can be computed. We employ the powerful Theorems \ref{thm:BMZfinManyPoints} and \ref{thm:BMZ2finManyPoints} of Bombieri, Masser, and Zannier, from which the claim is immediate. We first prime the situation as follows.

Let that $\ig_1$, $\ig_2$, $\tg_1$, $\tg_2$ have minimal polynomials
$P_1 \in \Q[x,x_1]$, $P_2 \in \Q[x,x_2]$, $P_3 \in \Q[x,x_3]$,
$P_4 \in  \Q[x,x_4]$, respectively. Eliminating $x$ from $P_1$ and $P_2$ (resp., $P_3$, $P_4$), we get a
polynomial $Q_1 \in \Q[x_1,x_2]$ (resp., $Q_2 \in \Q[x_1,x_3]$, $Q_3\in \Q[x_1,x_4]$) for which we have $Q_1(\ig_1(x),\ig_2(x)) = 0$ (resp., $Q_2(\ig_1(x),\tg_1(x))=0$, $Q_3(\ig_1(x),\tg_2(x))=0$) for all $x$. Let $\mathcal{C}$ be the curve defined by
$\mathcal{C}:= \{(x_1,x_2,x_3,x_4)\in \C^4 \colon Q_1(x_1,x_2) = Q_2(x_1,x_3) = Q_3(x_1,x_4) = 0\}$ and consider any of its finitely many absolutely irreducible components $\mathcal{C}'$.
We are now interested in the pairs of multiplicative relations $(n,0,-1,0)$ and $(0,n,0,-1)$ (corresponding to $x_1^n = x_3$,
$x_2^n = x_4$), for $n \geq 1$, along the curve $\mathcal{C}'$. Indeed, for any fixed
$n$, the two relations are independent in
$\algebraics^4$, i.e., neither is a consequence of the other, as they involve disjoint sets of coordinates.

First assume that $\ig_1,\ig_2,\tg_1,\tg_2$ are multiplicatively independent modulo constants. Then so are the points on the curve
$\mathcal{C}'$, and the result follows from \autoref{thm:BMZfinManyPoints}
as the result is constructive.

Otherwise $\ig_1,\ig_2,\tg_1,\tg_2$ are multiplicatively
dependent modulo constants but are multiplicatively independent. Then $\mathcal{C}'$ is contained in a set of the form $\vec{b}H$,
where $H$ satisfies at least one multiplicative relation. Applying \autoref{thm:BMZ2finManyPoints}
with $d = 4$, the points on $\mathcal{C}'$
satisfying $x_1^n = x_3$ and $x_2^n = x_4$ for any $n\geq 1$, form an effectively constructable finite set.
\end{proof}

To complete the proof of \autoref{lemma:main}, we need to show
the claim holds when $\ig_1,\ig_2,\tg_1,\tg_2$ are
multiplicatively dependent, while $\ig_1$ and $\ig_2$ are multiplicatively
independent modulo constants. The proof goes along the same lines as in the above with some extra technicalities. 

\begin{restatable}{lemma}{lemmultdepindepmodconst}\label{lem:multdepindepmodconst}
Assume that $\ig_1$, $\ig_2$, $\tg_1$, $\tg_2$ are multiplicatively dependent, while $\ig_1$, $\ig_2$ are multiplicatively independent modulo constants. Then there exists a computable constant
$n_0$ such that system \eqref{eq:mainlemmaequations} admits no solutions for $n > n_0$.
\end{restatable}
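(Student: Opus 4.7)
My plan is to mimic the proof of \autoref{lem:multiplicativelyindependetTargetsAndEigenvalues} but to project onto a lower-dimensional ambient space, exploiting the additional multiplicative relation among $\ig_1, \ig_2, \tg_1, \tg_2$ to reduce the effective dimension of the curve on which the Bombieri--Masser--Zannier machinery is applied. Let $r := \rank\{\ig_1, \ig_2, \tg_1, \tg_2\}$, which is computable via \autoref{prop:multiplicative}; the hypotheses force $r \in \{2, 3\}$, and I split the proof accordingly.

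\emph{Case $r = 3$.} Since $\{\ig_1, \ig_2\}$ is itself independent, it extends to a maximal independent triple, and after relabeling I may assume that $\{\ig_1, \ig_2, \tg_1\}$ is multiplicatively independent. The lattice of multiplicative relations on the 4-tuple is then of rank 1, generated by a single vector $(a_1, a_2, a_3, a_4)$ with $a_4 \neq 0$ (otherwise this vector would reduce to a nontrivial relation on $\ig_1, \ig_2, \tg_1$). Substituting $\tg_1(s) = \ig_1(s)^n$ and $\tg_2(s) = \ig_2(s)^n$ at an arbitrary solution $(n,s)$ converts the direct equation $\ig_1^n = \tg_1$ and the generating relation into two multiplicative relations with exponent vectors $(n, 0, -1)$ and $(a_1, a_2+na_4, a_3)$ on the point $(\ig_1(s), \ig_2(s), \tg_1(s)) \in (\C^*)^3$. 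A determinant computation shows these are $\Z$-linearly independent except possibly for a single, explicit value of $n$. I then apply \autoref{thm:BMZ2finManyPoints} in dimension $d=3$ (where the coset hypothesis is vacuous, as $d-3=0$) to each absolutely irreducible component of the Zariski closure of $\{(\ig_1(x), \ig_2(x), \tg_1(x)) : x \in U\} \subseteq \C^3$; these components have multiplicatively independent coordinates because any relation would pull back to one on $\ig_1, \ig_2, \tg_1$. The theorem then yields a finite, effectively enumerable set of candidate values of $s$.

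\emph{Case $r = 2$.} Both $\tg_1$ and $\tg_2$ lie in the multiplicative group generated by $\ig_1, \ig_2$ up to taking roots: there exist nonzero $c_i$ and integers $a_{i,j}$ such that $\tg_i^{c_i} = \ig_1^{a_{i,1}} \ig_2^{a_{i,2}}$ (otherwise $\ig_1, \ig_2$ would themselves be multiplicatively dependent). Substituting $\tg_i(s) = \ig_i(s)^n$ yields two multiplicative relations with exponent vectors $(nc_1 - a_{1,1}, -a_{1,2})$ and $(-a_{2,1}, nc_2 - a_{2,2})$ on $(\ig_1(s), \ig_2(s)) \in (\C^*)^2$; their determinant is a quadratic in $n$ with leading coefficient $c_1 c_2 \neq 0$, so the two relations are independent for all but finitely many effectively computable values of $n$. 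Since $\ig_1, \ig_2$ are multiplicatively independent modulo constants by hypothesis, each absolutely irreducible component of the planar curve $\{(\ig_1(x), \ig_2(x))\} \subseteq \C^2$ inherits this property, and \autoref{thm:BMZfinManyPoints} yields a finite, effective set of candidate $s$.

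In either case, once $s$ is confined to a finite computable set of algebraic numbers, the induced instance at each such $s$ reduces to the algebraic-coefficient version of the Kannan--Lipton Orbit Problem, which is decidable, and a uniform bound $n_0$ on the exponent then follows. The main obstacle I anticipate is the rigorous verification that the absolutely irreducible components inherit multiplicative independence of their coordinates and that the finite sets produced by \autoref{thm:BMZfinManyPoints} and \autoref{thm:BMZ2finManyPoints} are effectively enumerable; both points ultimately rest on the effective height and degree bounds implicit in the proofs of \cite{BMZ1, BMZ2} already noted after those theorems.
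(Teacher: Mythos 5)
Your proposal is correct and follows essentially the same route as the paper: substitute the orbit equations into the extra multiplicative relation to obtain two $n$-dependent relations on the point $(\ig_1(s),\ig_2(s),\tg_1(s))$ (your $r=3$ case, the paper's first subcase) or on $(\ig_1(s),\ig_2(s))$ (your $r=2$ case, the paper's $\tg_1^b=\ig_1^{b_1}\ig_2^{b_2}$ subcase), check their independence via a determinant that is linear resp.\ quadratic in $n$, and invoke Bombieri--Masser--Zannier on the corresponding curve. The only cosmetic difference is that in dimension $3$ you apply \autoref{thm:BMZ2finManyPoints} uniformly (its coset hypothesis being vacuous for $d=3$), where the paper further distinguishes whether $\ig_1,\ig_2,\tg_1$ are multiplicatively independent modulo constants; both variants are valid.
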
 %

\section{The case of \texorpdfstring{$\rank\langle \ig_1,\ldots,\ig_t\rangle =1$}{the rank of eigenvalues is 1}}
\label{sec:rank1case}
This section recalls and sketches the proof of \cref{lemma:rank1case}.

\rankonelemma*

As sketched in \cref{sec:multiplicativedependencies}, since there is a multiplicative dependence between functions, we first show that, without loss of generality, there is a single equation $\lambda^n(s) = \gamma(s)$. %
\begin{restatable}{lemma}{wlogsingle}
Suppose $\rank\langle \ig_1,\ldots,\ig_t\rangle =1$, then whether there is a solution $(n,\point)\in {\mathbb N}\times\mathbb{R}\setminus \exceptionalpoints$ to $\ig_i^n(\point)=\tg_i(\point)$ for all $i=1,\ldots,t$ reduces to instances with $t=1$.
\end{restatable}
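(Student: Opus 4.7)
The plan is to use the rank-$1$ structure of $G:=\langle\ig_1,\ldots,\ig_t\rangle$ to express every $\ig_i$ through a single generator modulo roots of unity, and then contract the $t$ equations into a single $t=1$ instance. Using \cref{prop:multiplicative} to compute generators of the multiplicative relation lattice in $\Z^t$ and applying the Smith normal form, I obtain an algebraic function $\mu\in\K$ (expressible as a monomial in the $\ig_i$), integers $e_1,\ldots,e_t$, and roots of unity $\zeta_1,\ldots,\zeta_t$ such that $\ig_i=\mu^{e_i}\zeta_i$ for each $i$. The cleanup conventions of \cref{lemma:cleanuplemma} guarantee that if $e_i=0$ then $\ig_i=1$, so the corresponding equation $1=\tg_i(s)$ either is vacuous or constrains $s$ to a computable finite set, both of which are handled by \cref{lemma:fixedn}. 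Henceforth I assume every $e_i\neq 0$.

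Next I split $n$ by its residue modulo $L:=\lcm_i\,\ord(\zeta_i)$. For each $r\in\{0,\ldots,L-1\}$, writing $n=qL+r$ and $\nu:=\mu^L$ converts the system into
\[
\nu^{e_i q}(s) = \tilde\tg_{i,r}(s),\qquad i=1,\ldots,t,
\]
where $\tilde\tg_{i,r}:=\tg_i\,\zeta_i^{-r}\,\mu^{-e_i r}\in\K$. Let $g:=\gcd(e_1,\ldots,e_t)$ and pick B\'ezout coefficients $\alpha_i\in\Z$ with $\sum_i\alpha_i e_i=g$; then the monomial combination $\Gamma_r:=\prod_i\tilde\tg_{i,r}^{\alpha_i}\in\K$ produces the contracted equation
\[
\nu^{gq}(s)=\Gamma_r(s),
\]
which is a single-equation ($t=1$) instance with base $\nu^g$ and target $\Gamma_r$. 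Note that $\nu^g$ is not a root of unity, since $\mu$ generates the torsion-free quotient $G/T$.

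To verify the reduction is faithful, I then check that solutions of the contracted equation yield solutions of the original system for that residue class. Raising $\nu^{gq}(s)=\Gamma_r(s)$ to the (integer) power $e_i/g$ gives $\nu^{e_i q}(s)=\Gamma_r(s)^{e_i/g}$, so the full system is recovered exactly when the identity $\Gamma_r^{e_i/g}=\tilde\tg_{i,r}$ holds in $\K$ for every $i$. These identities are effectively testable: if all of them hold, residue class $r$ contributes the $t=1$ instance $\nu^{gq}(s)=\Gamma_r(s)$; if some identity fails, the joint zero set of the functions $\Gamma_r^{e_i/g}-\tilde\tg_{i,r}$ is finite and effectively enumerable, so \cref{lemma:fixedn} disposes of that residue class. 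Running this dichotomy over the $L$ residue classes yields the claimed reduction.

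The main obstacle is controlling the root-of-unity ambiguity: a naive contraction only forces $\ig_j^n/\tg_j$ to be \emph{some} $b_j$-th root of unity rather than exactly $1$. The residue decomposition $n=qL+r$, together with the identity checks $\Gamma_r^{e_i/g}=\tilde\tg_{i,r}$, is precisely what converts this ambiguity into finitely many algorithmically resolvable branches. All operations invoked---Smith normal form on the relation lattice, identity testing in $\K$, and root-finding for algebraic functions---are effective under the representation recalled in \cref{sec:AF}.
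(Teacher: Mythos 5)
Your reduction is correct for the literal statement but follows a genuinely different route from the paper. You work purely algebraically: rank one makes $G=\langle\ig_1,\ldots,\ig_t\rangle$ a finitely generated abelian group of free rank one whose torsion consists of constant roots of unity, so Smith normal form on the relation lattice from \cref{prop:multiplicative} yields $\ig_i=\mu^{e_i}\zeta_i$ with $\mu\in\K$ an honest integer monomial in the $\ig_i$; the residue split modulo $L$ and the B\'ezout contraction then mirror the redundancy-elimination scheme of \cref{sec:multiplicativedependencies}, and your faithfulness check (test the identities $\Gamma_r^{e_i/g}=\tilde\tg_{i,r}$; if one fails, the solutions for that residue class lie in a finite computable set) is sound. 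The paper instead argues analytically: it first shows, using the conjugate-pairing hypothesis of \cref{lemma:rank1case}, that the eigenvalues are either all real-valued or all of modulus one, partitions the line into sign- and branch-consistent intervals, and extracts a fractional power $\mu=\ig_1^{1/\ell}$ so that each $\ig_i=\omega_i\mu^{\ell_i}$ exactly. Your version buys freedom from branch cuts and domain partitioning; the paper's buys direct control over the shape of the resulting single equation.

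That last point is the one substantive omission. ``Reduces to instances with $t=1$'' is only useful if those instances are of the two shapes the paper subsequently decides: $\lambda,\gamma$ real, or $\lambda$ non-real, of modulus one and not a root of unity. You never verify which class $\nu^g$ and $\Gamma_r$ land in, and you never invoke the hypothesis that complex eigenvalues come with their conjugates. That hypothesis is precisely what excludes, say, a single non-real $\ig_1$ of non-constant modulus, for which your reduction is vacuous but the resulting $t=1$ instance is covered by neither \cref{lem:rankOneComplex} nor the real-case lemma. With the conjugate hypothesis the verification goes through: in the non-real case all $|\ig_i|=1$ (the paper's computation from $\ig^a=\overline{\ig}^b$), hence $|\mu|=1$ and $\nu^g$ cannot be real-valued without being a constant $\pm1$, contradicting the infinite order of $\mu$; in the all-real case $\mu$ and the $\zeta_i=\pm1$ are real. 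You should add this. Two smaller points: the cleanup of \cref{lemma:cleanuplemma} in fact forces every $e_i\neq0$ (an $\ig_i$ with $e_i=0$ would be a constant root of unity, which has been removed), and the functions $\mu^{-e_ir}$ and $\Gamma_r$ introduce finitely many new zeros and poles that must be adjoined to $\exceptionalpoints$ and handled via \cref{lemma:fixedn}.
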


We then separate into the case where $\lambda$ is real and the case where $\lambda$ is complex. Let us start by assuming $\lambda$ is a real function.
\begin{restatable}{lemma}{rankonereallemma}
Given real algebraic functions $\lambda$ and $\gamma$, it is decidable whether there exists $(n,\point)\in\N\times \mathbb{R}\setminus\exceptionalpoints{}$  such that $\lambda^n(\point) = \gamma(s)$.
\end{restatable}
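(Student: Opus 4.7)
The plan is, by a case split on whether $\lambda$ is constant or non-constant, to reduce the existential question to analysing on each of finitely many subintervals of $\R$ whether the range of a single continuous real-valued function contains a positive integer satisfying an appropriate parity constraint. The constant case is disposed of quickly: by \autoref{lemma:cleanuplemma} we may take $\gamma$ non-constant, and by \autoref{lemma:nonzeroonly1} we may take $\lambda>0$ with $\lambda \ne 1$. The image of $\gamma$ on $\R \setminus \exceptionalpoints$ is a finite, effectively computable union of intervals; a bounded component contains only finitely many powers $\lambda^n$, for each of which the existence of $s$ with $\gamma(s)=\lambda^n$ is decided by \autoref{lemma:fixedn}, while an unbounded component aligned with the direction of growth or decay of $(\lambda^n)_n$ yields a solution for all sufficiently large $n$.

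For the main case, where $\lambda$ is non-constant, I first enlarge $\exceptionalpoints$ by the finitely many real zeros and poles of $\lambda$ and $\gamma$ and by the real roots of $\lambda^2 - 1$, then partition the complement into finitely many open intervals $I_1, \ldots, I_m$. On each $I_j$ the functions $\lambda$ and $\gamma$ are nowhere zero and have constant signs $\sigma_\lambda^{(j)}, \sigma_\gamma^{(j)} \in \{\pm 1\}$, and $|\lambda| \ne 1$ throughout. I then introduce the continuous function $\phi_j(s) = \log|\gamma(s)| / \log|\lambda(s)|$ on $I_j$, so that $\lambda(s)^n = \gamma(s)$ holds for some $s \in I_j$ iff $\phi_j$ attains a positive integer value $n$ respecting a sign/parity condition dictated by $(\sigma_\lambda^{(j)}, \sigma_\gamma^{(j)})$: no extra condition when $\sigma_\lambda^{(j)} = \sigma_\gamma^{(j)} = +1$; no solution at all when $\sigma_\lambda^{(j)} = +1$ and $\sigma_\gamma^{(j)} = -1$; and a parity requirement on $n$ (even if $\sigma_\gamma^{(j)} = +1$, odd otherwise) when $\sigma_\lambda^{(j)} = -1$.

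Next I compute the range $R_j = \phi_j(I_j) \subseteq \R$, which is an interval by continuity of $\phi_j$ and connectedness of $I_j$. Its closure in $\R \cup \{\pm\infty\}$ I determine from the endpoint limits of $\phi_j$ on $I_j$, together with its finitely many critical values; the former are effectively computable via Puiseux expansions of the univariate real algebraic functions $\lambda$ and $\gamma$ at zeros, poles, points where $|\lambda| \to 1$, and $\pm\infty$, while the latter are real roots of an explicit algebraic equation. If $R_j$ is bounded, only finitely many positive integer candidates remain and each is tested by \autoref{lemma:fixedn}; if $R_j$ is unbounded above, it contains all sufficiently large integers of either parity, so the parity requirement is automatically satisfied and the existence of $s \in I_j$ with $\phi_j(s) = n$ follows from the intermediate value theorem.

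The principal technical overhead is the careful computation of the limits of $\phi_j$ at interval endpoints, in particular at real points where $|\lambda(s)| \to 1$, where the denominator of $\phi_j$ vanishes and the sign of the resulting $\pm \infty$ limit must be read off from the Puiseux expansion of $\log|\lambda(s)|$ together with the sign of $\log|\gamma(s)|$, and at coinciding zeros or poles of $\lambda$ and $\gamma$, where the limit of $\phi_j$ equals the ratio of the relevant Puiseux orders. All these limits are nonetheless effectively computable from the defining polynomials of $\lambda$ and $\gamma$, which closes the decision procedure.
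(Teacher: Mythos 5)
Your strategy is genuinely different from the paper's. The paper partitions the domain into a region where $|\lambda|<1-\delta$ and $|\gamma|>\delta'$ (on which $n$ is bounded by $\log(\delta')/\log(1-\delta)$) and boundary regions where $\lambda\to 1$ or $\gamma\to 0$, which it handles with monotonicity and bespoke intermediate-value arguments. You instead reduce uniformly to deciding whether the single continuous function $\phi_j=\log|\gamma|/\log|\lambda|$ attains a positive integer of the right parity on each interval $I_j$. The reduction itself is correct: on $I_j$ one has $\lambda(s)^n=\gamma(s)$ iff $\phi_j(s)=n$ and $\operatorname{sign}(\lambda)^n=\operatorname{sign}(\gamma)$, and the range of $\phi_j$ is an interval, so unboundedness above immediately yields solutions for all large $n$.

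There is, however, a genuine gap in how you propose to compute the range $R_j$. You claim its interior extrema are ``real roots of an explicit algebraic equation''. They are not: $\phi_j'(s)=0$ is equivalent to $(\gamma'/\gamma)\log|\lambda|=(\lambda'/\lambda)\log|\gamma|$, a transcendental equation mixing algebraic functions with logarithms of algebraic functions, and there is no obvious procedure for locating its roots or evaluating $\phi_j$ there exactly. Since the supremum of $\phi_j$ may be attained at such an interior critical point rather than as an endpoint limit, your computation of $\overline{R_j}$ does not go through as written. The gap is repairable, because you never need the exact range: (i) $\phi_j$ can only be unbounded via its endpoint limits (it is continuous on the open interval and bounded on compact subintervals), so unboundedness is detectable from the Puiseux data you already compute; (ii) in the bounded case you only need \emph{some} effective upper bound on $\phi_j$, which you can get by bounding $|\log|\gamma||$ above and $|\log|\lambda||$ below on a compact core of $I_j$ (both are semialgebraic optimisation problems) and by explicit Puiseux estimates near the endpoints where the limit is finite; and (iii) whether a given integer $n$ lies in $R_j$ with the right sign is exactly the question of whether $\lambda(s)^n=\gamma(s)$ has a solution in $I_j$, which \autoref{lemma:fixedn} decides. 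Two smaller points: the finitely many points you add to $\exceptionalpoints$ are legitimate candidate values of $s$ and must still be checked (each is algebraic, so Kannan--Lipton applies); and in the constant-$\lambda$ case a \emph{bounded} component of the image of $\gamma$ whose closure contains $0$ can contain infinitely many powers $\lambda^n$ when $0<\lambda<1$, so your dichotomy there needs the same small correction (though in that event the answer is simply \textsc{yes}).
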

\begin{proof}[Proof Sketch]
The interesting case occurs on an interval $S=(s_0,s_1)$ on which $0< \lambda(s),\gamma(s)< 1$ for $s\in S$. Other cases either reduce to this case, or occur for finitely many $s$ which can be checked independently. The function $\gamma(s)$ is fixed between $s_0,s_1$. Each point $\lambda(s)^n$ decreases with every $n$. One can test for each $n$ whether the lines $\lambda(s)$ and $\gamma(s)$ intersect, or one can find some bound $n_0$ after which $\lambda(s)^n < \gamma(s)$ for all $s\in S$ and $n > n_0$, so one can be sure there is no solution.
\end{proof}

Secondly, we consider the case $\ig$ takes on complex values. In this case, since $\ig_i$ was a complex eigenvalue of $M$, then so too is its conjugate $\overline{\ig_i}$, yet $\ig_i$ and $\overline{\ig_i}$ are multiplicatively dependent, in which case it turns out that $|\ig| = 1$.

\begin{restatable}{lemma}{rankOneComplex}
\label{lem:rankOneComplex}
Let $\ig$ and $\tg$ be algebraic functions. Assume $\lambda$ is not real, non-zero, not a root of unity, and of modulus 1. The equation $\ig(s)^n = \tg(s)$ admits solutions as follows. If $\tg$ is not of modulus $1$ constantly, then there are finitely many $s$. If $\tg$ is of modulus 1 identically and $\ig$ is constant, then there are infinitely many solutions and such a solution can be effectively found. Finally, if $\ig$ is not constant, then the equation admits a solution for all $n \geq n_0$, and $n_0$ is computable.
\end{restatable}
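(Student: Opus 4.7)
The plan is to dispatch each of the three asserted cases by a distinct argument.

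For the first case, I would use that any solution must satisfy $|\tg(s)|=|\ig(s)|^n = 1$. Since $\tg\cdot\overline{\tg}-1$ is an algebraic function of $s$ which by hypothesis is not identically zero, it has only finitely many real (indeed, algebraic) zeros. The problem thus reduces to checking a finite, effectively-computable set of candidate values of $s$, which is handled by the first clause of \cref{lemma:fixedn}.

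For the second case, let $\ig = \alpha \in \algebraics$ with $|\alpha|=1$ not a root of unity, and $|\tg|\equiv 1$. If $\tg\equiv\beta$ is constant, then deciding whether $\alpha^n = \beta$ admits a solution is a one-dimensional (algebraic) Kannan--Lipton instance, and if such $n$ exists then every $s$ is a witness. Otherwise, the image $\tg(I)$ on any real sub-interval $I \subseteq \R\setminus\exceptionalpoints$ on which $\tg$ is continuous is a non-degenerate connected arc $A \subseteq S^1$ whose algebraic endpoints can be computed effectively. Since $|\alpha|=1$ and $\alpha$ is not a root of unity, Weyl's equidistribution theorem implies the orbit $\{\alpha^n\}$ is dense in $S^1$, so I would enumerate $n=1,2,\ldots$ with $\alpha^n$ computed to sufficient precision to certify membership in $A$; for the first $n$ so found, the residual equation $\tg(s)=\alpha^n$ is a single algebraic equation and decidable by root-finding.

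For the third case, $\ig$ is non-constant with $|\ig|\equiv 1$, and (else Case~1 applies) $|\tg|\equiv 1$. I would pick a real sub-interval $I\subseteq \R\setminus\exceptionalpoints$ on which both admit continuous real-analytic argument functions $\ig(s)=e^{2\pi i\theta(s)}$, $\tg(s)=e^{2\pi i\psi(s)}$. By shrinking $I$, I may assume $\theta$ is strictly monotone on $I$ with positive oscillation $\Delta:=\theta_{\max}-\theta_{\min}$, and that $\psi$ is bounded with oscillation $\Psi:=\psi_{\max}-\psi_{\min}$. The equation becomes $f_n(s):=n\theta(s)-\psi(s)\in \Z$, where $f_n$ is continuous with image an interval of length at least $n\Delta - \Psi$. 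For all $n \geq n_0 := \lceil (1+\Psi)/\Delta \rceil$, this length exceeds $1$, so $f_n(I)$ contains an integer and, by the intermediate value theorem, the equation admits a solution $s \in I$. The finitely many $n < n_0$ are then handled by the second clause of \cref{lemma:fixedn}.

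The main obstacle I anticipate is the effective computation of the constants $\Delta$, $\Psi$, and the endpoints of the arc $A$, i.e., extremising univariate real-algebraic functions on closed real intervals. This is standard via resultant computations and cylindrical algebraic decomposition, but care is needed so that the numerical precision suffices both to certify membership of $\alpha^n$ in $A$ during the enumeration of Case~2 and to produce the explicit bound $n_0$ in Case~3.
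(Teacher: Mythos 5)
Your proposal is correct and follows essentially the same route as the paper: finitely many candidate $s$ when $|\tg|\not\equiv 1$, a dense-orbit argument when $\ig$ is constant, and an argument-lift plus intermediate value theorem (the image of $n\theta-\psi$ eventually containing an integer) when $\ig$ is non-constant, with small $n$ handled by \cref{lemma:fixedn}. Your explicit treatment of the sub-case where $\tg$ is also constant (falling back on Kannan--Lipton) is a welcome extra precaution, since there the "infinitely many solutions" claim relies on the earlier reduction guaranteeing $\tg$ non-constant.
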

\begin{proof}[Proof Sketch]
The interesting case turns outs to be when $\lambda$ and $\tg$ both define arcs on a unit circle. By taking powers of $\lambda$ the arc grows, and eventually encompasses the arc defined by $\tg$. The intermediate value theorem then implies there is an $s$ satisfying $\lambda^n(s) = \gamma(s)$.
\end{proof}

\appendix

\section{Additional Material for Section \ref{sec:prelim}}\label{app:prelim}

\multiplicative*

\begin{proof}[Proof sketch]
An algebraic function $\ig$ can be
  expressed as a converging \emph{Puiseux series}
  $\ig(x) = \sum_{n=n_0}^{\infty}c_n(x-\alpha)^{n/\deg_y(P_{\ig})}$
  for some $c_n \in \overline{\Q}$, $c_{n_0} \neq 0$, and $n_0 \in \Z$,
  around a point $\alpha \in \overline{\Q}$ (see
  \cite[Chapt.~1.8.13]{Foster81compact}, \cite{KungT1978Algebraic}). Evidently any
  $\alpha \in \overline{\Q}$ has \emph{order}
  $\ord_{\ig}(\alpha) = n_0/\deg_y(P_{\ig})\in \Q$, i.e., the exponent of the first term in the Puiseux series. Let $\alpha_1$,
  \ldots, $\alpha_k$ be the roots and poles of the $\ig_i$. With each
  $\ig_i$ we associate the vector
  $g_i = (\ord_{\ig_i}(\alpha_j))_{j=1}^k$. Now
  $\sum_{i=1}^k a_i g_i = \vec{0}$ implies that the function
  $\ig_1^{a_1}\cdots \ig_t^{a_t}$ has no roots or poles, hence is
  constant by Bezout's theorem. Compute a basis for
  $\ker((g_{i,j})_{i=1,j=1}^{t,k})\cap \Z^k$
  (\cite[Cor.~5.3c]{Schrijver1999theory}) for the claim on $L'$. As $L'$ is finitely generated, $L$
  can be seen as the set of multiplicative relations of a finite
  set of algebraic numbers. A generating set for $L$ can be found utilising a
  deep result of Masser \cite{Masser1988Linear}
  (\cite{cai2000complexity,Ge1993algorithms}).
\end{proof}

\boundhlsbyhs*
\begin{proof}
To this end,
let $P \in \overline{\Q}[x,y]$ be irreducible, with $d_x$ the maximal degree of $x$, and $d_y$ that of $y$, and assume $d_x,d_y \geq 1$. Let
$P(\alpha,\beta) = 0$ with algebraic $\alpha$, $\beta$. It is known that there exists a constant $C_P$
depending on $P$ such that
\begin{equation*}
\left| \frac{h(\alpha)}{d_y} - \frac{h(\beta)}{d_x} \right| \leq  C_P \sqrt{\max\left\{ \frac{h(\alpha)}{d_y} , \frac{h(\beta)}{d_x}\right\}}.
\end{equation*}
For example, the main result of \cite{habegger2017quasi} shows that
\[C_P = 5\left(\log\left(2^{\min\{d_x,d_y\}}(d_x+1)(d_y+1)\right) + h_p(P)\right)^{1/2}\] suffices%
\footnote{Here $h_p(P)$
is the \emph{height of the polynomial} $P$ (see \cite[Equation~(4)]{habegger2017quasi}). For us it suffices to know that
$h_p(P)$ is at most the sum of the heights of the non-zero coefficients of
the polynomial.}, and hence an upper bound for $C_P$ is computable, given $P$. %

Now let  $P$ be the minimal polynomial of $\ig$. We have for all admissible
$s$: $P(s,\ig(s)) = 0$. Since $\ig$ is not constant, we have that the polynomial contains both $x$ and $y$, and we
may apply the above to get
\[
\left|\frac{h(s)}{d_y} - \frac{h(\ig(s))}{d_x}\right| \leq  C_P \sqrt{\max\{ h(s)/d_y , h(\ig(s))/d_x\}}.\]

By taking $c_1=d_x/(2d_y), c_3=2d_x/d_y$ and $
c_2=c_4=4C_P^2\max\{d_x,d_y\}$, we  have
\[c_1h(s) - c_2 \le  h(\ig(s)) \le  c_3h(s) + c_4.\qedhere\]
\end{proof} %

\section{Additional Material for Section \ref{sec:hardness}}
\label{appen:hardness}

In this section we show that each semialgebraic $S \subseteq \R^3$ can be written as a finite union of sets of the form $\{ v(s,t) : s,t \in \R\}$.

One way to define dimension of a semialgebraic set is using \emph{cell decomposition}. We have that a semialgebraic set $S \subseteq \R^3$ of dimension 2 can be written as a finite union of 2-cells and 1-cells in $\R^3$, and that a $d$-cell is semialgebraically homeomorphic to the open hypercube $(0,1)^d$ \cite{bochnak2013real}. Hence to show our main result it suffices to show how to write $C \subseteq \R^3$, $C = f((0,1)^d)$ where $f$ is a semialgebraic function, as a union of sets parametrised using two parameters and algebraic functions in two variables.

First, let us consider parametrisation of very simple sets in $\R$. Observe that a point $p \in \algebraics$ can be characterised using the algebraic function $f(s) = p$, the interval $(0,1]$ as $\{\frac{1}{1+s^2} \colon s \in \R\}$ and the interval $(0, \infty)$ as $\{\frac{1}{s^2} \colon s \in \R\}$. We can characterise other intervals using these characterisations. For example, $(a,b]=\{a + \frac{b-a}{1 + s^2}\}$, $[b, a) = \{a-\frac{a-b}{1+s^2}: s \in \R\}$ and an open interval $(a,b)$ can be written as $(a, b) = (a, \frac{a+b}{2}] \cup [\frac{a+b}{2}, b)$. 

Next, a couple of useful lemmas.
\begin{lemma}
\label{trproof}
Let $g \colon \R \mapsto \R$ be a semialgebraic function. The graph $G = \{ (x, g(x) \colon x \in \R\}$ can be written as a union of sets of the form $\{ v(s) \colon  s \in \R\}$.
\end{lemma}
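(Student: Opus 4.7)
The plan is to reduce the problem to finitely many algebraic pieces via cell decomposition. First, I would invoke the semialgebraic cell decomposition theorem (see, e.g., \cite{bochnak2013real}): since $g \colon \R \to \R$ is semialgebraic, its graph $G \subseteq \R^2$ is semialgebraic of dimension at most $1$, and $\R$ itself admits a finite partition into semialgebraic cells adapted to $g$. Concretely, there exist finitely many points $a_1 < a_2 < \dots < a_k$ such that, on each open interval $I_j \in \{(-\infty,a_1),(a_1,a_2),\dots,(a_k,\infty)\}$ and on each singleton $\{a_i\}$, the restriction $g|_{I_j}$ is a continuous algebraic function in the sense of Section~\ref{sec:AF}: there is a non-zero $P_j \in \Q[x,y]$ with $P_j(x,g(x)) = 0$ for every $x$ in the cell. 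This standard reduction is the main technical content; everything else is bookkeeping.

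Next, I would assemble the parametrisations from the pieces already exhibited just before the lemma statement. Each bounded open interval $(a,b)$ can be written as $(a,(a+b)/2] \cup [(a+b)/2,b)$, and half-open intervals as well as half-lines and $\R$ itself admit explicit one-variable algebraic parametrisations $s \mapsto \phi_j(s)$ (using $\tfrac{1}{1+s^2}$, $\tfrac{1}{s^2}$, affine transformations, and constants for singletons). Given such a $\phi_j$ for $I_j$, I would form
\[
  v_j(s) \;:=\; \bigl(\phi_j(s),\, g(\phi_j(s))\bigr).
\]
Because $g|_{I_j}$ satisfies the polynomial equation $P_j(x,g(x)) = 0$, the second coordinate is an algebraic function of $\phi_j(s)$, and composition preserves algebraicity of univariate functions; hence each component of $v_j$ is algebraic in $s$. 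Thus $\{v_j(s) : s \in \R\}$ is precisely the portion of $G$ lying above $I_j$.

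Finally, I would take the union over the finitely many cells $I_j$: since $\R$ is partitioned by them, $G = \bigcup_j \{v_j(s) : s \in \R\}$, which is the desired finite union.

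The only real obstacle is justifying the first step — that the semialgebraic graph $G$ is covered by finitely many algebraic branches parametrised over semialgebraic cells of $\R$. This is handled by the cylindrical cell decomposition of semialgebraic sets applied to $G$: projecting the cells onto the $x$-axis yields the partition of $\R$, and on each full-dimensional cell, the $y$-coordinate is determined as a particular continuous branch of the defining polynomial, which is an algebraic function in our sense. Once this is invoked, the rest is the explicit composition described above.
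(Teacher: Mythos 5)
Your proof is correct and follows essentially the same strategy as the paper's: decompose $\R$ into finitely many intervals (and points) on which $g$ coincides with a single continuous algebraic branch, parametrise each interval by a univariate algebraic function with domain $\R$, and conclude using closure of algebraic functions under composition. The only difference is that you obtain the decomposition by citing cylindrical cell decomposition as a black box, whereas the paper constructs it explicitly by factoring the defining constraints of $G$ in $y$ and observing that each set $I_j = \{x : (x, h^0_j(x)) \in G\}$ is semialgebraic, hence a finite union of intervals.
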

\begin{proof}
By definition, the function $g$ is semialgebraic if and only if its graph $G$ is a semialgebraic subset of $\R^2$. Let $p_1(x,y) = 0, q_1(x,y)>0, \ldots , q_m(x,y)>0$ be the constraints that define $G$ (recall that one can define a semialgebraic set using only one equality constraint). Viewing $p_1, q_1, \ldots, q_m$ as polynomials in $y$, we can factorise
\[
\begin{cases}
	p_1(x, y) = (y - h^0_1(x)) \cdot  \ldots \cdot (y - h^0_{\kappa(0)}(x)) = 0\\
	q_1(x, y) = (y - h^1_1(x)) \cdot  \ldots \cdot (y - h^1_{\kappa(1)}(x)) > 0\\
	\cdots\\
	q_m(x, y) = (y - h^m_1(x)) \cdot  \ldots \cdot (y - h^m_{\kappa(m)}(x)) > 0
\end{cases}
\]
where $h^i_r$ is an algebraic function for every $0 \leq i \leq m$ and $1 \leq r \leq \kappa(i)$. Next we will show how to compute $\kappa(0)$ subsets $I_1, \ldots, I_{\kappa(0)}$ of $\R$ that have the following properties.
\begin{enumerate}
	\item $\bigcup_{j=1}^{\kappa(0)}I_j = \R$;
	\item Each $I_j$ is a finite union of intervals;
	\item For $1 \leq j \leq \kappa(0)$, the value of $y$ for each $x \in I_j$ is equal to $h^0_j(x)$, the $j$th root of $p_1$.
\end{enumerate}
This will allow us to write
\[
G= \bigcup_{j=1}^{\kappa(0)} \{(x, h^0_j(x)) : x \in I_j\}.
\]
Recall that each $I_j$ is a finite union of intervals, each of which can be parametrised by an algebraic function with domain $\R$. Since composition of two algebraic functions remains algebraic, we can characterise each component of $G$ that comes from a single subinterval of $I_j$ using an algebraic function with domain $\R$. Hence we can write $G$ as a union of sets with the desired parametrization.

To construct $I_j$, we proceed as follows. From Condition~3 above, $I_j = \{x : (x, h^0_j(x)) \in G\}$ and hence can be defined by the formula
\[
\varphi(s) = p_1(x, h^0_j(x)) = 0 \land q_1(x, h^0_j(x)) > 0 \land \cdots \land q_m(x, h^0_j(x)) > 0.  
\]
Hence $I_j$ is semialgebraic. Since semialgebraic sets have finitely many connected components, $I_j$ must be a finite union of interval subsets of $\R$. 
\end{proof}

\begin{lemma}
\label{trproof2}
	Let $D \subseteq \R^2$ be semialgebraic. $D$ can be written as
	\[
	D = \bigcup_{i=1}^k D_i = \bigcup_{i=1}^k \{ v_i(s,t) : s,t \in \R\}
	\]
	where for each $i$, $v_i$ is algebraic over $\Q(s, t)$.
\end{lemma}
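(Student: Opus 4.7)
The plan is to apply cylindrical algebraic decomposition (CAD) to $D \subseteq \R^2$, obtaining a finite partition of $D$ into cells of dimensions $0$, $1$, and $2$. By the CAD construction in $\R^2$, each cell is of one of the following types: a single algebraic point $(a,b)$; a graph $\{(x, f(x)) : x \in I\}$ of an algebraic function $f$ over an open interval $I \subseteq \R$; a ``vertical'' segment $\{a\} \times I$; or a region $\{(x,y) : x \in I,\ f_1(x) < y < f_2(x)\}$ with $f_1 < f_2$ algebraic on $I$ (where $f_1 = -\infty$ or $f_2 = +\infty$ is allowed). It will then suffice to parametrize each cell separately as a finite union of sets of the form $\{v(s,t) : s,t \in \R\}$ with $v$ algebraic over $\Q(s,t)$.

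The building blocks are the one-variable parametrizations already used in Lemma~\ref{trproof}: any real interval is a finite union of images of algebraic functions $\R \to \R$. For example, $(0, 1] = \{1/(1+s^2) : s \in \R\}$, $[0,1) = \{s^2/(1+s^2) : s \in \R\}$, $[1, \infty) = \{1 + s^2 : s \in \R\}$, an open interval $(a,b)$ splits as $(a, (a+b)/2] \cup [(a+b)/2, b)$, and a singleton $\{a\}$ with $a$ algebraic is parametrized by the constant map. I will call such a finite family of algebraic maps whose images together cover $I$ an \emph{algebraic atlas} for $I$; by the above, every real interval admits one.

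Given this, the parametrization of each cell type is straightforward. A $0$-cell is handled by a constant map. For a $1$-cell that is a graph $\{(x, f(x)) : x \in I\}$ I take an algebraic atlas $\{\phi_j\}$ for $I$ and write the cell as $\bigcup_j \{(\phi_j(s), f(\phi_j(s))) : s \in \R\}$; each piece gives $v(s,t) := (\phi_j(s), f(\phi_j(s)))$, which I view as independent of $t$. The case $\{a\} \times I$ is symmetric. For a $2$-cell $\{(x,y) : x \in I,\ f_1(x) < y < f_2(x)\}$ I fix an algebraic atlas $\{\phi_j\}$ for $I$, and for each $j$ parametrize the open interval $(f_1(\phi_j(s)), f_2(\phi_j(s)))$ in $t$ via algebraic expressions $g_1(t), \ldots, g_r(t)$ whose images cover $(0,1)$, setting $y = f_1(\phi_j(s)) + (f_2(\phi_j(s)) - f_1(\phi_j(s)))\,g_k(t)$. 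Unbounded cases $f_1 = -\infty$ or $f_2 = +\infty$ are handled by the analogous algebraic parametrizations of $(-\infty, c)$ and $(c, +\infty)$. Since algebraic functions are closed under composition and rational arithmetic, each resulting $v(s,t)$ is algebraic over $\Q(s,t)$.

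The hard part will be purely combinatorial bookkeeping: ensuring that the finite union of parametrized pieces covers each cell \emph{exactly}, with no omissions at endpoints and no spurious points outside the cell. This requires splitting the intervals into half-open or closed sub-pieces matching the precise boundary type of each cell, but is routine once CAD has been invoked.
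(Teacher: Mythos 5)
Your proposal is correct and follows essentially the same route as the paper: decompose $D$ into cells (points, graphs, and bands between graphs), parametrize the base interval and the fibre intervals by algebraic maps $\R\to\R$, and compose. The only cosmetic difference is that you invoke CAD so the section functions are algebraic outright (and you explicitly treat vertical segments and unbounded bands), whereas the paper uses a generic cell decomposition with semialgebraic section functions and then appeals to Lemma~\ref{trproof} to split their graphs into algebraic pieces.
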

\begin{proof}
	By cell decomposition, $D$ must be a union of 
	\begin{enumerate}
		\item points,
		\item sets of the form $\{(x, g(x)) : x \in (a,b)\}$ where $g : \R \rightarrow \R$ is semialgebraic, and
		\item sets of the form $\{(x,y) :  x \in (a,b) , g(x) < y < h(x)\}$ where $g,h$ are semialgebraic.
	\end{enumerate}

Sets of the last kind are bands between the graphs of $g$ and $h$ over the open interval $(a,b)$. We need to show that sets of each kind can be parametrized using two parameters and algebraic functions in two variables. The first two cases are handled by the preceding arguments. For the third case, let $(a,b)$, the graph of $G$ and the graph of $H$ be parametrized by the one-variable algebraic functions $v_1$, $v_2$ and $v_3$, respectively.  Then the sets of the third type can be written as $\{(v_1(s), v'(s,t)) : s, t \in \R \}$ where $v'(s,t)$ parametrizes the open interval $(g(s), h(s))$ based on the discussion above about parametrizing intervals in $\R$.
\end{proof}

Finally, we are ready to prove our main result. Let  $C \subseteq \R^3$, $C = f((0,1)^2)$ where $f$ is a semialgebraic function. Let $(u,v)$ denote a point in $(0,1)^2$ and $x(u,v), y(u,v), z(u,v)$ denote the semialgebraic functions that give us the $x,y,z$ coordinates of the point $f(u,v)$, respectively. To parametrize $C$, it suffices to parametrize the graphs of the functions $x(u,v), y(u,v), z(u,v)$.

Wlog consider $X = \{(u,v,x(u,v) : (u,v) \in (0,1)^2\}$, i.e. the graph of the function $x(u,v)$. Let $p_1(x_1, x_2, x_3) = 0, q_1(x_1, x_2, x_3) > 0, \ldots, 
q_m(x_1, x_2, x_3) > 0$ be the constraints defining $X$. We proceed in the same way as in the proof of Lemma~\ref{trproof}. Vieweing $p_1, q_1, \ldots, q_m$ as polynomials in $x_3$, we factorize to obtain
\[
\begin{cases}
	p_1(x_1, x_2, x_3) = (x_3 - h^0_1(x_1, x_2)) \cdot  \ldots \cdot (x_3 - h^0_{\kappa(0)}(x_1, x_2)) = 0\\
	q_1(x_1, x_2, x_3) = (x_3 - h^1_1(x_1, x_2)) \cdot  \ldots \cdot (x_3 - h^1_{\kappa(1)}(x_1, x_2)) > 0\\
	\cdots\\
	q_m(x_1, x_2, x_3) = (x_3 - h^m_1(x_1, x_2)) \cdot  \ldots \cdot (x_3 - h^m_{\kappa(m)}(x_1, x_2)) > 0
\end{cases}
\]
where each $h^i_r$ is algebraic over $\Q(x_1,x_2)$. We then compute $\kappa(0)$ semialgebraic subsets $S_1, \ldots, S_{\kappa(0)}$ of $\R^2$ that have the following properties.
\begin{enumerate}
	\item $\bigcup_{j=1}^{\kappa(0)}S_j = (0,1)^2$;
	\item For $1 \leq j \leq \kappa(0)$, the value of $x_3$ for each $(x_1, x_2) \in S_j$ is equal to $h^0_j(x_1, x_2)$, the $j$th root of $p_1$.
\end{enumerate}
This will allow us to write
\[
X= \bigcup_{j=1}^{\kappa(0)} \{(x_1, x_2, h^0_j(x_1, x_2)) : (x_1, x_2) \in S_j\}.
\]
Now it only remains to observe that the unit square and, by Lemma~\ref{trproof2}, each $S_j$ can be parametrized using two parameters and algebraic functions. %
\section{Additional Material for Section \ref{jordancell:dim2}}

\lemmaheightsboundnjordand*
\begin{proof}

If $\theta(\param)$ is constant, then $n$ is uniquely determined. If not, by applying heights we get that  $\log(n) =h(n) =  h(\theta(\param))$ and by \cref{lem:1} we get  $a_1,a_2,a_3,a_4>0$ such that
\begin{equation}\label{eq:bound:logn}
a_1 h(\param) -a_2 \le h(\theta(\param)) = \log(n) \le a_2h(\param) + a_3.
\end{equation}

Now we split into the cases where $\ig$ is constant or not.

If $\ig$ is constant, then there exists fixed  $b = h(\ig)$, such that  $h(\ig^n) =  nh(\ig) = b n$.

Requiring that $\ig^n = \gamma(\param)$   and using  \cref{lem:1} on the algebraic function $\gamma(\param)$ we obtain $c_3,c_4$ such that
\begin{equation}\label{eq:bnbyh}
bn = h(\ig^n) =  h(\gamma(\param)) \le c_3h(\param) + c_4
\end{equation}
Combining \cref{eq:bnbyh} and \cref{eq:bound:logn} we obtain
 \[bn \le  c_3h(\param)+c_4\le c_3(\log(n) +a_2)/a_1 +c_4,\]
 which implies:
\begin{align*}
 \sqrt{n} &\le  \frac{n}{\sqrt{n}}\le \frac{n}{\log(n)} 
 \\ &\le
\frac{1}{b}\left[\frac{c_3}{a_1} + \frac{c_3a_2/a_1 + c_4}{\log(n)}\right]
\\ &\le
\frac{c_3 + c_3a_2}{ba_1} + \frac{c_4}{b} \qquad \text{if }n \ge 3.
\end{align*}
Thus we bound $n$:\[n \le \max\left\{3,\left(\frac{c_3 + c_3a_2}{ba_1} +  \frac{c_4}{b}\right)^2\right\}.
\]

We now consider $\ig(\param)$ not a constant function. Then from \cref{lem:1} we obtain $b_1,b_2,c_3,c_4$ such that  
\[b_1h(\param) - b_2 \le h(\ig(\param)) \text{ and } h(\gamma(\param)) \le c_3 h(\param) + c_4 \]
Using $nh(\ig(\param)) =  h(\ig^n(\param))=  h(\gamma(\param))$ we obtain $n(b_1h(\param)- b_2) \le c_3 h(\param) + c_4 $ which bounds $h(\param)$:
\[
h(\param)\le \frac{nb_2 + c_4}{nb_1-c_3} \le \frac{2b_2+2c_4}{b_1}  \text{ if } n \ge \max\left\{\frac{2c_3}{b_1}, 1\right\}.
\]
Finally we bound $n$ using \cref{eq:bound:logn}:  
\[ \log(n)\le a_3h(\param)  + a_4 \le a_3\left(\frac{2b_2+2c_4}{b_1}\right) +a_4.\]Taken together we have \[n \le\max\left\{\frac{2c_3}{b_1}, 1, \exp\left(\frac{a_3(2b_2+2c_4)+a_4b_1}{b_1} \right)\right\}.\qedhere\]
\end{proof}

Let us now deal with the case where $d_i > 1$ and $\ig_i= 1$. The equations formed by the constraints of 
$\left(\begin{smallmatrix}
		1 & 1\\
	 &  &\ddots &1
	 \\ & & &1
	 \end{smallmatrix}
	 \right)^n\left(\begin{smallmatrix}
		\tilde{\source}_{i,d_i}\\
		\vdots \\
		\tilde{\source}_{i,1}
	 \end{smallmatrix}
	 \right)=  \left(\begin{smallmatrix}
		\tilde{\target}_{i,d_i}\\
		\vdots \\
		\tilde{\target}_{i,1}
	 \end{smallmatrix}
	 \right)$ 
 describe a set of polynomial equations in variable $n$ and coefficients in $\mathbb{K}$:
\begin{equation*}
\Big\{\tilde{\source}_{i,1}(\param) 	= \tilde{\target}_{i,1}(\param), \quad
\tilde{\source}_{i,2}(\param) + n \tilde{\source}_{i,1}(\param) = \tilde{\target}_{i,2}(\param), \quad\ldots,\quad 
\sum_{i=1}^k \binom{n}{i}\tilde{\source}_{i,i}(\param) = \tilde{\target}_{i,k}(\param)\Big\}.
\end{equation*}
Let us consider all such equations formed by $J_i$ such that $\lambda_i =1$. Clearly $\tilde{\source}_{i,1}= \tilde{\target}_{i,1}$ identically, or else there are finitely many $s$ such that $\tilde{\source}_{i,1}(\point)= \tilde{\target}_{i,1}(\point)$. Hence, the first equation can essentially be dropped. Using the second equation to replace $n$ by $(\tilde{\target}_{i,2}- \tilde{\source}_{i,2}) /\tilde{\source}_{i,1}$ in all other such equations gives a collection algebraic function only in $\param$. These functions are either identically zero, or have finitely many solutions. If any one function has finitely many instantiations of $x$ then we only need to check these instantiations.

If all of the resulting functions are identically zero, then the system of equations is equivalent to the single equation $n = \theta(\param)$, where  $\theta(\param) = \frac{\tilde{\target}_{i,2}(\param)- \tilde{\source}_{i,2}(\param)}{\tilde{\source}_{i,1}(\param)}$. We can first verify whether the range of $\theta(\param)$ over $\param$ is bounded.  If it is, test every integer $n$ in the range (by \autoref{lemma:fixedn}).

In the remaining case, $\theta(\param)$ is unbounded, so there is a solution to $n= \theta(\param)$ for every large $n$. If this is the only equation, we are done (and the answer is \textsc{yes}). Alternatively there is some other constraint, which we can take from the bottom row of some different Jordan block: $\ig_j(\param)^n = \gamma_j(\param)$. We can assume  $\ig_j$ not a root of unity because the only root of unity was $1$, for which all of the constrains are encoded in $n = \theta(\point)$. We can now apply the following lemma, which places a bound on $n$ when $n$ appears both linearly and as an exponent w.r.t. algebraic functions:

Again we have an instance of \autoref{lemma:thetalambdagamma} bounding $n$ that need to be checked.
\section{Additional Material for Section~\ref{sec:rank2}}\label{sec:constant}

We complete the proof of \autoref{lemma:main}:
\mainlemma*

To do this, we prove \autoref{lemma:twoeigenconstantgammanot},
and prove the remaining cases of \autoref{subsec:non-constantRank2}.

\subsection{Proof of \autoref{lemma:twoeigenconstantgammanot}}
\label{subsec:constantsRank2}
In this part we complete the proof of \autoref{lemma:twoeigenconstantgammanot}. First we recall some notions from algebraic number theory.
Most of the results appear in standard text books on the topic such as \cite{CohenBook}, but
an accessible account sufficient for our purposes can be found in \cite{Halava2005skolem}.
An \emph{algebraic integer} is an algebraic number with monic minimal polynomial in $\Z[x]$.
Let $K$ be a finite extension of $\Q$, and consider the set $\mathcal{O}_K$
of algebraic integers in $K$. The set $\mathcal{O}_K$ forms a
subring of $K$, the so-called \emph{ring of integers of $K$}. The ideals of 
$\mathcal{O}_K$ are finitely generated, and they form a commutative ring.
An ideal $P \neq [1],[0]$ (here $[\alpha]$ is the principal ideal generated by $\alpha$)  is called a \emph{prime ideal} if $P = IJ$, for some ideals $I$, $J$, implies that either $I = [1]$ or $I = P$.
Each ideal $I \neq [0]$ of $\mathcal{O}_K$ can be represented as a product of \emph{prime ideals}: $I = P_1^{k_1}\cdots P_t^{k_t}$, $k_i \geq 0$, and is unique up to the ordering of the prime ideals in the product.

For a prime ideal $P$ we define the \emph{valuation} $\nu_P \colon \mathcal{O}_K\setminus \{0\} \mapsto \N$ as follows: for
$\alpha \in \mathcal{O}_K$, $\alpha \neq 0$ and
$[\alpha] = P_1^{k_1}\cdots P_t^{k_t}$, where each $P_i$ is a prime ideal,
we set $\nu_P(\alpha) = k_i$ if $P = P_i$, and $\nu_P(\alpha) = 0$ if
$P \neq P_1,\ldots,P_t$. By convention we set $\nu_P(0) = \infty$. The valuation $\nu_P$ can be extended to
the whole number field $K$ by noting that if $\alpha$ is not an algebraic integer, then there exists $m\in \N$, $m\geq 1$, such that $m\alpha = \alpha_1$ is an algebraic integer. In this case we define $\nu_P(\alpha) = \nu_P(\alpha_1) - \nu_P(m)$, and it can be shown that this is well-defined (i.e., does not depend on the choice of $\alpha_1$ and $m$).

We need the following properties: for $\alpha$, $\beta \in K$, and $P$ a prime ideal of $\mathcal{O}_K$,
\begin{itemize}
\item $\nu_P(\alpha \beta) = \nu_P(\alpha) + \nu_P(\beta)$.
\item $\nu_P(\alpha + \beta) \geq \min\{\nu_P(\alpha), \nu_P(\beta)\}$.
\item If $\nu_P(\alpha) < \nu_P(\beta)$ then $\nu_P(\alpha + \beta) = \nu_P(\alpha)$.
\item If $\alpha \notin \mathcal{O}_K$, then there is a prime
ideal $P$ such that $\nu_P(\alpha) \neq 0$. Furthermore, such a prime ideal
can be found effectively.
\end{itemize}

We shall employ a version Baker's theorem as formulated in \cite{baker1993logarithmic}:
\begin{theorem}[Baker and Wüstholz]
Let $\alpha_1$, \ldots, $\alpha_t \in \C \setminus \{0,1\}$ be algebraic numbers different
from $0$ or $1$, and let $b_1$, \ldots, $b_t \in \Z$ be integers.
Write $\Lambda = b_1\log \alpha_1 + \ldots + b_t \log \alpha_t$,
where $\log$ is any branch of the complex logarithm function.

Let $A_1$, \ldots $A_t$, $B$ be real numbers larger than $\eu$ such that
$h(\alpha_i) \leq A_i$, and $|b_i| \leq B$ for each $i$. Let further $d$
be the degree of the extension field $\Q(\alpha_1,\ldots,\alpha_t)$ over
$\Q$.

If $\Lambda \neq 0$, then
\begin{equation*}
\log|\Lambda| > -(16 t d)^{2(t+2)} \log A_1 \cdots \log A_t \log B.
\end{equation*}
\end{theorem}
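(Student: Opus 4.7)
The plan is to follow the transcendence-theoretic method initiated by Gelfond and developed by Baker, in the refined group-theoretic form due to Wüstholz that yields the explicit exponent $2(t+2)$. The argument is by contradiction: assume $\Lambda\neq 0$ but $\log|\Lambda|$ violates the bound. The natural setting is the commutative algebraic group $G=\mathbb{G}_m^t$, viewing $(\alpha_1,\ldots,\alpha_t)\in G(\overline{\mathbb{Q}})$ together with the integer vector $(b_1,\ldots,b_t)$; then $\Lambda$ is the principal logarithm of $\alpha_1^{b_1}\cdots\alpha_t^{b_t}$, and smallness of $|\Lambda|$ says this element is extremely close to the identity of $G$.

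First I would construct, via Siegel's lemma applied to a homogeneous linear system over $\overline{\mathbb{Q}}$, a non-zero auxiliary polynomial $P\in\overline{\mathbb{Q}}[z_1,\ldots,z_t]$ of controlled degree and height whose associated entire function $F(z)=P(\alpha_1^z,\ldots,\alpha_t^z)$, together with its derivatives of low order, is extremely small at the integer points $z=0,1,\ldots,N$. The hypothetical extreme smallness of $|\Lambda|$ is exactly the slack that makes this linear-algebra construction feasible. Second, I would combine the Masser--Wüstholz zero estimate on $\mathbb{G}_m^t$---a sharp multiplicity bound for polynomials vanishing along a one-parameter algebraic subgroup---with an extrapolation loop alternating Schwarz's lemma (to upgrade smallness of $F$ from integer points to smallness on a large disc) with Liouville-type lower bounds (a non-zero algebraic number cannot be too close to $0$ in terms of its height and degree). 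Pushing this loop either forces $F$ to vanish identically, contradicting the non-vanishing choice of $P$, or produces a non-trivial multiplicative relation among the $\alpha_i$; either conclusion is inconsistent with the standing assumptions, so no such $\Lambda$ can exist and the stated lower bound follows.

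The main obstacle is quantitative: tracking degrees, heights, and multiplicities with enough care to extract the exact exponent $2(t+2)$ and a strictly logarithmic dependence on $B$. This is precisely where the Wüstholz refinement of Baker's original argument is indispensable---earlier versions of the method produce a bound of the same shape but with worse constants and exponents. Reproducing the effective transcendence machinery in full occupies many dozens of pages, and in practice the statement is invoked as a black box with citation to \cite{baker1993logarithmic}; for the purposes of the present paper only this quantitative form is needed, as an input to the analysis of constant-$\lambda_i$ instances in \autoref{lemma:twoeigenconstantgammanot}.
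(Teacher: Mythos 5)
This statement is not proved in the paper at all: it is the Baker--W\"ustholz theorem, imported verbatim as an external result with a citation to \cite{baker1993logarithmic}, and your decision to treat it as a black box is exactly what the paper does. Your outline of the underlying method (Siegel-lemma construction of an auxiliary function, extrapolation via Schwarz's lemma against Liouville-type lower bounds, and the Masser--W\"ustholz multiplicity estimates on group varieties to obtain the exponent $2(t+2)$) is a fair description of the actual proof, modulo the minor point that the relevant group in the Baker--W\"ustholz argument is $\mathbb{G}_a\times\mathbb{G}_m^t$ rather than $\mathbb{G}_m^t$ alone, since the linear form itself must be encoded in the additive factor; but since neither you nor the paper attempts to reproduce the transcendence machinery, there is nothing further to check here.
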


As a straightforward consequence we have the following
\begin{corollary}\label{cor:bakercor}
For algebraic numbers $\mu$ and $\zeta$ of modulus $1$ with $\mu$ not a root of unity, we have $|\mu^n - \zeta| > a/n^b$ for all large enough $n$ and
for some effectively computable constants $a>0$ and $b \in \N$ depending on $\mu$
and $\zeta$.
\end{corollary}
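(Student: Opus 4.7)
The plan is to convert $|\mu^n - \zeta|$ into (a constant times) the absolute value of a linear form in logarithms of fixed algebraic numbers, and then apply the Baker--Wüstholz theorem above to extract the desired polynomial-in-$n$ lower bound.

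First I would use $|\zeta|=1$ to rewrite $|\mu^n-\zeta| = |\mu^n/\zeta - 1|$, and set $\mu = e^{i\alpha}$, $\zeta = e^{i\beta}$ with $\alpha,\beta \in (-\pi,\pi]$ the principal arguments. The identity $|e^{i\phi}-1| = 2|\sin(\phi/2)|$ together with the elementary inequality $|\sin x| \geq (2/\pi)|x|$ on $[-\pi/2,\pi/2]$ then yields
\[
|\mu^n-\zeta| \;\geq\; (2/\pi)\,|n\alpha-\beta-2\pi k|,
\]
where $k\in\Z$ is the unique integer placing $n\alpha-\beta-2\pi k$ in $(-\pi,\pi]$. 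In particular $|k| = O(n)$, since $|n\alpha-\beta| \leq \pi n + \pi$.

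Next I would recognise the right-hand side as $(2/\pi)|\Lambda|$ for the linear form
\[
\Lambda \;=\; n\log\mu \;-\; \log\zeta \;-\; 2k\log(-1),
\]
using principal branches so that $\log\mu = i\alpha$, $\log\zeta = i\beta$, $\log(-1) = i\pi$. The algebraic numbers $\mu,\zeta,-1$ all avoid $\{0,1\}$ (using that $\mu\neq 1$ because $\mu$ is not a root of unity; the degenerate case $\zeta=1$ is handled by the analogous two-term form $\Lambda = n\log\mu - 2k\log(-1)$). Moreover, $\Lambda=0$ is equivalent on exponentiation to $\mu^n = \zeta$, which, since $\mu$ is not a root of unity, happens for at most one $n\in\N$; I take $n_0$ large enough to exclude such an $n$.

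Finally I would apply Baker--Wüstholz with $A_1,A_2,A_3$ fixed effective bounds computable from $h(\mu),h(\zeta)$, degree $d = [\Q(\mu,\zeta):\Q]$, and $B = \Theta(n)$, obtaining $\log|\Lambda| > -K \log(Cn)$ for effectively computable constants $K,C$ depending only on $\mu,\zeta$. Combining with the first step gives $|\mu^n-\zeta| \geq (2/\pi)|\Lambda| > a/n^b$ for effective $a>0$ and $b = K \in \N$, as desired. The only delicate step is ensuring that $|k|$ is polynomially bounded in $n$ (which keeps $B = O(n)$, and hence the final bound polynomial rather than subexponential); everything else is direct application of Baker--Wüstholz and careful branch bookkeeping.
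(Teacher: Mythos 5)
Your proof is correct and is essentially the standard argument: the paper itself does not prove this corollary but defers to \cite[Cor.~8 of Extended Version]{OuaknineW14}, and the argument there is exactly your reduction of $|\mu^n-\zeta|$ to a three-term linear form $n\log\mu - \log\zeta - 2k\log(-1)$ with $|k|=O(n)$, followed by Baker--W\"ustholz with $B=\Theta(n)$. The details you flag as delicate (the $|\sin x|\ge (2/\pi)|x|$ comparison, the bound on $k$, the non-vanishing of $\Lambda$ for all but at most one $n$ since $\mu$ is not a root of unity) are all handled correctly.
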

For a proof, see \cite[Cor.~8 of Extended Version]{OuaknineW14}.

We shall also employ a $p$-adic version of Baker's theorem proved by K. Yu \cite{yu1999p-adicI}. We employ a version which follows from a version stated in the introduction of K. Yu
 \cite{yu1999p-adicII} (for definitions, we refer to \cite{CohenBook}):
\begin{theorem}
Let $\alpha_1$, \ldots, $\alpha_t$ $(t\geq 1)$ be non-zero algebraic
numbers and $K$ be a number field containing $\alpha_1$, \ldots,
$\alpha_t$, with $d$ the degree of the extension. Let $\mathfrak{p}$ be a
prime ideal of $\mathcal{O}_K$, lying above the prime number $p$, by
$e_{\mathfrak{p}}$ the ramification index of $\mathfrak{p}$, and by
$f_{\mathfrak{p}}$ the residue class degree of $\mathfrak{p}$. For
$\alpha \in K$. Let $b_1$, \ldots, $b_t \in \Z$, and assume that
$\Xi:=\alpha_1^{b_1}\cdots \alpha_t^{b_t} - 1 \neq 0$. Let further
$h_j = \max(h(\alpha_j),\log p)$ for $j=1,\ldots,t$. Let
$B = \max\{|b_1|,\ldots, |b_t|,3\}$. Then
\begin{equation*}
\nu_\mathfrak{p}(\Xi) < 19(20\sqrt{t+1} d)^{2(t+1)} e_{\mathfrak{p}}^{t-1}\cdot
\frac{p^{f_{\mathfrak{p}}}}{(f_{\mathfrak{p}} \log p)^2}
	\log(e^5 t d) h_1\cdots h_t \log B
\end{equation*}
\end{theorem}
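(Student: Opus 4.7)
The plan is to recognise that the statement in question is not an original result of this paper but rather K.\ Yu's $p$-adic analogue of the Baker--W\"ustholz theorem on linear forms in $p$-adic logarithms, cited from \cite{yu1999p-adicI,yu1999p-adicII}. The appropriate ``proof'' in the context of the paper is therefore a pointer to Yu's work, possibly with a remark reconciling the notation of \cite{yu1999p-adicII} (which is stated for linear forms in logarithms) with the multiplicative formulation $\Xi = \alpha_1^{b_1}\cdots\alpha_t^{b_t} - 1$ used here: assuming $\Xi \neq 0$, one applies the standard translation between $\nu_\mathfrak{p}(\alpha_1^{b_1}\cdots\alpha_t^{b_t} - 1)$ and the $p$-adic logarithm $\log_\mathfrak{p}(\alpha_1^{b_1}\cdots\alpha_t^{b_t})$ (when this is defined after possibly replacing each $\alpha_i$ by a suitable power to enter the domain of convergence of $\log_\mathfrak{p}$), and rescales to obtain the stated explicit constant $19(20\sqrt{t+1}d)^{2(t+1)}$.

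If I were instead asked to sketch how such a bound is proved from scratch, I would follow the Baker transcendence method adapted to the $p$-adic setting. First I would introduce parameters $L, T$ calibrated to the heights $h_1,\ldots,h_t$, the degree $d$, the bound $B$, and the ramification data $e_\mathfrak{p}, f_\mathfrak{p}, p^{f_\mathfrak{p}}$. Next, using Siegel's lemma over $\mathcal{O}_K$ (or a $p$-adic version thereof), I would construct a non-zero auxiliary polynomial $P(X_1,\ldots,X_t) \in \mathcal{O}_K[X_1,\ldots,X_t]$ of controlled degree and logarithmic height that vanishes to high order at a well-chosen family of points of the torus $\mathbb{G}_m^t$ obtained from $(\alpha_1,\ldots,\alpha_t)$. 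Then I would extrapolate: assuming $\nu_\mathfrak{p}(\Xi)$ exceeds the claimed bound, the associated auxiliary function would have additional zeros, forcing its $p$-adic absolute value at further points to be very small. A Liouville-type inequality combined with a zero estimate on commutative algebraic group varieties (in the style of Philippon, Masser--W\"ustholz) would then contradict the assumption that $P$ is non-zero.

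The main obstacle, and the reason the proof is deferred to \cite{yu1999p-adicI,yu1999p-adicII}, is the extrapolation/zero-estimate step in the $p$-adic setting: one must carefully track the interaction between the $\mathfrak{p}$-adic radius of convergence of $\log_\mathfrak{p}$, the ramification index $e_\mathfrak{p}$, and the residue degree $f_\mathfrak{p}$, and then insert the resulting estimates into Philippon's zero-estimate machinery in a way that preserves the polynomial dependence on $\log B$ and on the individual heights $h_j$. This delicate bookkeeping is the source of the specific numerical exponents in the final bound, and reproducing it would add many pages of technical $p$-adic analysis; for our applications we simply need the qualitative shape of the inequality (polynomial in $\log B$, with all remaining factors effectively computable from $\alpha_1,\ldots,\alpha_t$ and $\mathfrak{p}$), so importing Yu's statement verbatim suffices.
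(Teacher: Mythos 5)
Your proposal matches the paper exactly: the theorem is imported verbatim from K.\ Yu's work on $p$-adic logarithmic forms \cite{yu1999p-adicI,yu1999p-adicII} and the paper offers no proof beyond the citation, which is precisely what you identify as the appropriate treatment. Your additional sketch of the Baker-method machinery behind Yu's bound is accurate background but not something the paper attempts to reproduce.
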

All the above values are effectively computable given the numbers $\alpha_1$, \ldots, $\alpha_t$.
We have a straightforward corollary:
\begin{corollary}\label{cor:p-adiclogform}
Let $\mu$ and $\zeta$ be algebraic numbers of modulus $1$ and assume $\mu$ is not a root of unity. Let $K = \Q(\mu,\zeta)$ and $\mathfrak{p}$ be a prime ideal of $\mathcal{O}_K$.
Then $\nu_{\mathfrak{p}}(\mu^n - \zeta) < C \log n$ as $n\to \infty$ for some
effectively computable constant $C$ that depends on $\mathfrak{p}$,
$\mu$ and $\zeta$.
\end{corollary}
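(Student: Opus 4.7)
The plan is to derive the Corollary as a direct consequence of the stated theorem of K.~Yu by rewriting $\mu^n - \zeta$ in a form suitable for the application of a $p$-adic logarithmic form bound.

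First I would factor out $\zeta$, writing $\mu^n - \zeta = \zeta\bigl(\zeta^{-1}\mu^n - 1\bigr)$. Using the multiplicativity property $\nu_{\mathfrak{p}}(\alpha\beta) = \nu_{\mathfrak{p}}(\alpha) + \nu_{\mathfrak{p}}(\beta)$ of the valuation recalled above, this gives
\begin{equation*}
\nu_{\mathfrak{p}}(\mu^n - \zeta) \;=\; \nu_{\mathfrak{p}}(\zeta) \;+\; \nu_{\mathfrak{p}}\!\bigl(\mu^n \zeta^{-1} - 1\bigr).
\end{equation*}
Since $\zeta$ is a fixed algebraic number, $\nu_{\mathfrak{p}}(\zeta)$ is a fixed integer constant that can be absorbed into the final constant $C$.

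Next I would apply Yu's theorem to $\Xi := \mu^n \zeta^{-1} - 1$, taking $t = 2$, $\alpha_1 = \mu$, $\alpha_2 = \zeta^{-1}$, $b_1 = n$, $b_2 = 1$, which are all in $K = \Q(\mu,\zeta)$. To invoke the theorem we must check $\Xi \neq 0$: the condition $\mu^n = \zeta$ can hold for at most one value of $n$, since if $\mu^n = \mu^m = \zeta$ with $n > m$ then $\mu^{n-m} = 1$, contradicting the assumption that $\mu$ is not a root of unity. Hence $\Xi \neq 0$ for all but at most one exceptional $n$, which can be ignored when discussing asymptotics. With $B = \max\{n,1,3\} = n$ for $n\geq 3$, and with $d$, $e_{\mathfrak{p}}$, $f_{\mathfrak{p}}$, $p$, $h_1 = \max\{h(\mu),\log p\}$, and $h_2 = \max\{h(\zeta^{-1}),\log p\} = \max\{h(\zeta),\log p\}$ all depending only on $\mu$, $\zeta$, and $\mathfrak{p}$, Yu's theorem yields
\begin{equation*}
\nu_{\mathfrak{p}}(\Xi) \;<\; 19\bigl(20\sqrt{3}\,d\bigr)^{6} e_{\mathfrak{p}}\cdot \frac{p^{f_{\mathfrak{p}}}}{(f_{\mathfrak{p}}\log p)^2}\,\log(e^{5}\cdot 2 d)\,h_1 h_2\,\log n \;=\; C'\log n
\end{equation*}
for an effectively computable constant $C'$ that depends only on $\mu$, $\zeta$, and $\mathfrak{p}$.

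Combining the two displays gives $\nu_{\mathfrak{p}}(\mu^n - \zeta) < \nu_{\mathfrak{p}}(\zeta) + C'\log n$, and choosing any $C > C'$ so that the additive constant $\nu_{\mathfrak{p}}(\zeta)$ is absorbed for $n$ sufficiently large yields the claimed bound $\nu_{\mathfrak{p}}(\mu^n - \zeta) < C\log n$ as $n\to\infty$, with $C$ effectively computable from $\mu$, $\zeta$, and $\mathfrak{p}$. The only real subtlety is verifying $\Xi\neq 0$ (handled by the non-root-of-unity hypothesis on $\mu$); everything else is book-keeping of the constants appearing in the statement of Yu's theorem.
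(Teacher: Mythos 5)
Your proof is correct and follows essentially the same route as the paper: factor out $\zeta$ to reduce to $\nu_{\mathfrak{p}}(\mu^n\zeta^{-1}-1)$ and apply Yu's $p$-adic theorem with the exponent vector $(n,1)$, with the non-root-of-unity hypothesis guaranteeing $\Xi\neq 0$ for all but at most one $n$. The paper's proof is just a two-line sketch of exactly this argument (it justifies $\Xi\neq 0$ via the linear growth of $h(\mu^n)$ rather than your divisibility argument, but these are interchangeable), so yours is simply a more detailed write-up of the same approach.
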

\begin{proof}
We have $\nu_{\mathfrak{p}}(\mu^n - \zeta) = \nu_{\mathfrak{p}}(\zeta) + \nu_{\mathfrak{p}}(\mu^n\zeta^{-1} - 1)$. Since $\mu$ is not a root of unity,
 the height of $\mu^n$ increases linearly in $n$. 
\end{proof}

We now recall and prove \cref{lemma:twoeigenconstantgammanot}:
\twoeigenconstantgammanot*
\begin{proof}

Let the minimal polynomials of $\tg_1$ and $\tg_2$ be $P_1$ and $P_2$ with
$P_i \in \Q[x,y_i]$. Eliminating $x$ from these polynomials we get a
non-zero polynomial $P \in \overline{\Q}[y_1,y_2]$. For points $\alpha_1 = \tg_1(s_0)$ and $\alpha_2 = \tg_2(s)$ we have $P(\alpha_1,\alpha_2) = 0$.
We are interested in those
$n \in \mathbb{N}$ for which $P(\ig_1^n,\ig_2^n) = 0$. The sequence
$(u_n )_{n=0}^{\infty}$, with
\begin{equation}\label{eq:unAsExpPol}
 u_n = P(\ig_1^n,\ig_2^n) = \sum_{k,\ell} a_{k,\ell} (\ig_1^{k}\ig_2^{\ell})^n,
 \end{equation}
 $a_{k,\ell} \in \overline{\mathbb{Q}}$, is a linear recurrence sequence over $\algebraics$. We wish the characterise those $n$ for which $u_n = 0$.

We first consider the case that $|\ig_1|$ and $|\ig_2|$ are
multiplicatively independent, that is, $|\ig_1^{i}\ig_2^{j}| \neq 1$ for
all $i,j \in \mathbb{Z}$.
\begin{claim}
If $|\ig_1|$ and $|\ig_2|$ are multiplicatively independent, then
there exists an effectively computable $n_0 \in \N$ such that $u_n \neq 0$ for $n \geq n_0$.
\end{claim}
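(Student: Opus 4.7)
The plan is to give a dominant-term asymptotic argument. Since $|\ig_1|$ and $|\ig_2|$ are multiplicatively independent, the map $(k,\ell) \mapsto |\ig_1^k \ig_2^\ell| = |\ig_1|^k |\ig_2|^\ell$ is injective on $\Z^2$: if two pairs gave the same value, taking their difference would yield a nontrivial multiplicative relation on $\{|\ig_1|, |\ig_2|\}$. Consequently, among the finitely many index pairs $(k,\ell)$ appearing in the representation \eqref{eq:unAsExpPol} with $a_{k,\ell} \neq 0$, there is a \emph{unique} pair $(k_0,\ell_0)$ that maximises the modulus $|\ig_1^k \ig_2^\ell|$. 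Denote $\alpha := |\ig_1^{k_0}\ig_2^{\ell_0}|$ and let $\beta$ be the second-largest such modulus, so $\beta < \alpha$. Note that $P$ is obtained effectively from $P_1$ and $P_2$ by a resultant computation, so the support of the sum in \eqref{eq:unAsExpPol}, the coefficients $a_{k,\ell}$, and the moduli $|\ig_1^k \ig_2^\ell|$ are all explicitly computable; in particular $(k_0,\ell_0)$, $\alpha$, and $\beta$ can be determined.

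Next, I would split the sum as
\[
 u_n = a_{k_0,\ell_0}(\ig_1^{k_0}\ig_2^{\ell_0})^n + R_n, \qquad
 R_n = \sum_{(k,\ell) \neq (k_0,\ell_0)} a_{k,\ell}(\ig_1^k \ig_2^\ell)^n,
\]
and bound the remainder using the triangle inequality:
\[
 |R_n| \leq \Bigl(\sum_{(k,\ell) \neq (k_0,\ell_0)} |a_{k,\ell}|\Bigr) \beta^n =: C \beta^n,
\]
where $C$ is effectively computable from the known coefficients. Consequently,
\[
 |u_n| \geq |a_{k_0,\ell_0}|\,\alpha^n - C\beta^n
       = \alpha^n\Bigl(|a_{k_0,\ell_0}| - C(\beta/\alpha)^n\Bigr).
\]
Since $\beta/\alpha < 1$ and $|a_{k_0,\ell_0}| > 0$, the parenthesised quantity is strictly positive as soon as $(\beta/\alpha)^n < |a_{k_0,\ell_0}|/C$, i.e.\ for all $n \geq n_0$ with
\[
 n_0 := \bigl\lceil \log(C/|a_{k_0,\ell_0}|) \big/ \log(\alpha/\beta) \bigr\rceil + 1,
\]
which is effectively computable.

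The only subtle point is to make sure the argument does not collapse: I need to know that $(k_0,\ell_0)$ is well-defined (i.e.\ that the support of $P(\ig_1^n, \ig_2^n)$ viewed as an exponential polynomial is nonempty) and that $a_{k_0,\ell_0} \neq 0$. The former holds because $P \not\equiv 0$ (as $P_1$ and $P_2$ have no common factor over $\overline{\Q}[x, y_1, y_2]$, by the assumption that $\gamma_1$ and $\gamma_2$ are non-constant and have distinct minimal polynomials), and the latter is guaranteed by the definition of $(k_0,\ell_0)$ as a maximiser over indices with nonzero coefficient. The main obstacle is therefore really just the bookkeeping to verify that multiplicative independence of the moduli $|\ig_1|$ and $|\ig_2|$ forces the dominating term to be unique; once that is in place the proof is a purely elementary asymptotic estimate and does not yet require Baker's theorem (which is only needed in the subsequent cases where $|\ig_1|$ and $|\ig_2|$ are multiplicatively dependent).
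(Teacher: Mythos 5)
Your proof is correct and follows exactly the route the paper takes: multiplicative independence of $|\ig_1|$ and $|\ig_2|$ forces a unique dominant term $a_{k_0,\ell_0}(\ig_1^{k_0}\ig_2^{\ell_0})^n$ in the exponential-polynomial form of $u_n$, and an elementary triangle-inequality estimate then yields an explicit, effectively computable $n_0$. You merely spell out the computation of $n_0$ that the paper leaves as "clearly computable from the closed form", so there is nothing to add.
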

\begin{claimproof}
There is a unique pair $k,\ell$,
with $\ig_1^k \ig_2^{\ell}$ dominant in modulus. Then $( u_n )_n$ has a unique
dominant characteristic root, and hence there are only finitely many $n$ for
which $u_n = 0$. Indeed, $|u_n|$ grows as $|a_{k,\ell}||\ig_1^k \ig_2^{\ell}|^n + o(|\ig_1^k \ig_2^{\ell}|^{n})$, and so $u_n \neq 0$ for all $n\geq n_0$ for some $n_0$. Now $n_0$ can be clearly computed using the closed form expression \eqref{eq:unAsExpPol} of $u_n$. 
\end{claimproof}
In case the assumption of the above lemma holds, the problem
becomes decidable using \autoref{lemma:fixedn} for $n \leq n_0$.

In the remainder of this section we assume that $|\ig_1|$ and $|\ig_2|$
are multiplicatively dependent. In fact, we may assume that $|\ig_1| = |\ig_2|$: We have $|\ig_1|^i = |\ig_2|^j$ for some $i,j \in \mathbb{Z}$.
By considering the equations $(\ig_1^i)^n = \tg_1(s)^i$,
$(\ig_2^j)^n = \tg_2(s)^j$ instead, we may assume that
$|\ig_1| = |\ig_2|$; let $\alpha = |\ig_1| = |\ig_2|$.
We shall show that the new system of equations admits finitely many solutions, and hence so will the original system.

\begin{claim}
If $\alpha \neq 1$, then there exists an effectively computable constant $n_0 \in \N$, such that $u_n \neq 0$ for all $n\geq n_0$.
\end{claim}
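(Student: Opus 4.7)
The strategy echoes that of the previous claim: we identify a dominant group of terms in the exponential polynomial expression for $u_n$ and invoke Baker's theorem to effectively lower-bound its contribution, then compare with the non-dominant part.

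First, WLOG assume $\alpha > 1$ (the case $\alpha < 1$ is symmetric, obtained by interchanging the roles of the maximal and minimal values of $k+\ell$). Writing $\ig_i = \alpha \eta_i$ with $|\eta_i| = 1$, group the terms of $u_n$ by modulus:
\[ u_n = \sum_m \alpha^{mn}\, T_m(n), \qquad T_m(n) = \sum_{k+\ell = m} a_{k,\ell}(\eta_1^k\eta_2^\ell)^n, \]
and let $M$ be the largest $m$ with $T_m \not\equiv 0$. The terms contributing to $T_M$ correspond precisely to the characteristic roots of $u_n$ of maximal modulus $\alpha^M$.

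Next I would unravel the structure of $T_M$. Because $\ig_1, \ig_2$ are multiplicatively independent, the ratio $\rho = \eta_2/\eta_1 = \ig_2/\ig_1$ is not a root of unity (otherwise $\rho^N = 1$ would give $\ig_1^N\ig_2^{-N} = 1$, contradicting independence). Consequently, the characters $\eta_1^k\eta_2^\ell$ with $k+\ell = M$ are pairwise distinct, and factoring out $\eta_1^{Mn}$ yields $T_M(n) = \eta_1^{Mn}W(n)$ with $W(n) = \sum_j b_j(\rho^{\ell_j})^n$, where the $\rho^{\ell_j}$ are distinct algebraic numbers of modulus $1$, and none of them (nor any ratio thereof) is a root of unity. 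In particular $W(n) \not\equiv 0$, and by linear independence of characters $W$ has no arithmetic progression of zeros.

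The crux is now to show $|W(n)| \geq c_0/n^D$ for effectively computable $c_0 > 0$ and $D \geq 0$. For a single term this is trivial; for two terms $W(n) = b_1 + b_2\tau^n$ we reduce (after comparing moduli) either to Corollary~\ref{cor:bakercor} if the relevant target has modulus $1$, or to an elementary lower bound otherwise. For more terms, standard iterated applications of the Baker--Wüstholz theorem on linear forms in logarithms (as developed in the literature on positivity for simple linear recurrences, cf.\ \cite{OuaknineW14}) supply the analogous bound. Combining gives
\[ |u_n| \geq \alpha^{Mn}|T_M(n)| - \sum_{m<M}\alpha^{mn}|T_m(n)| \geq \frac{c_0\,\alpha^{Mn}}{n^D} - K\alpha^{(M-1)n}, \]
for an effective constant $K$ (bounding the uniformly modulus-bounded sums $T_m$). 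Since $\alpha > 1$, the right-hand side is positive for all $n \geq n_0$ with $n_0$ effectively computable, as desired. The principal obstacle is the multi-term lower bound on $|W(n)|$: combining Baker-type theorems with an inductive peeling argument to cover arbitrary numbers of terms is where the bulk of the transcendental machinery is brought to bear.
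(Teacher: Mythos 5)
Your overall strategy---isolate the terms of maximal modulus $\alpha^{Mn}$, lower-bound their contribution via Baker's theorem, and beat the $O(\alpha^{(M-1)n})$ remainder---is exactly the paper's. But there is a genuine gap at the step you yourself flag as the crux: the effective lower bound $|W(n)| \geq c_0/n^D$ for $W(n) = \sum_j b_j(\rho^{\ell_j})^n$ with three or more terms. ``Standard iterated applications of Baker--W\"ustholz'' do not supply such a bound: Baker's theorem controls a single linear form in logarithms, i.e.\ essentially a binomial $|\mu^n - \zeta|$, and an effective lower bound for a general sum of $k\geq 3$ unit-modulus exponentials with algebraic coefficients is not available---it is essentially the hard open territory of the Skolem/Positivity problems that this very paper is careful to avoid. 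The positivity results in \cite{OuaknineW14} that you cite apply to dominant-root or low-order situations, not to an arbitrary equal-modulus sum, so as written your argument rests on an unproved (and in general unavailable) estimate.

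The fix is a structural observation you missed: your $W(n)$ is not an arbitrary exponential sum. Since $T_M$ collects the monomials of $P$ of top total degree $M$, the corresponding part $H(x,y)=\sum_{k+\ell=M}a_{k,\ell}x^ky^\ell$ is \emph{homogeneous}, hence factors over $\overline{\Q}$ into linear forms $\prod_i(a_ix+b_iy)$; equivalently, $W(n)=Q(\rho^n)$ for the univariate polynomial $Q(y)=\sum_\ell a_{M-\ell,\ell}y^\ell$, which factors as $C\prod_i(y-\beta_i)$. Thus $|W(n)|$ is a finite product of binomial terms $|\rho^n-\beta_i|$: each factor with $|\beta_i|\neq 1$ is bounded below by a constant for large $n$, and each factor with $|\beta_i|=1$ is bounded below by $a/n^c$ via \cref{cor:bakercor} (using that $\rho=\ig_2/\ig_1$ is not a root of unity, as you correctly argued). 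Multiplying the finitely many factors gives $|W(n)|>c_2/n^A$ with effective constants, and the rest of your comparison with the lower-order terms then goes through. This factorisation---reducing the multi-term bound to the two-term corollary---is precisely how the paper closes the argument.
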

\begin{claimproof}
We may assume that $\alpha > 1$ by inverting the equations if necessary.
Write $P(x,y) = H(x,y) + G(x,y)$ such that $H$ comprises the maximal
(total) degree $d$ monomials of $P$ (and is thus homogeneous), and write
$\ig_1 = \alpha u$, $\ig_2 = \alpha v$, where $|u|,|v| = 1$.
Now $H$ factors into complex lines as it is homogeneous:
$H(x,y) = \prod_{i}(a_i x + b_i y)$, $a_i, b_i \in \overline{\Q}$, so that
$H(x,y) = 0$ if and only if $a_i x + b_i y = 0$ for some $i$.
We now have
\[ |H(\ig_1^n,\ig_2^n)| = (\alpha^d)^n|H((u/v)^n,1)|.\]

We are assuming, in particular, that $\ig_1/\ig_2$ is not a root of unity. We have
$a_i \ig_1^n + b_i \ig_2^n = 0$ for finitely many $n$ and thus $H(\ig_1^n,\ig_2^n)$
vanishes only for finitely many $n$.
Clearly if $|b_i/a_i| \neq 1$ (or either $a_i$ or $b_i$ is zero),
the term $a_i \ig_1^n + b_i \ig_2^n$ does not vanish, and is bounded below in modulus by a constant (for large $n$). Assume then that $b_i/a_i$ has modulus $1$.
Then $|a_i \ig_1^n + b_i \ig_2^n| = |a_i||(\ig_1/\ig_2)^n + b_i/a_i|$.
Applying \cref{cor:bakercor} we have, for all large enough $n$ and for each
$i$, $|a_i||(\ig_1/\ig_2)^n + b_i/a_i| > a/n^c$ where $a$ and $c$ are constants depending on $\tg_1$, $\tg_2$, and $b_i/a_i$.
It follows that for all $n$ large enough $|H(u^n,v^n)| > c_2/n^A$ for some
computable $c_2$, $A$. We deduce that
$|P(\ig_1^n,\ig_2^n)| = D (\alpha^{d})^n/n^A + \mathcal{O}(\alpha^{(d-1)n})$ for
some non-zero constant $D$. Again we have an effectively computable $n_0$
after which no solution can occur.
\end{claimproof}
Again, we may invoke \cref{lemma:fixedn} to search among the finitely
many $n$ which witness a zero in $(u_n)_n$.

Moving along, we consider the case $\alpha = 1$.
\begin{claim}
Assume that $\alpha = 1$ and $\ig_1$ is an algebraic integer.
Then the conclusion of the above lemma holds.
\end{claim}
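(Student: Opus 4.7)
The plan is to invoke Kronecker's theorem to reduce to the asymptotic regime already handled by the two previous claims. Since $\ig_1$ is an algebraic integer of modulus $1$ that is not a root of unity, Kronecker's theorem furnishes a Galois conjugate $\tilde{\ig}_1$ of $\ig_1$ with $|\tilde{\ig}_1|>1$. I will let $K := \Q(\ig_1,\ig_2,\{a_{k,\ell}\})$ and take an embedding $\sigma\colon K\hookrightarrow \C$ with $\sigma(\ig_1)=\tilde{\ig}_1$. Setting $\tilde{\ig}_2 := \sigma(\ig_2)$ and
\[\tilde{u}_n := \sigma(u_n) = \sum_{k,\ell}\sigma(a_{k,\ell})\,(\tilde{\ig}_1^{k}\tilde{\ig}_2^{\ell})^n,\]
the injectivity of $\sigma$ yields $u_n = 0 \iff \tilde{u}_n = 0$, so it suffices to bound the zeros of $\tilde{u}_n$ effectively. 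Note that $\tilde{\ig}_1,\tilde{\ig}_2$ inherit multiplicative independence from $\ig_1,\ig_2$ since any multiplicative relation transports across the field homomorphism $\sigma$, and $\tilde{\ig}_1$ is not a root of unity for the same reason.

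Next I mimic the asymptotic argument of Claims~1 and~2 for the sequence $\tilde{u}_n$. Let $M := \max_{(k,\ell)\in\mathrm{supp}(P)}|\tilde{\ig}_1|^{k}|\tilde{\ig}_2|^{\ell}$ and $S := \{(k,\ell)\in\mathrm{supp}(P) : |\tilde{\ig}_1|^{k}|\tilde{\ig}_2|^{\ell} = M\}$. Decompose $\tilde{u}_n = D(n) + R(n)$, where $D(n)$ is the sum over $(k,\ell)\in S$ and $R(n)$ collects the strictly subdominant terms, so $|R(n)|\le C(M')^n$ for effectively computable $M' < M$ and $C\ge 0$ (with $R\equiv 0$ if $S=\mathrm{supp}(P)$, in which case one takes $C=0$). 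Fix $(k^*,\ell^*)\in S$ and factor
\[ D(n) = (\tilde{\ig}_1^{k^*}\tilde{\ig}_2^{\ell^*})^n\, T(n), \qquad T(n) := \sum_{(k,\ell)\in S}\sigma(a_{k,\ell})\,\zeta_{k,\ell}^{\,n}, \]
where $\zeta_{k,\ell} := \tilde{\ig}_1^{k-k^*}\tilde{\ig}_2^{\ell-\ell^*}$ has modulus $1$; the multiplicative independence of $\tilde{\ig}_1,\tilde{\ig}_2$ ensures that no $\zeta_{k,\ell}$ is a root of unity except $\zeta_{k^*,\ell^*}=1$.

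The set $S$ lies on the line $k\log|\tilde{\ig}_1| + \ell\log|\tilde{\ig}_2| = \log M$. If $|\tilde{\ig}_1|,|\tilde{\ig}_2|$ are multiplicatively independent, then $|S|=1$ and $T(n) = \sigma(a_{k^*,\ell^*})$ is a nonzero constant. Otherwise, choose a primitive integer direction $(j,-i)$ along this line to reparametrise $S$ as $\{(k^*+jt,\ell^*-it) : t\in I\}$ for a finite $I\subset\Z$; this yields $T(n) = Q(\omega^n)$, where $\omega := \tilde{\ig}_1^{j}\tilde{\ig}_2^{-i}$ has modulus $1$ and is not a root of unity (again by multiplicative independence of $\tilde{\ig}_1,\tilde{\ig}_2$), and $Q(y) := \sum_{t\in I}\sigma(a_{k^*+jt,\ell^*-it})y^t$ is a nonzero Laurent polynomial. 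I will factor $Q(y) = c\prod_{r}(y - r)$: factors with $|r|\neq 1$ contribute a positive constant lower bound to $|\omega^n - r|$, while factors with $|r|=1$ are lower bounded via Corollary~\ref{cor:bakercor} as $|\omega^n - r|\ge c_r/n^{A_r}$. Multiplying these estimates gives $|T(n)|\ge c_0/n^{A}$ for effectively computable positive $c_0,A$ and all sufficiently large $n$.

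Combining, $|\tilde{u}_n|\ge |D(n)| - |R(n)| \ge (c_0/n^{A})M^n - C(M')^n$, which is strictly positive for all $n$ beyond an effectively computable $n_0$ (and already holds for $n\ge 1$ when $R\equiv 0$, since then $\tilde{u}_n = 0 \iff T(n)=0$). The remaining finitely many cases $n\le n_0$ are handled by Lemma~\ref{lemma:fixedn}. The principal technical point, already used in Claim~2, is the effectivity of the constants arising from Corollary~\ref{cor:bakercor}; the only new ingredients here are Kronecker's theorem and the observation that $\sigma$ preserves multiplicative relations, both of which are straightforward.
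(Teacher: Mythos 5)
Your proposal is correct and follows essentially the same route as the paper: use the Kronecker-type observation that an algebraic integer of modulus $1$ that is not a root of unity has a Galois conjugate of modulus exceeding $1$, transport the whole problem through the corresponding embedding $\sigma$ (so that $u_n=0$ iff $\sigma(u_n)=0$ and multiplicative independence is preserved), and then run the effective asymptotic/Baker argument of the preceding claims on the conjugated sequence. The only difference is that you spell out the ``conclude as in the previous cases'' step in full, organizing the dominant terms along the line $k\log|\tilde{\ig}_1|+\ell\log|\tilde{\ig}_2|=\log M$ rather than via the homogeneous-part decomposition of $P$, which is an equivalent and equally effective bookkeeping.
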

\begin{claimproof}
Since $\ig_1$ is not a root of unity by assumption, $\ig_1$ has a Galois conjugate $\tilde{\ig}:=\ig_1^{(i)}$ (as in the definition of the height
of $\ig_1$) of modulus larger than $1$. By taking $\sigma$ a Galois conjugation in the field extension of $\Q$ with the elements $\ig_1$, $\ig_2$ and the coefficients of the polynomials of $P$ such that
$\sigma(\ig_1) = \tilde{\ig}$, by relabelling everything under the conjugation, we have an equivalent problem where we assume $|\ig_1|>1$.
(In particular, $\sigma(u_n) = 0$ if and only if $u_n = 0$.)
We may thus conclude as in the previous cases.
\end{claimproof}

To complete the proof of \cref{lemma:twoeigenconstantgammanot}
we assume that that $\ig_1$ has modulus $1$ and is not an algebraic integer. In particular, there exists a prime ideal $\mathfrak{p}$,
effectively computable, such that $\nu_{\mathfrak{p}}(\ig_1) \neq 0$. By replacing
$\ig_1$ by $\ig_2$ if necessary, we may assume
$\nu_{\mathfrak{p}}(\ig_1) > 0$. Let now $\mathfrak{p}$ be any such prime ideal.
Let us write $P(x,y) = x^j R(x,y) + Q(y)$ with
$Q(y) = C \prod_i( y - \beta_i)$ and $j$ is maximal, so that
$R(x,y)$ contains a monomial not involving $x$. Consequently
\begin{align*}
\nu_{\mathfrak{p}}(\ig_1^{jn} R(\ig_1^n,\ig_2^n))
	&= n j \nu_{\mathfrak{p}}(\ig_1) + \nu_{\mathfrak{p}}(R(\ig_1^n,\ig_2^n))\\
	 &\geq n j \nu_{\mathfrak{p}}(\ig_1) - A_1,
\end{align*}
where $A_1$ is a constant,
and
\begin{equation*}
\nu_{\mathfrak{p}}(Q(\ig_2^n))
= \nu_{\mathfrak{p}}(C) + \sum_i \nu_{\mathfrak{p}}(\ig_2^n - \beta_i).
\end{equation*}
In particular, for $n \geq n_0$ with $n_0$ effectively computable, we have that the
second valuation must be proportional to $n$ whenever $P(\ig_1^n,\ig_2^n) = 0$.
For non-zero $\beta_i$, we have by \autoref{cor:p-adiclogform}
$\nu_{\mathfrak{p}}(\ig_2^n-\beta_i) \leq C_i \log n$
for a constant $C_i$ depending on $\lambda_2$, $\beta_i$, and $\mathfrak{p}$.
So if all the $\beta_i$ are non-zero, we have an upper bound on $n$
for which equality can hold.

We conclude that at least one $\beta_i = 0$. Still, to have valuation proportional to $n$, we must have $\nu_{\mathfrak{p}}(\ig_2) \neq 0$ to have arbitrarily large $n$ solving the system. We may repeat this argument for all $\mathfrak{p}$ for which $\nu_{\mathfrak{p}}(\ig_1) \neq 0$. Either we get an effective upper bound on $n$, or $\nu_{\mathfrak{p}}(\ig_1) \neq 0$ if and only if
$\nu_{\mathfrak{p}}(\ig_2) \neq 0$. We deduce that $\ig_1$ and $\ig_2$ sit
over the same prime ideals.
Now if $\nu_{\mathfrak{p}}(\ig_1) = i$ and $\nu_{\mathfrak{p}}(\ig_2) = j$,
consider the equations $\ig_1^n = \tg_1(s)$, $(\ig_2^i/\ig_1^j)^n = \tg_2^i/\tg_1^j(s)$ instead. Now $\nu_{\mathfrak{p}}(\ig_2^i/\ig_1^j) = 0$,
while $\nu_{\mathfrak{p}}(\ig_1) = i$, so that the above argument gives an effective bound on $n$.

This concludes the proof.
\end{proof}

\subsection{Remaining cases of \autoref{subsec:non-constantRank2}}\label{app:mainLemmaCompleted}

We first prove \autoref{lem:dependentModCNonconstant}:

\dependentModCNonconstant*

We need an auxiliary lemma for this.
\begin{lemma}
\label{lem:nBoundOrHsBound}
Consider the equation $\ig(s)^n = \tg(s)$, where neither $\ig$ nor $\tg$ is constant, and let $(n,s)$ be a solution to it. Then either $n \leq n_0$ or
$h(s) < C$ for some constants $n_0$, $C$, depending on $\ig$ and $\tg$.
\end{lemma}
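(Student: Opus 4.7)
The plan is to apply \autoref{lem:1} to both $\lambda$ and $\gamma$, then combine the resulting height bounds with the identity $n\,h(\lambda(s)) = h(\gamma(s))$ obtained by taking Weil heights of $\lambda(s)^n = \gamma(s)$.

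First, since $\lambda$ and $\gamma$ are non-constant algebraic functions, \autoref{lem:1} provides effective constants $c_1, c_2, c_3', c_4' > 0$ such that for every algebraic $s$ that is neither a zero nor a pole of $\lambda$ or $\gamma$,
\[c_1 h(s) - c_2 \leq h(\lambda(s)) \quad\text{and}\quad h(\gamma(s)) \leq c_3' h(s) + c_4'.\]
The excluded $s$ (zeros and poles of $\lambda$ and $\gamma$, plus the finitely many $s$ at which $\lambda(s) = 0$) form a finite set of algebraic numbers whose heights are uniformly bounded, and I would absorb them into the final constant $C$ at the very end.

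Next, for any non-exceptional solution $(n, s)$ with $\lambda(s) \neq 0$, multiplicativity of the height on powers gives $n\,h(\lambda(s)) = h(\lambda(s)^n) = h(\gamma(s))$. Combining with the inequalities above yields
\[n\,(c_1 h(s) - c_2) \leq c_3' h(s) + c_4',\]
which rearranges to $h(s)\,(n c_1 - c_3') \leq n c_2 + c_4'$.

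Finally, I would set $n_0 := \lceil 2 c_3'/c_1 \rceil$, so that for $n > n_0$ the coefficient $n c_1 - c_3' \geq n c_1 / 2$ is strictly positive. Dividing through gives
\[h(s) \leq \frac{2 c_2}{c_1} + \frac{2 c_4'}{n c_1} \leq \frac{2(c_2 + c_4')}{c_1} + 1 =: C,\]
where the strict inequality $h(s) < C$ is secured by the additive $+1$. Thus any solution $(n,s)$ satisfies $n \leq n_0$ or $h(s) < C$, after enlarging $C$ to dominate the heights of the finitely many exceptional $s$. The only obstacle is essentially cosmetic bookkeeping of the exceptional set; the real work is done by \autoref{lem:1}, whose quasi-linear comparison between $h(s)$ and $h(\lambda(s)), h(\gamma(s))$ lets the factor of $n$ on the left-hand side of the height identity eventually dominate whatever growth $h(s)$ contributes on the right.
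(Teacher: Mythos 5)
Your proof is correct and follows essentially the same route as the paper's: apply \autoref{lem:1} to $\lambda$ and $\gamma$, use $h(\lambda(s)^n)=n\,h(\lambda(s))=h(\gamma(s))$, and rearrange $n(c_1h(s)-c_2)\le c_3'h(s)+c_4'$ to bound $h(s)$ once $n$ exceeds an explicit threshold. Your choice $n_0=\lceil 2c_3'/c_1\rceil$ even avoids the paper's auxiliary case split on $h(s)$ being large, and your remark on absorbing the finitely many exceptional $s$ into $C$ is a harmless (and slightly more careful) addition.
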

\begin{proof}
Recall from \autoref{lem:1} that we have
\begin{equation*}
 a_1 h(s) - a_2  \leq h(\tg(s)) \leq a_3 h(s) + a_4 \quad\text{ and }\quad
 b_1 h(s) - b_2  \leq h(\ig(s)) \leq b_3 h(s) + b_4
\end{equation*}
for some effectively computable constants $a_i,b_i > 0$.
Let $n_0 = a_3/b_1$, and assume that $h(s) > \max\{a_2/a_1, b_2/b_1\}$ and $n > n_0$. Then
$h(\ig(s)) \geq a_1 h(s) - a_2 > 0$ and thus
\[ n(b_1 h(s) - b_2) \leq h(\ig(s)^n) = h(\tg(s)) \leq a_3 h(s) + a_4. \]
It follows that $h(s) \leq \frac{n b_2 + a_4}{n b_1 - a_3}$ which is bounded above by a constant (as a decreasing function with limit $b_2/b_1$). The claim follows.
\end{proof}

\begin{proof}[Proof of \autoref{lem:dependentModCNonconstant}]
Assume that $\ig_1^{a_1}\cdots \ig_t^{a_t} = c$ identically for some
$c \in \overline{\Q}$. Then $c$ is not a root of unity, as otherwise
$\{\ig_1,\ldots,\ig_t\}$ would not be multiplicatively independent. We obtain the equation\hfill%
\begin{equation}\label{eq:cnstton}
c^n = \tg_1^{a_1} \cdots \tg_t^{a_t}(s).
\end{equation}
If the right-hand-side is also a constant, then there is only one $n$ for
which the equation can hold ($c^n = c^m =d$ implies $c^{n-m}=1$), and this $n$ can be effectively computed as an instance of the one-dimensional Kannan--Lipton Orbit Problem.

If it is not constant, then system \eqref{eq:mainlemmaequations} contains
the equation (after relabelling) $\ig_1(s)^n = \tg_1(s)$ with $\tg_1$ non-constant. Since at least one of the $\ig_j$ is non-constant, we may assume
that both $\ig_1$ and $\tg_1$ are non-constant by considering
$(\ig_1\ig_2(s))^n = \tg_1\tg_2(s)$, where $\ig_2$ is non-constant, if necessary.
For any solution $(n,s)$, we have by \autoref{lem:nBoundOrHsBound} either
$n \leq n_{0}$ or $h(s) < C_i$ for some constant $n_{0} \in \N$, $C_i > 0$. Assuming that
$n > n_{0}$ holds we have the latter bound.
Now there exists a constant $C$ such that $h(\tg_i(s)) < C$ regardless of whether $\tg_i$ is constant or not, applying \autoref{lem:1}.
Consequently, taking heights on both sides of \eqref{eq:cnstton},
we see that $n h(c) = \sum_{i=1}^t |a_i| h(\tg_i(s)) < t\max_{i}\{|a_i|\}C$.
It is evident that $n$ is effectively bounded
above, and the claim follows.
\end{proof}

The remaining cases left from \autoref{subsec:non-constantRank2} to consider are when $\ig_1$, $\ig_2$, $\tg_1$, $\tg_2$ are
multiplicatively dependent, while $\ig_1$ and $\ig_2$ are multiplicatively
independent modulo constants. The proof goes along the proof of \autoref{lem:multiplicativelyindependetTargetsAndEigenvalues}.

\lemmultdepindepmodconst*
\begin{proof}
 Now any multiplicative relation
must involve some $\tg_i$, and without loss of generality
$\tg_2^{a} = \ig_1^{a_1}\ig_2^{a_2}\tg_1^{a_3}$ with $a \neq 0$.
Let us set $c = a_3$ if $a_3 \neq 0$ and $c = 1$ otherwise.
We then have the equations
\begin{align*}
\ig_1(s)^{n c} = \tg_1(s)^{c} \quad\text{ and }\quad\ig_2(s)^{n a} = \tg_2(s)^a = \ig_1(s)^{a_1}\ig_2(s)^{a_2}\tg_1(s)^{a_3}.
\end{align*}
(I.e., if $a_3 = 0$ we keep $\ig_1(s)^n = \tg_1(s)$).

Consider the family of pairs of multiplicative relations $\vec{a}_n := (n c, 0 , -c)$ and
$\vec{b}_n = (-a_1, n a - a_2,-a3)$. Clearly, if the $\vec{a}_n$ and $\vec{b}_n$ are collinear,
then $n = a/a_2$. So, save for this exceptional $n$,
the the multiplicative relations $\vec{a}_n$ and
$\vec{b_n}$ are independent for any $n\geq 1$. (For the claim, we note we can take $n_0 \geq a_2/a$.)

Assume first that $\ig_1$, $\ig_2$, $\tg_1$ are multiplicatively
independent. Consider the curve $\mathcal{C}$ defined by these functions (similar to
the construction in the proof of \autoref{lem:multiplicativelyindependetTargetsAndEigenvalues}), and let $\mathcal{C}'$ be an absolutely
irreducible component of it. If the functions are multiplicatively
independent modulo constants, we conclude, as in the first part of the proof of \autoref{lem:multiplicativelyindependetTargetsAndEigenvalues}, utilising \autoref{thm:BMZfinManyPoints} for all $n \neq a_2/a$.

If the functions are multiplicatively dependent modulo constants, we may
apply \autoref{thm:BMZ2finManyPoints} as in the second part of the the proof of \autoref{lem:multiplicativelyindependetTargetsAndEigenvalues}.

We are left with the case that $\ig_1$, $\ig_2$ and $\tg_1$
are multiplicatively dependent and we have
$\tg_1^b = \ig_1^{b_1} \ig_2^{b_2}$ with $b \neq 0$.
We again get the equations
\begin{align*}
\ig_1(s)^{n c b} = \tg_1(s)^{c b} = \ig_1(s)^{b_1 c} \ig_2(s)^{b_2 c}
\quad\text{ and }\quad
\ig_2(s)^{n a b} &= \ig_1(s)^{b a_1}\ig_2(s)^{b a_2}\tg_1(s)^{b a_3} \\
&= \ig_1(s)^{b a_1 + a_3 b_1}\ig_2(s)^{a_2 b + b_2 a_3}.
\end{align*}

Recall now that $\lambda_1$ and $\lambda_2$ are multiplicatively independent. Putting all on one side, we get the equations
\begin{align*}
1 = \ig_1(s)^{n c b - b_1c} \lambda_2(s)^{-b_2 c}\quad\text{ and } \quad
1 =  \ig_1(s)^{-b a_1 - a_3 b_1} \ig_2(s)^{n a b - a_2 b}.
\end{align*}
Notice that now neither $c b$ nor $a b$ equals $0$ according to our choices. Let now $\vec{a}_n = (n c b - b_1 c,-b_2 c)$ and $\vec{b}_n = (-b a_1 - a_3 b_1, n a b - a_2 b)$. The matrix with rows $\vec{a}_n$ and $\vec{b_n}$ has determinant quadratic in $n$. Hence there are at most two exceptional values of $n$ for which the vectors are collinear. Otherwise $\vec{a}_n$ and $\vec{b}_n$ are $\Z$-linearly independent. It is evident that, save for the at most two exceptional
values of $n$, the multiplicative relations $\vec{a}_n$ and $\vec{b}_n$ are $\Z$-linearly independent. Hence for $n$ not an exceptional value, there are finitely many points $s$
for which the equations can be satisfied. (Again, for the claim, we may take $n_0$ larger than both of the two exceptional values of $n$.) On the other hand, we may solve the problem for the exceptional values of $n$ using \autoref{lemma:fixedn}.

Consider again the curve defined by $\ig_1$ and $\ig_2$ similar to the above, and any of its absolutely irreducible components.
As $\ig_1$ and $\ig_2$ are multiplicatively independent modulo constants,
we may apply \autoref{thm:BMZfinManyPoints} to conclude as above.
This concludes the proof.
\end{proof} %

\section{Additional Material for Section \ref{sec:rank1case}}
\label{appen:rank1}
\subsection{W.l.o.g. there is a single equation}
\label{sec:rank1relations}

\wlogsingle*
\begin{proof}

Recall that when $\rank\{\ig_1,\ldots,\ig_t\} = 1$, we may replace the system of equations with a system consisting of one equation
$\ig(s)^n = \tg(s)$. The process might involve creating new solutions that
do not solve the original system. We take care of this problem by showing how to recover solutions to the main system from solutions to the single
equation system.

Assume first that there exists a non constant eigenvalue
$\ig$ attaining non-real values. Then, by assumption, also its complex
conjugate $\overline{\ig}$ is an eigenvalue. As
$\rank\{\ig,\overline{\ig}\} = 1$, we have $\ig^a = \overline{\ig}^b$ with
$a$, $b$ non-zero since neither is assumed to be a root of unity.
If $a=1,b=-1$ (or visa-versa) then $\lambda =(\overline{\lambda})^{-1}$, hence $|\lambda| = \frac{1}{|\lambda|}$, thus $|\lambda| = 1$. 

If $a = b$ then $\lambda^{a} = \overline{\lambda}^{a} = \overline{\lambda^{a}}$, thus $\lambda^a$ is real (and so $\lambda^{2a}$ is positive and real). This case was eliminated already by \cref{lemma:nonzeroonly1}.

In the remaining $a\ne b$ and $|a| > 1$: we get $\ig^{b+a} = |\ig|^{2b}$, and taking absolute values both
sides, we get $|\ig|^{b+a} = |\ig|^{2b}$, which occurs only when $|\ig| = 1$ identically.
Therefore the values of $\ig$ lie on the unit circle.

Assume that the system contains also a real-valued function $\ig_1$.
We similarly have $\ig_1^{c}  = \ig^{d}$ with $c$ and $d$ not zero.
Taking again absolute values on both sides, we have that $|\ig_1|^c = 1$.
It follows that $\ig_1 = \pm1$, and we therefore have $\ig_1$ is a constant root of unity. But, we may remove such an eigenvalue from the analysis
by \autoref{lemma:cleanuplemma}. We may thus assume that either all
eigenvalues are real-valued, or are complex-valued with values on the unit circle.

Assume first that all the eigenvalues of the system are real-valued (and not constant $\pm1$). We show that we may assume there exists a function
$\mu$, not necessarily any one of the eigenvalues, such that
$\ig_i = \mu^{b_i}$ for each $i$. If there is only one such eigenvalue,
there is nothing to prove, so assume that there are several. Partition the
domain in intervals such that in each interval, the $\ig_i$ and $\tg_i$ have constant sign. We first show that we
may assume they are both positive. Indeed, if in any interval we have
$\ig_i$ positive and $\tg_i$ negative, there can be no solutions.
If $\ig_i$ is negative and $\tg_i$ is positive, then there can only
be solutions with $n$ even. Therefore, we may replace the equations
by
$\ig_i^2(s)^{n_1} = \tg_i(s)$, with $n_1 \in \N$ without creating spurious solutions. Here both $\ig_1^2(s)$ and $\tg_i$ are positive.
Similarly, if both $\ig_i$ and $\tg_i$
are negative, there can only be a solution for odd $n$. We may therefore
replace the equations with
$\ig_i^{2}(s)^{n_1} = \tg_i/\ig_i(s)$, where $n_1 \in \N$. No new solutions are created in this process, while $\ig_1^2$ and $\tg/\ig$ are both positive.

We may from now on consider one of the finitely many intervals in the above partition. For each pair $\ig_1,\ig_2$, we have $\ig_1^{a_i} = \ig_i^{b_i}$ for some non-zero $a_i$ and $b_i$. Recall that we also have $\tg_1^{a_i} = \tg_i^{b_i}$ by assumption (otherwise $\tg_1^{a_i}(s) = \tg_2^{b_i}(s)$
holds for at most finitely many $s$, deeming the problem decidable).
Take $\mu = \ig_1^{1/\ell}$ and $\eta = \tg_1^{1/\ell}$ where
$\ell = \lcm_i(b_i)$. This is well-defined as the $\ig_i$ and $\tg_i$ are positive. Then for each $i$ we have
$\ig_i = \ig_1^{a_i/b_i} = \mu^{\ell_i}$ and similarly
$\tg_i = \eta^{\ell_i}$, for some integer $\ell_i$.
Now any solution of $\mu^n(s) = \eta(s)$ is a solution to the whole
system, and it thus suffices to search for solutions for this single equation.

We then turn our attention to the case of eigenvalues attaining non-real
values. As pointed out above, the values of the eigenvalues lie on the
unit circle. Assume that $\ig_1$ is such. Recall that for each $\ig_i$ we
have non-zero $a_i$, $b_i \in \Z$ such that $\ig_1^{a_i} = \ig_i^{b_i}$ and $\tg_1^{a_i} = \tg_i^{b_i}$.
Partition the domain into many finitely intervals according to the points
where the non-constant $\ig_1^{a_i}$, $\ig_i^{b_i}$, $\tg_1^{a_i}$, and
$\tg_i^{b_i}$ attain the value $-1$. (If some $\tg_i$ is constant $-1$ we do not take this into consideration when defining the intervals. Also, by assumption none of the $\ig_i$ are constant $-1$ as this is a root of unity). Let $\Log$ be the principal branch of the complex logarithm function, and for $a \in \N$, $a \geq 1$, define $z^{1/a}:= \exp(1/a \Log z)$. Notice that the function is not continuous for $z \in \C$, but in each of the intervals constructed above, the functions $\ig_i^{1/a}$ are continuous and single-valued. We focus on one of the intervals from now on.
We show that there exist algebraic functions $\mu,\eta$, integers $\ell_i$,
and $b_i$th roots of unity $\omega_i$, $\omega_i'$
such that $\ig_i = \omega_i \mu^{\ell_i}$ and $\tg_i = \omega_i' \eta^{\ell_i}$ for each $i$.
Let $\ell = \lcm_i(b_i)$ and set $\mu = \ig_1^{1/\ell}$ and $\eta = \tg_1^{1/\ell}$. Then $\ig_1 = \mu^{\ell}$, $\tg_1 = \eta^{\ell}$,
and $\mu^{\ell a_i} = \ig_1^{a_i} = \ig_i^{b_i}$. Similarly $\eta^{\ell a_i} = \tg_i^{\ell_i}$. It follows that
$\ig_1 = \omega_i \mu^{\ell_i}$ for some $\omega_i$ a $b_i$th root of unity, and $\ell_i = a_i \ell/b_i \in \Z$. Indeed, for any $s$ we have
$\ig_i(s) = \omega_s \mu^{\ell_i}(s)$ for some $b_i$th root of unity
$\omega_s$. By continuity, $\omega_s$ is also continuous, and hence is constant. Similarly $\tg_i = \omega_i' \eta^{\ell_i}$, as desired.

The equations are now equivalent to
\[
(\omega_i \mu^{\ell_i}(s))^n = \omega_i'\eta^{\ell_i}(s)\qquad i = 1,\ldots,t.
\]
Considering the subsequences $n = r \ell + m$, $r \in \N$, for $m = 0,\ldots, \ell-1$,
we may consider the equations
\[
\mu^{\ell_i}(s)^n = \omega_i''\eta^{\ell_i}(s), \qquad i = 1,\ldots,t,
\]
where $\omega_i'/\omega_i^m$ has been combined into $\omega_i''$, yet another $b_i$th root of unity, with $\omega_1'' = 1$.

The solutions to $\ig_1(s)^n = \tg_1(s)$ are in one-to-one correspondence
to the union of the solutions to $\mu(s)^n = \omega\eta(s)$ where $\omega$
ranges over the $\ell$th roots of unity. Assuming $(n,s)$ is a solution
to $\mu(s)^n = \omega\eta(s)$, we get
$\mu^{\ell_i}(s)^n = \omega^{\ell_i}\eta^{\ell_i}(s)$ for each $i$. We thus deduce that the system of equations has
a solution if and only if $\mu(s)^n = \omega\eta(s)$ for some $\ell$th root of unity $\omega$ such that $\omega^{\ell_i} = \omega_i''$ for each $i=2,\ldots,t$. It is plain to check whether the $\omega_i''$ satisfy such a relation, so it suffices to characterise the solutions to
$\mu(s)^n = \omega\eta(s)$, $\omega$ any one of the $\ell$th roots of unity.
\end{proof}

\subsection{Real case}

\begin{figure*}[ht]
\centering
\includegraphics[width=0.7\linewidth]{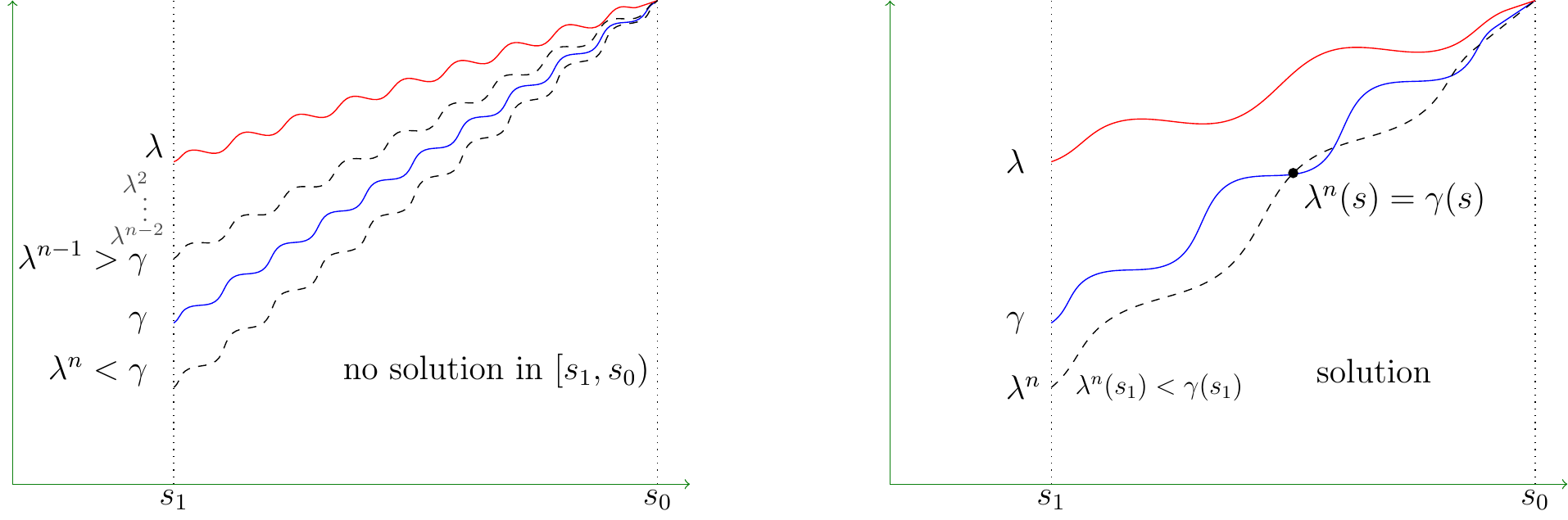}
\caption{Cases for $\lambda(s) \to 1$ as $s\to s_0$.}
\label{fig:casesrank1}
\end{figure*}

\rankonereallemma*
\begin{proof}
For real-valued functions $f$, $g$, if $f(\param) < g(\param)$ for all $\param$ in a set $E$, we use the notation $f < g$ (over $E$).

Let us consider the partition of $\mathbb{R}\setminus \exceptionalpoints{}$ into interval subsets $S_1,\dots,S_i$, such that for each subset either  $0 \le |\lambda| <1$,  or $|\lambda| > 1$, with the finite set of points  $\{\point : \lambda(\point) = 1\}$ excluded and handled separately (recall, by \autoref{lemma:nonzeroonly1} and \autoref{lemma:cleanuplemma} $\lambda$ is not constant 0 or 1). We will focus on the subsets where $|\lambda| \le 1$. Given such a subset $S_i$, we only need to consider each interval $D \subseteq S_i$ where $\{s\mid 0 \le|\gamma(s)| \le 1\}$. The remaining case where $|\lambda| > 1$ reduces to our case by considering $\frac{1}{\lambda^n}(\param) = \frac{1}{\gamma}(\param)$. Note that this partition is finite as the function $|\lambda(\param)| = 1$  at only finitely many points (similarly for $|\gamma(\param)| = 1$), and these points can be checked explicitly.

First let us consider $\lambda$ constant, and we may assume  $0 < \lambda < 1$. Then compute $a = \inf_\param \tg_i(\param)$ and $b = \sup_\param \tg_i(\param)$ and decide whether there exists $n$ such that $a \le \ig_i^n \le b$. Henceforth, $\lambda$ is not constant.

Whilst we assume $|\lambda(\param)| < 1$, still $|\lambda(\param)|$ could be arbitrarily close to $1$. We first consider the subset of $D$ where this is not the case. Let $\delta > 0$ be a small rational number, and consider
the set $\mathcal{S}_{\lambda}(\delta) \subseteq D$ comprising those $s$ such
that $|\lambda(\param)| < 1-\delta$. Then, for each $s\in \mathcal{S}_{\lambda}(\delta)$ we
have $|\lambda^n(\param)| < (1-\delta)^n$ for all $n \geq 0$. In particular, $\lambda^n(s)$
tends to $0$ exponentially.

Similary bounding $|\gamma|$ away from $0$, let $\mathcal{S}_{\gamma}'(\delta')$, for $\delta'>0$ a small rational number,
comprise those points $s \in \mathcal{S}_{\lambda}(\delta)$ for which  $|\gamma(s)| > \delta'$.

Then, for $n$ larger than $\log(\delta')/\log(1-\delta)$, we have $|\lambda^n| < |\gamma|$, leading to the lemma:

\begin{lemma}
Let $\delta,\delta' > 0$ be fixed small rational numbers. Then there exists
$n_{\delta,\delta'} \in \mathbb{N}$ such $\lambda^n(s) = \gamma(s)$ does
not have a solution with $n \geq n_{\delta,\delta'}$ and
$s \in \mathcal{S}_{\lambda,\gamma}(\delta,\delta') = \mathcal{S}_{\gamma}'(\delta') \cap \mathcal{S}_{\lambda}(\delta) $.
\end{lemma}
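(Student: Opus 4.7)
The plan is to observe that the statement is essentially a calculation already flagged in the paragraph immediately preceding the lemma: on the set $\mathcal{S}_{\lambda,\gamma}(\delta,\delta')$ the quantity $|\lambda^n(s)|$ decays uniformly in $s$ at geometric rate at most $(1-\delta)^n$, while $|\gamma(s)|$ stays uniformly bounded below by $\delta'$. So for $n$ large enough, the two quantities cannot meet.

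More precisely, I would first fix an arbitrary $s \in \mathcal{S}_{\lambda,\gamma}(\delta,\delta')$ and use the defining inequalities of $\mathcal{S}_{\lambda}(\delta)$ and $\mathcal{S}_{\gamma}'(\delta')$ to write
\begin{equation*}
|\lambda^n(s)| = |\lambda(s)|^n < (1-\delta)^n \qquad \text{and} \qquad |\gamma(s)| > \delta'.
\end{equation*}
Next I would choose $n_{\delta,\delta'}$ so that $(1-\delta)^{n_{\delta,\delta'}} \le \delta'$. Since $\delta,\delta' \in (0,1) \cap \mathbb{Q}$, both $\log(1-\delta)$ and $\log(\delta')$ are negative, so the explicit choice $n_{\delta,\delta'} = \lceil \log(\delta')/\log(1-\delta) \rceil$ works and is effectively computable from $\delta,\delta'$.

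For any $n \ge n_{\delta,\delta'}$ and any $s$ in $\mathcal{S}_{\lambda,\gamma}(\delta,\delta')$ we then have
\begin{equation*}
|\lambda^n(s)| < (1-\delta)^n \le (1-\delta)^{n_{\delta,\delta'}} \le \delta' < |\gamma(s)|,
\end{equation*}
so the moduli of the two sides of $\lambda^n(s)=\gamma(s)$ already differ and the equation cannot hold on this set.

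There is no genuine obstacle here; the only care needed is checking that the bounding quantities $(1-\delta)$ and $\delta'$ are honest uniform bounds (independent of $s$), which is immediate from the way $\mathcal{S}_{\lambda}(\delta)$ and $\mathcal{S}_{\gamma}'(\delta')$ are defined, and that the choice of $n_{\delta,\delta'}$ is well-defined as a positive integer, which is the content of the sign discussion of the two logarithms above. The lemma is therefore just the formal record of the elementary exponential-decay argument anticipated in the preceding paragraph.
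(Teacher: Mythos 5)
Your proof is correct and is exactly the argument the paper gives: the displayed inequalities preceding the lemma ($|\lambda^n(s)|<(1-\delta)^n$ on $\mathcal{S}_{\lambda}(\delta)$ and $|\gamma(s)|>\delta'$ on $\mathcal{S}_{\gamma}'(\delta')$) combined with the threshold $n>\log(\delta')/\log(1-\delta)$ are precisely what the paper uses to justify the lemma. Your only addition is the explicit sign check on the logarithms, which is a harmless and correct elaboration.
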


Hence, given $\delta,\delta'$ and having computed $n_{\delta,\delta'}$, solutions for each $n\le n_{\delta,\delta'}$ can be found by \autoref{lemma:fixedn}.

Recall, without loss of generality we assume $\lambda,\gamma$ are positive, if necessary by taking even or odd sub-sequences.  Hence the remaining cases for $s \in D \setminus \mathcal{S}_{\gamma}'(\delta')$, that is when $\lambda(s)$ is approaching 1, or $\gamma(s)$ is approaching 0.

We will make repeated use of the following %
immediate consequence of the intermediate value theorem
\begin{lemma}\label{lemma:ivt}
Given two continuous functions $f,g$ on the interval $[a,b]$ with $f(a) < g(a)$ and $f(b) >g(b)$, there exists $\point$ such that $f(\point) = g(\point)$.
\end{lemma}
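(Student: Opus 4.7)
The plan is to reduce the statement to the standard intermediate value theorem (IVT) by considering the difference of the two functions. Specifically, I would define $h \colon [a,b] \to \mathbb{R}$ by $h(\point) = f(\point) - g(\point)$. Since $f$ and $g$ are continuous on $[a,b]$, their difference $h$ is also continuous on $[a,b]$ (the difference of continuous real-valued functions is continuous).

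Next, I would translate the hypotheses into sign information for $h$ at the endpoints. From $f(a) < g(a)$ we obtain $h(a) = f(a) - g(a) < 0$, and from $f(b) > g(b)$ we obtain $h(b) = f(b) - g(b) > 0$. Thus $h$ takes a strictly negative value at $a$ and a strictly positive value at $b$.

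The remaining step is a direct invocation of the intermediate value theorem applied to $h$ on $[a,b]$ with the intermediate value $0$, which lies strictly between $h(a)$ and $h(b)$. The IVT then yields some $\point \in (a,b)$ such that $h(\point) = 0$, i.e.\ $f(\point) = g(\point)$, as desired.

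There is no real obstacle here: the entire content of the lemma \emph{is} the intermediate value theorem applied to $f - g$, which is why the excerpt describes it as an ``immediate consequence.'' The only tiny point worth noting is that the conclusion gives $\point \in (a,b)$ (an interior point), since the strict inequalities at $a$ and $b$ rule out $\point = a$ and $\point = b$; if the statement were meant to allow $\point \in [a,b]$, this is of course still fine.
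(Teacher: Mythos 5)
Your proof is correct: applying the intermediate value theorem to the continuous difference $h = f - g$, which is negative at $a$ and positive at $b$, is exactly the argument the paper has in mind, since it states the lemma without proof as an ``immediate consequence of the intermediate value theorem.'' Your remark that the point necessarily lies in the open interval $(a,b)$ is a harmless (and accurate) refinement.
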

and its  immediate corollary:
\begin{corollary}\label{corollary:ivt}
Given two continuous functions $f,g$ on the interval $(a,b)$. One of the following occurs
\begin{itemize}
	\item $f(x) > g(x)$ for all $x\in(a,b)$, or
	\item $f(x) < g(x)$ for all $x\in(a,b)$, or
	\item there exists $\point\in(a,b)$ such that $f(\point) = g(\point)$
\end{itemize}
\end{corollary}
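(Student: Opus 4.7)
The plan is to reduce the stated trichotomy to a single application of \autoref{lemma:ivt}. First I would introduce the auxiliary continuous function $h\colon (a,b) \to \mathbb{R}$ defined by $h(x) = f(x) - g(x)$; continuity of $h$ is immediate from continuity of $f$ and $g$.

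I would then split on the sign behaviour of $h$ on $(a,b)$. The three conclusions of the corollary correspond precisely to $h > 0$ throughout $(a,b)$, $h < 0$ throughout $(a,b)$, or the existence of some $\point \in (a,b)$ with $h(\point) = 0$. If neither of the first two alternatives holds, then there must exist witnesses $x_1, x_2 \in (a,b)$ with $h(x_1) \le 0 \le h(x_2)$. Should either inequality be an equality, we already have a point where $f$ and $g$ agree, placing us in the third case immediately. Otherwise $h(x_1) < 0 < h(x_2)$, so (relabelling if necessary so that the hypotheses $f < g$ at the left endpoint and $f > g$ at the right endpoint are met) applying \autoref{lemma:ivt} to the restrictions of $f$ and $g$ to the closed sub-interval with endpoints $x_1$ and $x_2$ yields a point $\point$ in that sub-interval with $f(\point) = g(\point)$. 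Since the sub-interval lies inside $(a,b)$, we obtain $\point \in (a,b)$, giving the third conclusion.

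No substantial obstacle is anticipated: the argument is just the standard rewriting of the intermediate value theorem as a sign dichotomy with a witness, together with continuity of the difference $h = f - g$. The only care required is bookkeeping on the order of $x_1$ and $x_2$ so that the hypotheses of \autoref{lemma:ivt} (which fix the direction of the strict inequality at the named endpoints) are aligned, which is handled by a simple relabelling.
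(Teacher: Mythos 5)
Your proposal is correct and follows essentially the same route as the paper: the paper's one-line proof likewise takes witnesses $x,y$ with $f(x)>g(x)$ and $f(y)<g(y)$ and applies \autoref{lemma:ivt} on the sub-interval $[x,y]\subseteq(a,b)$. Your additional bookkeeping (the difference $h=f-g$, handling the non-strict case, and the relabelling) just makes explicit what the paper leaves implicit.
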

\begin{proof}
Suppose there exists $x,y\in(a,b)$ such that $f(x) > g(x)$ and $f(y) < g(y)$, then on the interval $[x,y]\subseteq (a,b)$ there exists $\point$ such that $f(\point) = g(\point)$.
\end{proof}

We assume that $\delta,\delta'$ are chosen giving  interval $D \setminus \mathcal{S}_{\lambda,\gamma}(\delta,\delta')$. Let $E$ be one such interval with problematic endpoint $s_0$, that is $E = (s_0,s_1]$ or $E = [s_1,s_0)$.  We assume we choose $\delta,\delta'$ small enough so that $\lambda(\param)$ and $\gamma(\param)$ are monotonic in $E$. This is because the derivative of an algebraic function is an algebraic function\footnote{Differentiating the polynomial defining $\ig$ implicitly with respect to $\param$, we get a polynomial $P(s,\ig(\param),\ig'(\param))$. Eliminating with respect to $\ig(\param)$, we get a polynomial relation with $s$ and
$\ig'(\param)$}, and therefore has finitely many roots, thus the function changes direction finitely many times.  Furthermore, it is evident that such $\delta$, $\delta'$ are effectively computable.

Let us start with the case that $\lambda(\param) \to 1$ as $\param \to s_0$.

First, let us assume there exists $b_1,b_2$ such that $ 0 < b_1 < \gamma < b_2 < 1$ over $E$, then since $\lambda(s_1) <1$ we have $\lambda(s_1)^n < b_1$  for some $n$ (and $\lambda(\param)^n \to 1 > b_2$ as $\param \to s_0$). Hence by \autoref{lemma:ivt}, there is a solution $\lambda^n(s) = \gamma(s)$ at some point $s \in E$. Clearly $n$ is computable, and we may compute a suitable $s$ for which equality holds.

Otherwise we have $\gamma(\param)$ is also approaching $1$ or $0$ as $\param \to s_0$. Let us start with $1$: It must be the case, by \autoref{corollary:ivt}, that either $\gamma < \lambda$ or $\lambda < \gamma$ in $E$, otherwise there is a point $\point$ such that $\lambda(\point) = \gamma(\point)$ and the answer is \textsc{yes} (in fact, at $n = 1$). If $\lambda < \gamma$ then the answer is \textsc{no}, as $\lambda^n < \lambda < \gamma$ over $E$. Hence we must consider $\lambda > \gamma$ and so $1 > \lambda(s_1) > \gamma(s_1)$.

Then we can compute $n$ such that $\lambda(s_1)^n < \gamma(s_1)$. After this occurs either there exists $\point$ such that $\lambda(\point)^n= \gamma(\point)$, or $\lambda^n < \gamma$ and so we only need to check every $m \le n$ (via, \autoref{lemma:fixedn}). These two cases are depicted in \autoref{fig:casesrank1}.

Now let us assume $\gamma(\param) \to 0$ as $\param\to s_0$. Similarly we assume monotonicity of $\lambda,\gamma$ as $\param \to s_0$. Again we have $\lambda > \gamma$ over $E$ (otherwise $\lambda(\point) = \gamma(\point)$ at some $\point$, answer \textsc{yes}, or $\lambda^n <\lambda < \gamma$, answer \textsc{no}). Again we search for $n$ such that $\lambda(s_1)^n < \gamma(s_1)$, at which point either there exists $\point$ such that $\lambda(\point)^n = \gamma(\point)$ or  $\lambda^n < \gamma$ over $E$ and hence $\lambda^m < \gamma$ for all $m \ge n$ (it remains to check each $1,\dots, n$ manually, via \autoref{lemma:fixedn}).
\end{proof}

\subsection{Non-real case}
We prove
\rankOneComplex*

\begin{proof}[Proof of \cref{lem:rankOneComplex}]
Since $\ig$ is of constant modulus $1$, we are only concerned with points where $\tg$ is of modulus $1$. If $\tg$ is not of constant modulus $1$, then there are only finitely
many $s$ for which $\tg$ intersects the unit circle and only these points need to be checked

Assume first that $\ig$ is a constant.
If $\tg$ is of constant modulus $1$, then the range of $\tg$ defines (possible several) open arcs on the unit circle. The orbit of $\ig$ is dense on the unit circle, as it is not assumed to be a root of unity. Therefore, there exist (infinitely many) integers $n$ such that
$\ig^n$ hits such an arc. Such an $n$ can be straightforwardly computed,
after which the suitable $\point$ can be computed. The single equation
therefore always has a solution.

Otherwise, we may assume that $\ig$ and $\tg$ define continuous arcs on the circle. Furthermore, we may assume that the arcs do not cross the line $(-\infty,0]$. (In case $\tg$ is constant, it defines a point.) Let us write
$\ig$ and $\tg$ in polar form: $\ig = \exp(\iu \theta)$, $\tg = \exp(\iu \psi))$, where now $\theta,\psi \colon D \to [-\pi,\pi)$ are continuous, and $\iu$ is the imaginary unit.
The derivative of an algebraic function is algebraic, here it is
$\iu\theta'(\param) \exp(\iu\theta(\param))$. We deduce that $\theta'(\param)$ is an algebraic function, and the zeros of it may be computed. We may define an interval in which $\theta$ and $\psi$ are monotone: they draw continuous arcs on the unit circle and are rotating in one direction with $s$ varying. Compute some approximations $\theta_0$, $\psi_0$ of the length of the arcs, and compute $n$ so large, so that $n\theta_0 > 4\pi + \psi_0$ (notice that the
$n\theta_0$ gives an approximation for the length of the arc defined by
$\ig^n$). So, while $s$ ranges over the interval, the arc of $\ig^n$ winds
around the unit circle at least twice. By the intermediate value theorem
there must be a point at which $\ig^n(\point) = \tg(\point)$. To see this, map the progress of the arc onto the real line. Let the endpoints of the interval be $s_0$ and $s_1$. Assume $\theta(s_0) < \psi(s_0) \leq \theta(s_0) + 2\pi$ (if not, add integer multiples of $2\pi$ to $\psi(s_0)$).
 Now
$n\theta(s_1) \geq \theta(s_0) + 4\pi \geq \psi(s_0) + 2\pi > \psi(s_1)$. Consequently, by the intermediate value theorem, there must be
a point where the values $n\theta$ and $\psi$ coincide, as they are continuous functions.
\end{proof}

\end{document}